\definecolor{cb-black}      {RGB}{  0,   0,   0}
\definecolor{cb-blue-green} {RGB}{  0,  073,  073}
\definecolor{cb-green-sea}  {RGB}{  0, 146, 146}
\definecolor{cb-rose}       {RGB}{255, 109, 182}
\definecolor{cb-salmon-pink}{RGB}{255, 182, 119}
\definecolor{cb-purple}     {RGB}{ 73,   0, 146}
\definecolor{cb-blue}       {RGB}{ 0, 109, 219}
\definecolor{cb-lilac}      {RGB}{182, 109, 255}
\definecolor{cb-blue-sky}   {RGB}{109, 182, 255}
\definecolor{cb-blue-light} {RGB}{182, 219, 255}
\definecolor{cb-burgundy}   {RGB}{146,   0,   0}
\definecolor{cb-brown}      {RGB}{146,  73,   0}
\definecolor{cb-clay}       {RGB}{219, 209,   0}
\definecolor{cb-green-lime} {RGB}{ 36, 255,  36}
\definecolor{cb-yellow}     {RGB}{255, 255, 109}
\newtheorem*{lem*}{Lemma}
\newtheorem*{thm*}{Theorem}
\newtheorem{thm}{Theorem}[section]
\newtheorem{dfn}[thm]{Definition}
\newcommand{\ket}[1]{|#1\rangle}
\newcommand{\eq}[1]{\hyperref[eq:#1]{Eq.~(\ref*{eq:#1})}}
\renewcommand{\sec}[1]{\hyperref[sec:#1]{Section~\ref*{sec:#1}}}
\newcommand{\secsm}[1]{\hyperref[sec:#1]{Sec.~\ref*{sec:#1}}}
\newcommand{\app}[1]{\hyperref[app:#1]{Appendix~\ref*{app:#1}}}
\newcommand{\theo}[1]{\hyperref[thm:#1]{Theorem~\ref*{thm:#1}}}
\newcommand{\algo}[1]{\hyperref[alg:#1]{Algorithm~\ref*{alg:#1}}}
\newcommand{\lemm}[1]{\hyperref[lem:#1]{Lemma~\ref*{lem:#1}}}
\newcommand{\defn}[1]{\hyperref[defn:#1]{Definition~\ref*{defn:#1}}}
\newcommand{\corr}[1]{\hyperref[cor:#1]{Corollary~\ref*{cor:#1}}}
\newcommand{\condition}[1]{\hyperref[cond:#1]{Condition~\ref*{cond:#1}}}
\newcommand{\fig}[1]{\hyperref[fig:#1]{Fig.~\ref*{fig:#1}}}
\newcommand{\tab}[1]{\hyperref[tab:#1]{Table~\ref*{tab:#1}}}
\newcommand{\tabsm}[1]{\hyperref[tab:#1]{Tab.~\ref*{tab:#1}}}
\newcommand{\propos}[1]{\hyperref[prop:#1]{Proposition~\ref*{prop:#1}}}
\newcommand{\propsm}[1]{\hyperref[prop:#1]{Prop.~\ref*{prop:#1}}}
\newcommand{\rema}[1]{\hyperref[rem:#1]{Remark~\ref*{rem:#1}}}
\newcommand{\nocontentsline}[3]{}
\newcommand{\tocless}[2]{\bgroup\let\addcontentsline=\nocontentsline#1{#2}\egroup}
\DeclareMathAlphabet\mathbfcal{OMS}{cmsy}{b}{n} %bold and mathcal 
\let\OldStatex\Statex
\renewcommand{\Statex}[1][3]{%
  \setlength\@tempdima{\algorithmicindent}%
  \OldStatex\hskip\dimexpr#1\@tempdima\relax}
\newcommand{\Blank}{\Statex[-1]}
\providecommand\theHALG@line{\thealgorithm.\arabic{ALG@line}}
\newcommand{\stab}{\mathcal{S}}
\newcommand{\faultset}{\mathcal{F}}
\newcommand{\cF}{F}
\newcommand{\weight}{\textrm{wt}\,}
\newcommand{\topochannel}{\mathcal{T}}
\newcommand{\patch}{\mathcal{R}}
\newcommand{\instructionset}{\mathbfcal{I}}
\newcommand{\checkset}{\Sigma}
\newcommand{\cleftsemicirc}{\put(3.5,2.5){\oval(4,4)[l]}\put(3.5,.5){\line(0,1){4}}\phantom{\circ}}
\newcommand{\newcirc}{\put(2.5,2.5){\oval(4,4)}\phantom{\circ}}
\preprint{APS/123-QED}
\begin{document}

\title{Fault tolerance of stabilizer channels}
\author{Michael E. Beverland}
\thanks{Present Address: IBM Quantum, IBM T.J. Watson Research Center, Yorktown Heights, NY 10598, USA}
\affiliation{Microsoft Quantum, Redmond, WA 98052, USA}
\author{Shilin Huang}
\thanks{Present Address: Department of Applied Physics, Yale University, CT 06511, USA}
\affiliation{Department of Electrical and Computer Engineering, Duke University, Durham, NC 27708, USA}
\author{Vadym Kliuchnikov}
\affiliation{Microsoft Quantum, Redmond, WA 98052, USA}

\begin{abstract}
Stabilizer channels are stabilizer circuits that implement logical operations while mapping from an input stabilizer code to an output stabilizer code.
They are widely used to implement fault tolerant error correction and logical operations in stabilizer codes such as surface codes and LDPC codes, and more broadly in subsystem, Floquet and space-time codes.
We introduce a rigorous and general formalism to analyze the fault tolerance properties of any stabilizer channel under a broad class of noise models.
This includes rigorous but easy-to-work-with definitions and algorithms for the fault distance and hook faults for stabilizer channels. 
The generalized notion of hook faults which we introduce, defined with respect to an arbitrary subset of a circuit's faults rather than a fixed phenomenological noise model, can be leveraged for fault-tolerant circuit design. 
Additionally, we establish necessary conditions such that channel composition preserves the fault distance.
We apply our framework to design and analyze fault tolerant stabilizer channels for surface codes, revealing novel aspects of fault tolerant circuits. 
\end{abstract}

\maketitle

%\tableofcontents

\section{Motivation and overview}

Large-scale quantum computers, built from inevitably imperfect hardware, necessitate fault tolerance to reliably execute extensive computations. 
To address this, a variety of approaches to prove that circuits are fault tolerant. 
These range from high-level strategies using simple phenomenological noise models \cite{dennis2002tqm} and asymptotic scaling arguments to prove the existence of finite thresholds \cite{brown2020}, to detailed analyses with explicit circuit noise models \cite{Wang2009} that facilitate accurate overhead estimates for different fault tolerance schemes \cite{Fowler2012, beverland2021codeswitching,das2020scalable,Chamberland2020b}. 
Numerical simulations that sample noise events also contribute, albeit without definitive correctness proofs \cite{geher2023b}.
Anticipating the proposal of improved fault-tolerant circuits for well-established codes like surface and color codes, as well as for promising new code families such as Floquet codes \cite{hastings2021dynamically} and LDPC codes \cite{kovalev2013,tillich2013QLDPC}, motivates the development of tools to rigorously analyze and design a wide range fault-tolerant circuits consistently.

In this work, we develop tools for \emph{stabilizer channels}, which implement logical operations on information stored in stabilizer codes~\cite{GottesmanThesis} by applying circuits built from \emph{stabilizer operations} (preparations and measurements in the Pauli bases, along with Clifford unitaries).
A stabilizer channel can have different input and output codes and can act on arbitrary input states (not just those produced by stabilizer operations).
Despite excluding some important circuits, notably those generating encoded magic states \cite{Knill2004,Bravyi2005}, the domain of stabilizer channels remains expansive, covering numerous logical operations, from stabilizer extraction and lattice surgery \cite{Horsman2012} to topological manipulations like braiding punctures \cite{Raussendorf2006,raussendorf2007fault,Raussendorf2007b,Bombin2008,Bombin2009,fowler2009high,Fowler2012} and twist defects \cite{Bombin2010,hastings2014reduced} implemented by code deformation of topological stabilizer codes. 
This broad applicability extends to the error correction and logical stabilizer operations of subsystem and Floquet codes \cite{poulin2005,hastings2021dynamically}, to circuit-based constructions~\cite{bacon2017,delfosse2023spacetime} known as space-time codes, and to LDPC codes~\cite{kovalev2013,tillich2013QLDPC}.
As such, tools to rigorously analyze and design stabilizer channels can be expected to have wide applicability. 

In what follows, we summarize the main contributions in this paper.
    
\textbf{Formalism.}---
After reviewing the basic properties of stabilizer codes and channels in \sec{basic-defs}, we provide a rigorous and general formalism to analyze the fault tolerance properties of any stabilizer channel under a broad class of noise models in \sec{fault-tolerance}.
Our formalism relies upon and connects together many previous works, complimenting existing formalisms that focus on topological codes~\cite{dennis2002tqm,Brown2017,Bombin2021}.
By not requiring any geometric locality, our formalism applies to a very wide range of fault-tolerant protocols including general quantum LDPC codes~\cite{gottesman2013,tillich2013QLDPC,breuckmann2021balanced,kovalev2013,panteleev2021,lin2023,wang2023}, Floquet codes~\cite{hastings2021dynamically,Davydova2023,aasen2022,higgott2023} and space-time codes~\cite{bacon2017,delfosse2023spacetime,gottesman2022opportunities,Gidney2021stim}.
Among other things, we provide rigorous definitions of the fault distance~\cite{Bombin2021} $d(\faultset)$ of a channel given a set of allowed faults $\faultset$ in terms of a general form circuit that captures the channel's logical action~\cite{kliuchnikov2023}.
This overcomes limitations of fault-distance definitions that only track logical Paulis introduced by faults, which fail to capture logical measurement outcome flips among other features.
We find this operational approach more convenient to work with than the complementary approach in Ref.~\cite{Bombin2021}, where the stabilizers of the Choi state are used to capture the logical channel action.

\textbf{Generalized hook faults.}---
In \sec{hook-faults} we introduce definitions to generalize and characterize hook faults~\cite{dennis2002tqm} (also called `hook errors') for general stabilizer channels. 
Hook faults can limit the fault tolerance of a circuit and are typically identified through their deviation from a simple noise model such as i.i.d Pauli noise on individual qubits.
We expand the concept by declaring that any of the faults $\faultset$ that can occur in the channel to be hook faults \emph{with respect to a fault subset} $\faultset_\text{sub} \subset \faultset$ if they are not in the subset.
Since all the configurations that can occur for $\faultset_\text{sub}$ can also occur for $\faultset$, the fault distance $d(\faultset) \leq d(\faultset_\text{sub})$.
If $d(\faultset) < d(\faultset_\text{sub})$, then some set of hook faults must be responsible for the gap. 
This allows for useful characterizations of problematic faults that limit the fault-distance of a stabilizer channel, which by varying $\faultset_\text{sub}$, can be used to iteratively improve circuit designs and make them fault tolerant. 

\textbf{Analytic analysis.}---
In \sec{code-def-channels}, we first observe that any stabilizer channel can be re-expressed as a sequence of code deformation rounds.
Leveraging the computation in Ref.~\cite{kliuchnikov2023} of the logical action of a single code deformation round allows us to analytically describe stabilizer channels in terms of the stabilizer groups of the input and output codes, and the group of measurements in each deformation round.

\textbf{Channel composition.}---
In \sec{combining-stabilizer-channels}, we provide explicit sufficient conditions for a set of stabilizer channels that ensure that larger channels formed by composing them together will inherit their fault tolerance properties. 
It is possible to extend our analysis of stabilizer channels to include compositions with fault-tolerant non-stabilizer states prepared by techniques that are not stabilizer channels (such as magic state distillation or code switching).
This extension, outlined in \sec{concl}, is quite straightforward since the framework puts no restrictions on the input states to stabilizer channels.

\textbf{Algorithms.}---
We provide an algorithm in \sec{calculating-fault-dist} to compute the fault distance of any stabilizer channel, which is similar to algorithms in Refs.~\cite{Gidney2021stim,Breuckmann2017}.
We also provide a new algorithm to identify problematic hook faults in a channel which we find to be very useful in the practical design of fault-tolerant stabilizer channels.
These algorithms apply when the channel check matrices are graph-like. 
We also discuss efficient algorithms for determining if a channel matrix is graph-like and various approaches that help bound the fault distance for channels with non-graph-like channel check matrices.

\textbf{Examples.}---
In \sec{examples}, we illustrate these ideas by walking through the design and fault-tolerance analysis of two example channels in surface codes.
The first example is the lattice surgery operation to implement a logical $XX$ measurement between a pair of code patches, which recovers the standard analysis in Refs.~\cite{ChamberlandCampbell2022}.
The second example is the logical Hadamard on a single patch, leaving it in an adjacent location.
While we developed our Hadamard implementation independently of the recent work~\cite{geher2023b}, we point out that our approach is quite similar to the first of the two examples therein, although our circuit-level implementation has a worse depth than theirs (asymptotically $3d$, compared with their $2d$ error correction cycles for distance $d$).

These examples, particularly the Hadamard, highlight a number of novel features that can arise in fault-tolerant circuit design (some of which were also highlighted in Ref.~\cite{geher2023b}).
In particular, we find that stabilizer extraction circuits exhibit:
\begin{enumerate}
    \item \emph{Spatial variation.}--- Our Hadamard channel involves some irregular surface code patches in intermediate steps which require non-uniform stabilizer extraction circuits (for example min-weight X type logical operators may be horizontal in one region of the lattice, and vertical in another).
    \item \emph{Temporal variation.}--- 
    Due to the existence of `diagonal' faults in the circuit-level syndrome graph, logical operators can be deformed in the time dimension without increasing their weight. 
    This results in the \emph{same} stabilizer code in our channel being measured using different circuits during different rounds to avoid problematic hook faults in that align with logical operators that have been deformed from other times.
\end{enumerate}

Lastly, we conclude in \sec{concl}, highlighting a number of future directions.

\section{Basic definitions}
\label{sec:basic-defs}

Here we set notation and review some important material that is relevant for this work.

\subsection{Stabilizer codes}

We find it convenient to use the notation
that for Pauli operators $A$ and $B$, the commutator $\llbracket A, B \rrbracket = 0$ if $A$ and $B$ commute and  $\llbracket A, B \rrbracket = 1$ otherwise.
We write the set of integers from $1$ to $k$ as $[k]$, and the set of integers from $j$ to $k$ as $[j,k]$.
We write $H^{\perp}$ to denote the set of Pauli operators that commute with all elements of a Pauli operator set $H$.

If $P \ket{\psi} = \ket{\psi}$, we say that the Pauli operator $P$ \textit{stabilizes} the state $\ket{\psi}$, and also that the state $\ket{\psi}$ is \textit{stabilized} by the Pauli operator $P$.
We say that a group of commuting operators $\stab$ stabilizes a state $\ket{\psi}$ if all elements of $\stab$ stabilize $\ket{\psi}$.
An Abelian Pauli group $\stab$ specifies a stabilizer code, where the codespace consists of those states which are stabilized by $\stab$.
Measuring a complete set of generators of the stabilizer group $\stab$ produces a set of outcomes that we call the \textit{syndrome}, which is trivial if the state is in the codespace.
We also define the syndrome $s(P)$ of a Pauli operator $P$ with respect to a complete set of generators $(S_1, S_2, \dots, S_m)$ of a stabilizer group $\stab$, which is the list $s(P) = (\llbracket S_1, P \rrbracket, \llbracket S_2, P \rrbracket, \dots, \llbracket S_m, P \rrbracket)$.

Given a Pauli group $G$ and a Pauli operator $P$, we often write $PG$, which denotes the set $PG = \{Pg | g \in G \}$.
Given a pair of Pauli groups $G$ and $H$, we often write $GH$, which denotes the set $GH = \{gH | g \in G \}$. 
% Given a subgroup $H \subset G$ of an abelian group $G$, we define a group $G // H$ to be the group generated by all generators of $G$ that are not in $H$.
% Then we have that $G = (G // H)H$.
% Note that $G//H$ is isomorphic to the quotient group $G/H$, i.e., $G//H \cong G/H$.

For the measurement of Hermitian Pauli operators, we use the convention that the possible outcomes are labeled by $0$ and $1$ (not $\pm 1$).
We often consider the addition of measurement outcomes or sets of measurement outcomes, which is always performed modulo two unless otherwise stated.

\subsection{Stabilizer circuits and general form}
\label{sec:stab-circuits}

In this work, we consider circuits built of elements of an explicit \textit{instruction set} $\instructionset$.
We require the operations in $\instructionset$ are \emph{stabilizer operations}:
\begin{itemize}
    \item preparations of qubits in the computational basis,
    \item non-destructive measurements of Pauli operators,
    \item destructive measurements of single-qubit Pauli operators,
    \item Clifford unitaries,    
    \item the generation of classical random bits,
    \item \emph{conditional Pauli operations controlled on the parities of prior measurement outcomes and random bits.}
\end{itemize}
The instruction set $\instructionset$ could in principle consist of of all possible stabilizer operations, but it may be useful to further restrict to those which match the capabilities of a specific hardware platform. 
For example $\instructionset$ may contain all single-qubit stabilizer operations and CNOT gates between specified pairs of qubits.

The operations in a stabilizer circuit $\mathcal{C}$ occur at specified time steps, with compatible sets of operations being allowed in the same time step.
(Note that precisely which sets of operations are considered compatible can depend on the setting, such as the capabilities of a hardware platform that the circuit is to be run on).
We find it convenient to assign a specific sequence to all the operations in the circuit, which respects the ordering of time steps such that operations in the same time step of the circuit form a consecutive subsequence.
We call the sequence of measurement outcomes and random bits produced by the circuit the \emph{circuit outcome vector}.
In Ref.~\cite{kliuchnikov2023}, it is shown that any stabilizer circuit $\mathcal{C}$ is equivalent to a simple general form circuit $\mathcal{C}_\text{gen}$ as shown in \fig{general-form-circuit}. 
By \emph{equivalent} we mean that when both circuits are fed the same input state $\ket{\psi}$, then the state $\ket{\psi'(O)}$ output by $\mathcal{C}$ when its outcome vector is $O$ is the same as the state $\ket{\psi'_\text{gen}(V)}$ output by $\mathcal{C}_\text{gen}$ when its outcome vector is $V$, for some known (affine) relation between $O$ and $V$. 
As we will see, this general form is particularly helpful in characterizing and analyzing the logical action of stabilizer channels.

We include the conditional Paulis in italics in the list of stabilizer operations because in this work we treat them differently from the other stabilizer operations. 
In particular, we take the stabilizer circuit $\mathcal{C}$ to be specified by the other stabilizer operations it contains, and we include or remove conditional Paulis at our convenience.
This is because any conditional Paulis can be done off-line (by a Pauli-frame update) and/or postponed to the end of the stabilizer circuit without changing any of the circuit's other instructions. 
Moreover, given a general form circuit for a stabilizer circuit, changing the circuit's conditional Paulis only results in a change of the conditional Paulis of the general form circuit.

\begin{figure}[ht]
    \includegraphics[width=0.7\linewidth]{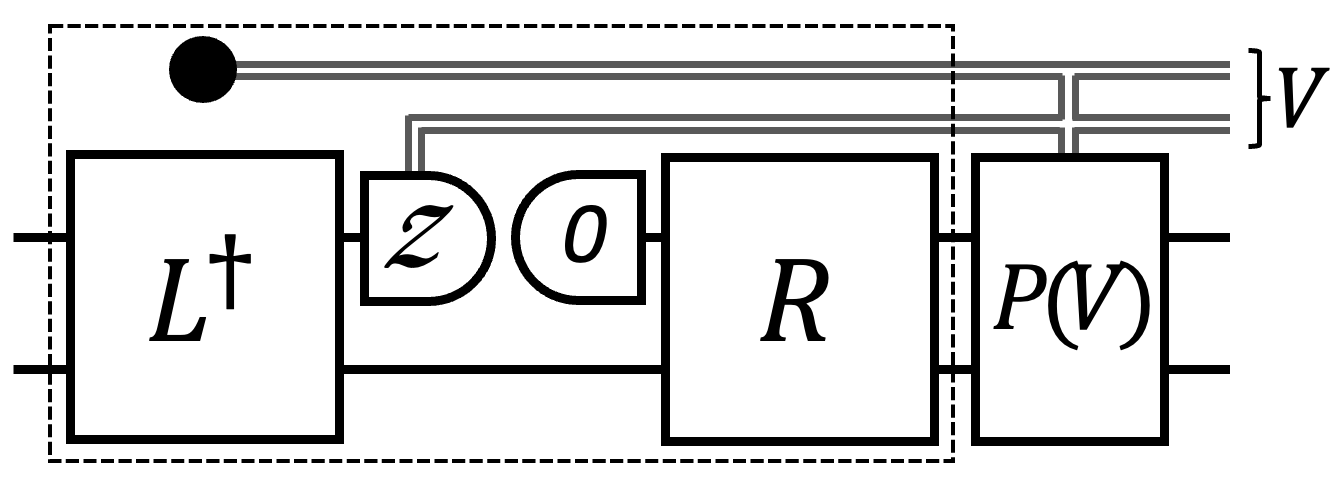}
    \caption{
    Any stabilizer circuit $\mathcal{C}$ is equivalent to a general form circuit $\mathcal{C}_\text{gen}$ as shown here, which consists of five steps: 
    i) initial Clifford $L$,
    ii) destructive measurements,
    iii) ancilla preparation,
    iv) final Clifford $R$,
    v) conditional Pauli.
    The conditional Pauli is controlled by the general form's \emph{circuit outcome vector} $V$, consisting of random bits (represented by a black disk) and measurement outcomes.
    Changing the conditional Paulis in $\mathcal{C}$ only changes the conditional Pauli in $\mathcal{C}_\text{gen}$ (not the parts in the dashed box).
    }
    \label{fig:general-form-circuit}
\end{figure}

\subsection{Stabilizer channels}
\label{sec:stab-circuits-logical}

Following the language of Ref.~\cite{Bombin2021}, we say that a stabilizer circuit $\mathcal{C}$ is a \emph{stabilizer channel} from $\stab_\text{in}$ to $\stab_\text{out}$ if, for any input state encoded in $\stab_\text{in}$ and for every possible channel outcome vector, it outputs a state encoded in $\stab_\text{out}$. 
Note that to achieve this for all outcome vectors, a circuit $\mathcal{C}$ will typically make use of carefully chosen conditional Paulis.
Unless otherwise stated, we will assume throughout this work that all conditional Paulis are placed at the end of the stabilizer channel as shown in \fig{enc-and-dec}(a).

Given a stabilizer channel from $\stab_\text{in}$ to $\stab_\text{out}$, it is natural to ask what logical action it has on the encoded information.
To make the logical action of the channel uniquely defined, we must specify an explicit logical basis for each of $\stab_\text{in}$ and $\stab_\text{out}$, which can be achieved by providing encoding Cliffords $C_\text{in}$ and $C_\text{out}$ for the input and output codes.
An encoding Clifford $C$ for a $k$-qubit stabilizer code $\stab$ maps $X_i$ and $Z_i$ for $i \in [k]$ to a basis of logical operator representatives, while it maps $Z_i$ for $i \in [k+1,n]$ to a set of independent stabilizer generators of the code.
The logical action of a stabilizer circuit $\mathcal{C}$ can then be succinctly characterized by finding a general form circuit equivalent to the composite stabilizer circuit formed by sandwiching $\mathcal{C}$ between encoding and unencoding Cliffords for the input and output code as shown in \fig{enc-and-dec}(b).
An appropriate choice of conditional Paulis at the end of the stabilizer channel yields the simple logical action general form shown, with no random bits or conditional Paulis.
One such choice is to use the conditional Paulis obtained from finding a general form for the circuit formed by the ancilla state preparation, $C_\text{in}$ and the unconditional part of the stabilizer circuit.
An efficient algorithm exists to find a general form for any stabilizer channel~\cite{kliuchnikov2023}.

\begin{figure}[ht]
    (a)\includegraphics[width=0.8\linewidth]{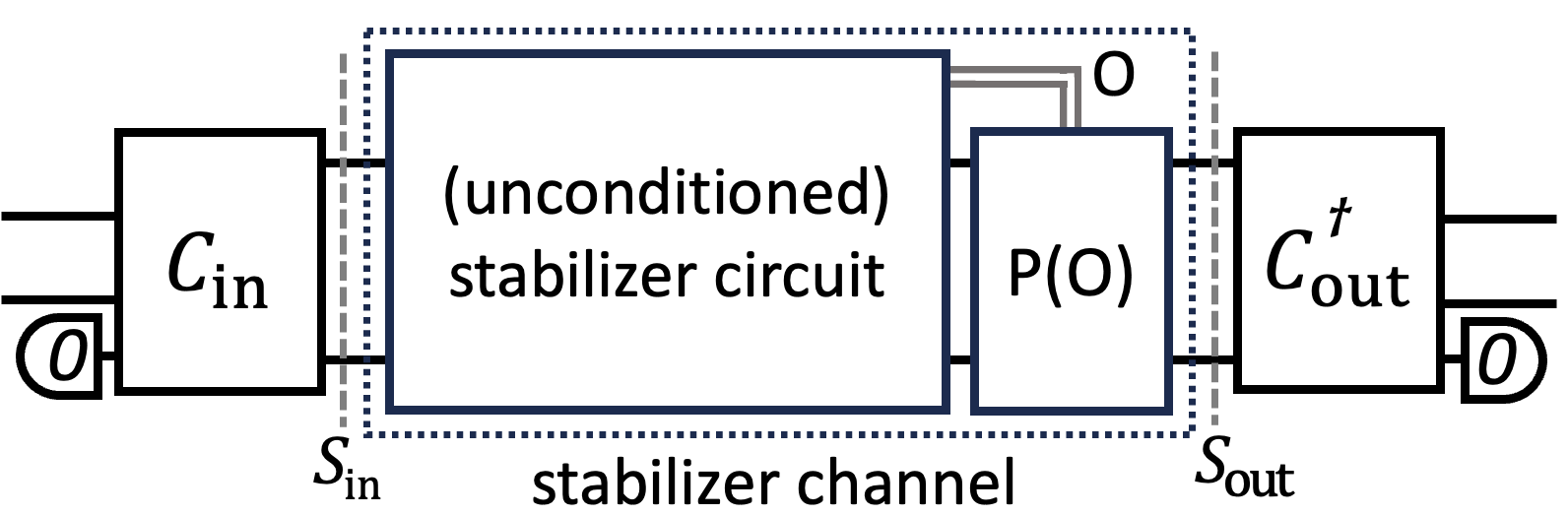}
    (b)\includegraphics[width=0.6\linewidth]{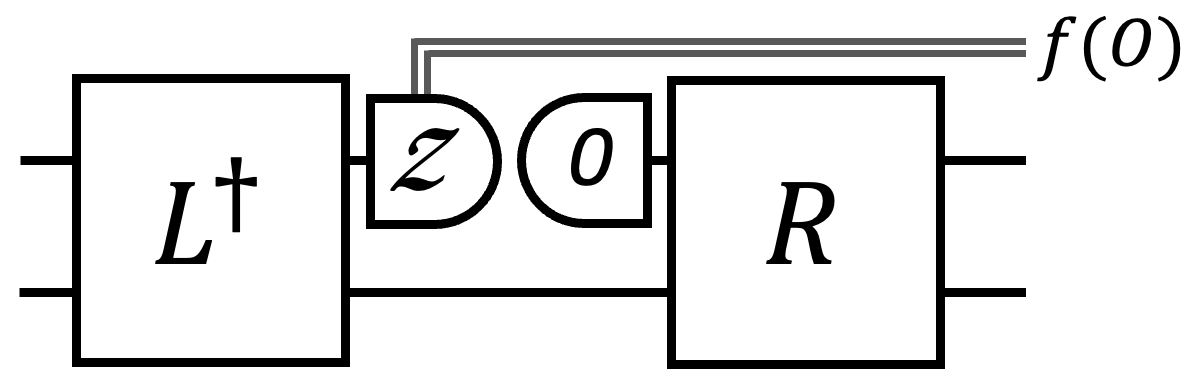}
    \caption{
    (a) The logical action of a \emph{stabilizer channel}, which is a stabilizer circuit that takes input states encoded in the code $\stab_\text{in}$ and produces output states encoded in $\stab_\text{out}$.
    We take the convention that all conditional Paulis are at the end of the stabilizer channel.
    The Clifford unitaries $C_\text{in}$ and $C_\text{out}$ fix the logical basis of $\stab_\text{in}$ and $\stab_\text{out}$.
    (b) A general form circuit equivalent to the logical action.
    Note there are no random bits or classically controlled Paulis in this general form, which can be ensured by an appropriate choice of classically controlled Paulis for the stabilizer channel.
    The outcome vector $\mathfrak{f}(O)$ of the general form is a function of the outcome vector $O$ of the stabilizer channel.
    }
    \label{fig:enc-and-dec}
\end{figure}

We will make use of the following concepts and definitions.
There is a natural correspondence between a Clifford unitary $C$ and the symplectic basis for $n$ qubits: $CX_1C^\dagger, C Z_1C^\dagger, CX_2C^\dagger, C Z_2C^\dagger, \dots, CX_nC^\dagger, C Z_nC^\dagger$.
We say that a symplectic Pauli basis $\mathcal{B}$ \emph{respects} a stabilizer group $\stab$ if the basis contains a complete set of generators for $\stab$.
The encoding Cliffords $C_\text{in}$ and $C_\text{out}$ thereby specify the symplectic Pauli bases $\mathcal{B}_\text{in}$ and $\mathcal{B}_\text{out}$ which in turn respect $\stab_\text{in}$ and $\stab_\text{out}$.
A symplectic basis $\mathcal{B}$ that respects the code with stabilizer group $\stab$ is a convenient tool for determining a complete set of logical operators of $\stab$.
A symplectic basis $\mathcal{B}$ can be used to define a symplectic basis $[\mathcal{B}]_{\stab}$ for the logical operators of $\stab$ using a simple two-step procedure.
First remove all anti-commuting pairs from the basis for which one of the elements in the pair is in $\stab$, second multiply all the remaining elements by $\stab$.
We omit the subscript in $[\mathcal{B}]_{\stab}$ when the code is clear from the context.

\begin{figure}[ht]
\includegraphics[width=0.7\linewidth]{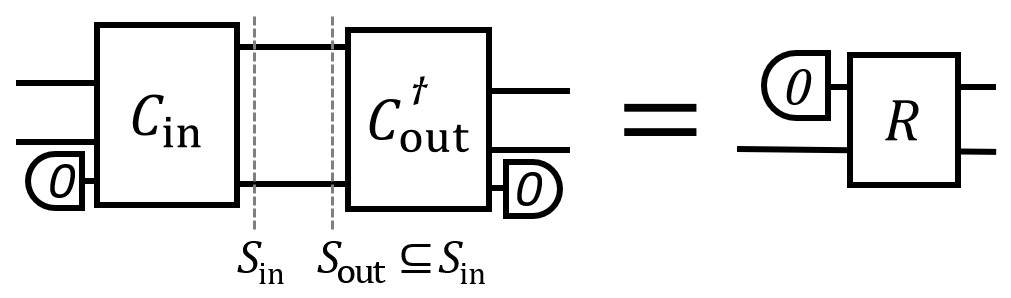}
    \caption{
    A \emph{code basis conversion} is the logical action of a trivial circuit with respect to different code bases.
    This requires that the stabilizer group of the output code is contained in that of the input code $\stab_\text{out} \subseteq \stab_\text{in}$.
    The logical action is a simple general form circuit specified by a $k_\text{out}$-qubit Clifford $R$, where $k_\text{out}$ is the number of logical qubits in $\mathcal{S}_\text{out}$. 
    }
    \label{fig:code-basis-conversion}
\end{figure}

The simplest example of a stabilizer channel is the trivial circuit.
However, even though the circuit itself is trivial, when it is viewed as a channel with respect to specified input and output code bases it can have some non-trivial logical action if the two codes or their bases are different as shown in \fig{code-basis-conversion}.
We call this a \emph{code basis conversion} channel. 
The parameters of a code basis conversion channel must satisfy certain properties.
The stabilizer group of the output code must be contained in that of the input code.
The unitary $R$ specifying the logical action must map each independent logical operator of $\mathcal{S}_\text{out}$ from $\stab_\text{in} / \mathcal{S}_\text{out}$ to  $\bar Z_1,\ldots, \bar Z_l$ in $[\mathcal{B}_\text{out}]_{\mathcal{S}_\text{out}}$, where $l$ is the number of the independent generators of $\mathcal{S}_\text{in} / \mathcal{S}_\text{out}$.
Another condition on $R$ concerns the ordered set formed by taking $\mathcal{B}_\text{in}$ and removing all anti-commuting pairs that have any of the elements in $\mathcal{S}_\text{in}$ and multiplying the remaining pairs by $\mathcal{S}_\text{out}$. 
Then $R$ must map the logical operators in this set to $\bar Z_{l+1},\bar X_{l+1}, ... , \bar Z_{k_\text{out}}, \bar X_{k_\text{out}}$ from $[\mathcal{B}_\text{out}]_{\mathcal{S}_\text{out}}$ where $k_\text{out}$ is the number of logical qubits in $\mathcal{S}_\text{out}$.

\section{Fault tolerance definitions}
\label{sec:fault-tolerance}

In this section we consider the effect of noise on stabilizer channels and define key objects that we will work with later.
Our formalism is similar to other approaches in the literature, especially those that represent channels as space-time codes~\cite{bacon2017,delfosse2023spacetime,gottesman2022opportunities,Gidney2021stim}.
In summary, we work with a binary vector $v$ that represents the configuration of faults in the channel, a binary matrix $A_\checkset$ that captures the syndrome $\sigma = A_\checkset v$ available for error correction, and a binary matrix $A_L$ that captures the logical action of $v$ on the channel via $A_L v$.
A key object we will work with is the fault distance, which is then given by the smallest $v$ such that $A_\checkset v = 0$ while $A_L v \neq 0$.
A novel aspect of our treatment is that we define $A_L$ in terms of a general form representation of the logical channel action described in Ref.~\cite{kliuchnikov2023}.
This automatically captures logical channel failures not just due to logical Paulis of the output code being introduced, but also flips of logical measurement outcomes, and avoids incorrectly identifying the introduction of a logical output operator when that operator is stabilized by the logical channel action (for example suppose $XX$ is applied due to faults in a channel that measures $XX$).
We find this operational approach more convenient to work with than the complementary approach in Ref.~\cite{Bombin2021}, where the stabilizers of the Choi state are used to capture the logical channel action.

\subsection{Channel outcomes}
\label{sec:channel-outcomes}

Throughout this section we consider a stabilizer channel $\mathcal{C}$ with an observed outcome vector $\tilde{O}$ from an input code with stabilizer group $\stab_\text{in}$ to an output code with stabilizer group $\stab_\text{out}$, with respect to Pauli bases $\mathcal{B}_\text{in}$ and $\mathcal{B}_\text{out}$. 
The logical action of the channel is captured by the general form circuit as in \fig{code-deformation-channel-logical} with Clifford unitaries $L$, $R$ and the outcomes $m$ of the logical measurements being encoded in a linear outcome function $\mathfrak{f}(O)$, where $O$ is the fault-free outcome vector (which may differ from the observed outcome vector $\tilde{O}$ due to noise).
The following concepts will be used throughout this section.

{\bf Extended outcome vectors}.---
In the remainder of this section we will work with an extended version of the observed circuit outcome vector $\tilde{O} \oplus s_\text{out}$, where $s_\text{out}$ is the syndrome of the output code at the end of the channel.
$\tilde{O} \oplus s_\text{out}$ contains all of the information available to a decoder to perform error correction, which can include not only the outcomes $\tilde{O}$ that are observed in the channel itself, but also extra information that could potentially be obtained by acting on the state output by the channel.
This extra information is at most that which is contained in $s_\text{out}$, but in practice is likely to be noisy.
Here we include $s_\text{out}$ with no noise, which is equivalent to the standard practice of including a `perfect round' of syndrome extraction at the end of the channel.
It is important to take care that this perfect output syndrome information does not lead to spurious analysis results, and we will return to this point in \sec{combining-stabilizer-channels} when considering the composition of channels. 
For notational convenience, unless otherwise stated, we use extended versions of outcome vectors from here on but drop the `extended' descriptor and write $\tilde{O}$ rather than $\tilde{O} \oplus s_\text{out}$.

\subsection{Independent stochastic Pauli noise}
\label{sec:stochastic-noise}

In this work we assume noise occurs according to an \textit{independent stochastic Pauli noise model}, whereby each operation in the circuit has an associated set of allowed \emph{faults}, each of which occurs independently with some fixed probability.  
Each fault consists of the application of a specific Pauli operator to the physical qubits after the instruction, and flips of specific measurement outcomes.
We assume that conditional Pauli operations never fail, and that they are applied at the end of the circuit. 
This assumption is justified because conditional Paulis can be applied offline by a Pauli-frame update and because the circuit, including any faults, is built from Stabilizer operations. 
Note that although the conditional Paulis are fault-free, whether or not they are applied can be flipped by faults which change the bitstring upon which they are controlled.

For each operation in $\mathcal{C}$, and for each fault it can experience according to the noise model, we assign an element to the \emph{elementary fault set} $\faultset$.
For each element of $\faultset$, we specify the probability $\Pr$ that the elementary fault occurs, and its \emph{fault effect}, which is the effect the fault has on the channel.
The fault effect consists of the \emph{residual error} Pauli $E$ which is applied to the qubits at the end of the stabilizer channel as a result of the fault, and the \emph{residual outcome flip} bitstring $\delta$ which specifies how the channel outcome vector $O$ is modified to $O + \delta$ as a result of the elementary fault.
Both $E$ and $\delta$ can be calculated for a given fault by inserting the fault into the stabilizer circuit, and propagating it to the end.
In \sec{3b} we will put some restrictions on the elementary fault set.

Multiple elementary faults can occur together, forming a \emph{fault configuration} $F$ which is a subset $F \subset \faultset$.
Since elementary faults are sampled independently from $\faultset$ according to their probabilities, which forms a fault configuration $F \subset \faultset$ with probability $\Pr(F)$, which is the probability that those elementary faults in $F$ occur, and all elementary faults in $\faultset \setminus F$ do not.
For any fault configuration $F$, we define the residual error $E(F)$ from the product of the residual errors of each elementary fault in $F$, and the residual outcome flip $\delta(F)$ by taking the sum over the residual outcome flips of each elementary fault in $F$.

We also find it useful to define the effect of a Pauli error $P$ acting on the input qubits: $E_\text{in}(P)$ and $\delta_\text{in}(P)$ are the residual error at the end of the channel and residual outcome flip that would occur if $P$ was applied at the start of the channel without any faults.

We can associate a weight $\weight(F)$ with each fault configuration.
Unless otherwise stated, we take $\weight(F) = \log \Pr(\emptyset)  -\log \Pr(F)$ for every fault configuration $F \subset \faultset$,
so that $\weight(F_1 \sqcup F_2) = \weight(F_1) + \weight(F_2)$.
We assume throughout this paper that the weights of all elementary faults (and therefore all faults) are non-negative.
At times, we will set all weights to have unit weight (and will clearly state when doing so).
This is useful in some analysis since if for example all elementary faults occur with a probability proportional to $p$, then when assuming unit weights, the probability of some fault $F$ occurring is $\Theta(p^{\text{wt}(F)})$.

Our analysis in this paper holds for any independent stochastic Pauli noise model, but we will often refer to the following standard circuit noise model.

\textbf{Circuit noise}.---
The instruction set consists of arbitrary single-qubit stabilizer operations, and arbitrary two-qubit stabilizer operations on specified pairs of qubits. 
We also allow arbitrary conditional Paulis.
The faults are then:
\begin{itemize}
    \item Three faults for each each single-qubit instruction. 
    Each of $\{X,Y,Z\}$ occurs with probability $p/3$.
    \item Fifteen faults for each two-qubit instruction (each of which is a non-trivial two-qubit Pauli). 
    Each occurs with probability $p/15$.
    \item An outcome flip of each measurement instruction, which occurs with probability $p$.
\end{itemize}

There are a number of variants of circuit noise models which are used in the literature which can be captured by slight modifications of the parameters in the above models. 
For example, some models are phrased in terms of exclusive rather than independent events (for example when an idle qubit failure occurs some models say that \emph{either} an $X$, \emph{or} a $Y$ \emph{or} a $Z$ error occurs).
The standard exclusive phenomenological and circuit noise models can be re-expressed as independent stochastic Pauli noise models of the type here with specific fault probability values~\cite{Chao2020}.
The allowed multi-qubit operations should be informed by hardware constraints, and this connectivity will have a significant impact on the details of circuits that can be implemented~\cite{Chamberland2020a,Chao2020,Tremblay2021}.

When analysing circuit noise, we also consider restricted circuit noise where the Pauli faults that follow one or two-qubit instructions are restricted to a subset of the three or the fifteen faults correspondingly.

\subsection{Error correction}
\label{sec:3b}

Here we primarily review some standard material on fault-tolerant quantum error correction for stabilizer channels, although we point the expert reader to our definition of undetectable fault configurations which is new.
Many quantum error correction analyses focus on the scenario of memory where the goal is to fault-tolerantly store information, and do not address subtleties that can arise when the goal is instead to fault-tolerantly perform logical operations including measuring out some logical qubits.
Here we use the general form circuit to capture the logical action, which complements the approach in Ref.~\cite{Bombin2021} where the stabilizers of the Choi state are used. 

In what follows, we will use the convention that the fault-free channel outcome vector is $O$, and that given a fault configuration $F$, and an input Pauli error $P$, the observed channel outcome vector is $\tilde{O} = O + \delta(F) + \delta_\text{in}(P)$, and the output Pauli error at the end of the channel is $E(F) \cdot E_\text{in}(P)$.

{\bf Channel checks}.---
Each \emph{channel check} is either a parity (or the parity plus one) of a subset of elements of the channel outcome vector $\tilde{O}$, with the property that its value is zero for every possible fault-free channel outcome vector $O$.
The sum of two checks is also a check, such that the checks form a vector space over $\mathbb{F}_2$.
We use $\checkset$ to denote a complete basis of channel checks.
Note that the checks are defined without any reference to the noise model - they are a property of the channel irrespective of the kind of noise it undergoes.
We define the \emph{syndrome} $\sigma(\tilde O)$ to be the observed outcomes of the checks from $\checkset$.

Note that any fault-free outcome vector $O$ has trivial syndrome $\sigma(O) = 0$, and as such, $\sigma(\tilde O) = \sigma(\delta(F)) + \sigma(\delta_\text{in}(P))$. 
We assume that the syndrome contains all information used by the decoding algorithm to perform error correction.
We can think of $\sigma$ as an organized and potentially compressed representation of the data contained in $\tilde O$ which retains information about faults in the channel in a form that is conveniently consumed by the decoder.
Different choices of check bases for the check space, while containing equivalent information, may be more or less convenient for decoding purposes.

{\bf Undetectable fault configurations}.---
We say that a fault configuration $F$ is \emph{undetectable} if it has trivial syndrome, i.e. if $\sigma(\delta(F)) = 0$.
Note that since $s_\text{out}$ is included at the end of $\delta(F)$, the residual error $E(F)$ for an undetectable fault $F$ must be a logical operator with respect to the output code basis $\mathcal{B}_\text{out}$, which we write as $\bar{E}(F)$.
We write the \emph{logical action} of the fault $F$ as the bitstring $\mathcal{L}(F) = \mathfrak{f}(\delta(F)) \oplus a(\bar{E}(F)) \oplus b(\bar{E}(F)) \oplus c(\bar{E}(F))$, where $a(\bar{E}(F))$ $b(\bar{E}(F))$ and $c(\bar{E}(F))$ are the bitstrings that uniquely define the effect of $\bar{E}(F)$ on the logical action of the channel as defined in \fig{logical-effect-bitstring}.

If $F$ is undetectable and has trivial logical action we say it is \emph{trivial}.
We say two fault configurations $F$ and $F'$ are \emph{equivalent} iff $\sigma(\delta(F+F')) = 0$ and $\mathcal{L}(F+F') = 0$.

\begin{figure}[h]
    \includegraphics[width=1.0\linewidth]{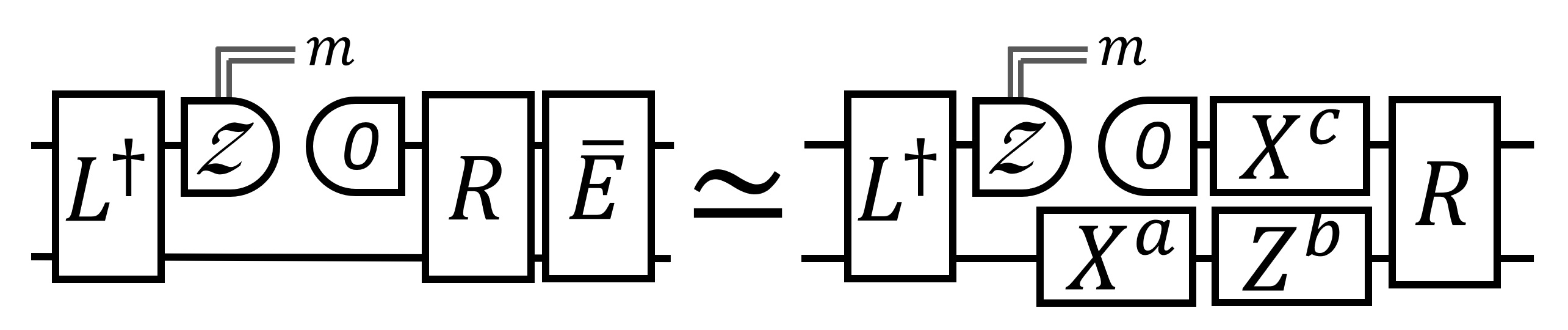}
    \caption{
    The effect on the logical action of a logical operator $\overline{E}$ for the output code.
    Applying $\overline{E}$ at the end of a stabilizer channel is equivalent to inserting Pauli operators $X^a$, $Z^b$, $X^c$ in the the general form circuit as shown.
    We use the bitstrings $a(\overline{E})$, $b(\overline{E})$ and $c(\overline{E})$ to uniquel capture the action of $\overline{E}$ on the channel.
    }
    \label{fig:logical-effect-bitstring}
\end{figure}

{\bf Minimum-weight decoding}.---
In this work we assume the \emph{minimum-weight} (MW) decoder $\mathcal{D}$, which finds a lowest-weight fault configuration $\hat{F} = \mathcal{D}(\sigma)$ which matches the observed channel syndrome $\sigma = \sigma(\tilde{O})$.
This is a most-probable fault configuration with the observed syndrome. 
A MW decoder may not be the optimal decoder, because the latter requires finding the most probable \emph{set} of equivalent fault configurations.
From $\hat{F}$, we can obtain the Pauli correction $\hat E = E(\hat{F})$ to be applied to the data qubits at the end of the channel and also a corrected version of the channel outcome $\hat{O} = \tilde{O} + \delta(\hat{F})$ that is used to obtain the logical outcome via $\mathfrak{f}(\hat{f})$.
We will assume that the decoder is deterministic, such that the same correction is produced for a given syndrome $\sigma$, and we also that when the syndrome $\sigma$ is trivial, $\mathcal{D}(\sigma)$ is also trivial.

In the presence of noise, error correction is first applied, producing a fault configuration $\hat{F}$.
This defines a corrected syndrome the corrected outcomes $\hat{O} = \tilde O + \delta(\hat{F})$, for which all channel checks are satisfied.
The overall logical action of the faults and error correction is then $\mathcal{L}(F + \hat{F})$, and we say that error correction has succeeded if this is trivial and that it has failed otherwise.
Since the logical outcome is obtained via $\mathfrak{f}(\hat{O})$, this ensures that we can use any choice of logical outcome function $f$, because all the choices of $f$ are equivalent up to checks and after the outcomes are corrected, all the checks are zero. 

{\bf Restrictions on the elementary fault set}.---
We can use some of the above definitions to impose useful restrictions on the elementary fault set $\faultset$.
Firstly, we exclude undetectable elementary faults from $\faultset$, since they either have no effect on quantum error correction if their logical action is trivial or render quantum error correction futile otherwise.
When convenient, we can furthermore assume that each elementary fault in $\faultset$ has a unique effect pair $(E,\delta)$.
Note that if this is not the case for a given elementary fault set, we can modify it to enforce this condition by retaining only one elementary fault in $\faultset$ with a particular effect $(E,\delta)$. 
When doing this, the $\Pr$ assigned to the retained elementary fault in the new elementary fault set is obtained from the sum over the probabilities of all combinations of an odd number of the equivalent elementary faults occurring in the original elementary fault set.

{\bf Matrix representations}.---
It is often convenient to represent a fault $F$ as a vector $v_F \in \mathbb{F}_2^{|\faultset|}$
that we call a fault vector.
Motivated by this definition, we use $F + F'$ to represent the symmetric difference of two faults.
Similarly, we represent an input Pauli error $P$ as an input error vector $w_\text{in} \in \mathbb{F}_2^{2 n_\text{in}}$, which is expressed in the input code basis $\mathcal{B}_\text{in}$. 
We can then represent the linear dependence of checks $\checkset$ on fault vectors $v_F$ as a \emph{channel check matrix} $A_{\checkset}$ defined via the equation $A_\checkset v_F = \sigma(\delta(F))$.
Every undetectable fault configuration $F$ corresponds to an element $v_F$ of the kernel of $A_\checkset$.
Similarly, it is  convenient to represent the linear dependence of logical action of fault configurations on fault vectors $v_F$ as a \emph{logical effect matrix} $A_{L}$ defined via the equation $A_L v_F = \mathcal{L}(F)$.
Every trivial fault configuration $F$ corresponds to $v_F$ of the kernels of $A_\checkset$ and $A_L$, 
and non-trivial undetectable fault configuration $F$ corresponds to $v_F$ in the kernels of $A_\checkset$ and not in the kernel of $A_L$.

\subsection{Fault distance}
\label{sec:logical-fault-configurations}

Here we introduce the fault distance, which tells us how many faults can be corrected in a channel.

\begin{dfn}[Fault distance]
    Consider a stabilizer channel $\mathcal{C}$ affected by a stochastic Pauli noise with elementary fault set $\faultset$.
    The fault distance $d(\faultset)$ is the minimum weight of a non-trivial undetectable fault configuration of $\mathcal{C}$,
    that is, the minimum weight of an element of the kernel of the channel check matrix $A_{\checkset}$ that is not in the kernel of the logical effect matrix $A_L$.
\end{dfn}

We will often refer to a non-trivial undetectable fault configuration as a \emph{logical fault configuration}.
First, we show that the channel achieves its fault distance on a simple fault configuration, where we say that a fault configuration $\cF$ is \emph{simple} if there is no undetectable fault configuration $\cF'$ that is a proper subset of $\cF$.
Suppose that the fault configuration $\cF$ achieving the distance is not simple. 
We can represent it as a disjoint union of two undetectable fault configurations: $\cF = \cF' \sqcup (\cF \setminus \cF')$.
Consider the case when the channel distance is greater than zero. 
Suppose that $\cF'$ is a non-trivial undetectable fault configuration; then, its weight must be the same as that of $\cF$. 
For this reason, $\cF \setminus \cF'$ must be a weight-zero fault configuration and, therefore, trivial. 
Otherwise, this would contradict the minimal weight of a non-trivial undetectable fault configuration being positive.
Therefore, we can replace $\cF$ with a non-trivial undetectable fault configuration $F'$ of strictly smaller size.
A similar argument applies when $\cF \setminus \cF'$ is a non-trivial undetectable fault configuration.
The second case is when the channel distance is zero. 
Both $\cF'$ and $\cF$ have a weight of zero.
If $\cF'$ is a non-trivial undetectable fault configuration, we can replace $\cF$ with a non-trivial undetectable fault configuration $\cF'$ of strictly smaller size. 
A similar argument applies when $\cF \setminus \cF'$ is a non-trivial undetectable fault configuration. 
We can continue this process until we obtain a simple non-trivial undetectable fault configuration.

Next we show an important implication that the fault distance of a channel has on its error correction properties.
We assume that the MW decoder has access to the channel checks and the output syndrome of the channel
\footnote{This is equivalent to appending a perfect round of syndrome extraction to the end of the channel and extending channel checks with the perfect round outcomes.}.
This assumption is motivated by the fact that the channel is usually a part of a larger computation and the later steps 
of the computation can be used to infer the output code syndrome.
With the access to the output code syndrome, the MW decoder can correct any error with weight less than $d(\faultset)/2$.
To see this, suppose that a fault configuration $F$ occurs with syndrome $\sigma(F)$ and output code syndrome $s_\text{out}(F)$.
The MW decoder will output a minimum weight fault configuration $F'$, where $\weight(F') \le \weight(F)$, $\sigma(F')= \sigma(F)$
and $s_\text{out}(F')= s_\text{out}(F)$.
The fault configuration $F + F'$ is undetectable and has weight less than $2\weight(F)$ and is therefore trivial (since $2\weight(F)$ is less than $d(\faultset)$).
For this reason, fault configurations $F$ and $F'$ have the same logical action and we can correct $F$ by applying the correction $F'$.

Now we consider the lowest weight errors that lead to failure. 
Given the above bound the smallest such error must be of weight $d(\faultset)/2$.
To describe a case when a fault configuration of weight $d(\faultset)/2$ cannot be corrected we need to distinguish between 
odd and even simple fault configurations. 
We say that a simple fault configuration $F$ is \emph{even} if it can be written as a disjoint union of fault configurations $F_1 \sqcup F_2$ 
such that $\weight(F_1) = \weight(F_2) = \weight(F)/2$.
Every fault configuration that is not even, is \emph{odd}.
Suppose the channel fault distance is achieved on an even fault configuration $F_1 \sqcup F_2$, with $\weight(F_1) = \weight(F_2) = d(\faultset)/2$.
Note that at least one of $s_\text{out}(F_1)$, $\sigma(F_2)$ is non-zero because $F_1 \sqcup F_2$ is simple.
The logical actions of  $F_1$ and $ F_2$ must be different because $F_1 \sqcup F_2$ is a non-trivial undetectable fault configuration.
For syndrome $\sigma(F_1) = \sigma(F_2)$ and output code syndrome $s_\text{out}(F_1) = s_\text{out}(F_2)$
the deterministic MW decoder will find a minimum-weight fault configuration $F'$.
The logical action of $F'$ must be different from either $F_1$ or $F_2$ and so either $F_1$ or $F_2$ followed by the correction
$F'$ corresponds to a logical error.
We see that when the distance is achieved on an even fault configuration, $d(\faultset)/2$ is a tight (non-inclusive) upper bound 
on the weight of correctable errors. 

Let us now discuss a more nuanced case, when the channel distance is achieved only on odd fault configurations.
Suppose that $F$ is a distance-achieving odd fault configuration and $F_e$ is a positive-weight 
elementary fault configuration $F_e \subset F$.
(Here we use the term elementary fault configuration to describe a fault configuration with a single element.)
Let us write $F = F_1 \sqcup F_2 \sqcup F_e$ as a disjoint union of three sets,
and assume without loss of generality that $\weight(F_1) \ge \weight(F_2)$.
fault configuration $F_1 \sqcup F_e$ will cause error correction to fail.
Suppose that $F_1 \sqcup F_e$ is the fault configuration that actually occurs. 
Note that $F_2$ 
and  $F_1 \sqcup F_e$ will have the same channel syndrome and output code syndrome,
at least one of which should be non-zero. 
The MW decoder will provide a correction $F'$ with weight at most $\weight(F_2) \le (d(\faultset) - \weight(F_e))/2$
with the same syndromes and the same logical action as $F_2$.
If the logical actions were different, this would contradict that $d(\faultset)$ is the fault distance of the channel.
Correcting for $F'$ will cause the logical error equal to the logical action of $F$.
In case of odd fault configurations, there exists a non-correctable fault configuration with weight 
$\weight(F_1 \sqcup F_e) \in (d(\faultset)/2, d(\faultset)]$.
When all the elementary faults have the same weight $w_0$, we can choose $\weight(F_1) = \weight(F_2)$ 
and the above construction leads to a non-correctable fault configuration of weight $(d(\faultset) + w_0)/2$.

\section{Generalized hook faults}
\label{sec:hook-faults}

Here we provide a formal general definition for hook faults which is useful for designing fault-tolerant channels.

An important concept when designing fault tolerant circuits is the notion of a hook fault (sometimes referred to as a `hook error'~\cite{dennis2002tqm}).
To our knowledge, there is no generally applicable definition of a hook fault, but instead they are understood in the context of specific examples.
In the scenarios where hook faults are typically discussed, there is a well-established simplified representation of the circuit, which experiences a simplified `phenomenological model' of faults, where any single fault results in a Pauli errors of weight at most one.
Then, loosely speaking, a hook fault is a single elementary fault in the circuit which has an effect which requires at least two elementary faults in the phenomenological noise model to reproduce. 
Such faults are important because they can have an enhanced corrupting effect on the encoded information, for example by introducing higher-weight Pauli errors.
The name `hook' is used because the two phenomenological faults that reproduce the effect of a hook fault are often represented as edges in a decoding graph, and can visually resemble hooks.
In this section we formalize the concept of a hook fault for general stabilizer channels, and also formally define hazardous hook faults.

We define hook faults for a general stabilizer channel $\mathcal{C}$ with a given elementary fault set $\faultset$ with respect to a fault subset $\faultset_\text{sub} \subset \faultset$, with unit weights for all elementary faults.
Let $d(\faultset)$ and $d(\faultset_\text{sub})$ be the fault distances of the stabilizer channel $\mathcal{C}$ with respect to elementary fault sets $\faultset$ and $\faultset_\text{sub}$ respectively.
Clearly $d(\faultset_\text{sub}) \geq d(\faultset)$, since any non-trivial undetectable fault configuration $F' \subset \faultset_\text{sub}$ with weight $d(\faultset_\text{sub})$ has an equivalent fault configuration $F \subset \faultset$ with weight at most $d(\faultset_\text{sub})$.
We are interested in those elementary faults in $\faultset \setminus \faultset_\text{sub}$ which have the potential to cause $d(\faultset)$ to be strictly less than $d(\faultset_\text{sub})$.
We say that an elementary fault $F \in \faultset$ is a \emph{hook fault} with respect to $\faultset_\text{sub} \subset \faultset$ if there is no equivalent elementary fault of $\faultset_\text{sub}$.
The set of hook faults of $\faultset$ with respect to $\faultset_\text{sub}$ is contained within $\faultset \setminus \faultset_\text{sub}$, and is equal to $\faultset \setminus \faultset_\text{sub}$ if all elementary faults in $\faultset$ are inequivalent.
Note that by this definition, a hook fault can be defined with respect to \emph{any} fault subset, which is a substantial generalization beyond the standard notion of hooks.

\begin{dfn}[Hazardous and brazen hook faults]
Let $F$ be a hook fault of $\faultset$ with respect to $\faultset_\text{sub}$ for the stabilizer channel $\mathcal{C}$, where all faults have unit weight.

We call $F$ a \emph{hazardous} hook fault of $\faultset$ with respect to $\faultset_\text{sub}$ if $F$ is contained in a minimum-weight non-trivial undetectable fault configuration for $\mathcal{C}$ with respect to $\faultset$.

We call $F$ a \emph{brazen} hook fault of $\faultset$ with respect to $\faultset_\text{sub}$ if $F$ is equivalent to a subset of a minimum-weight non-trivial undetectable fault configuration for $\mathcal{C}$ with respect to $\faultset_\text{sub}$.
\end{dfn}

Note that if there are no hazardous hook faults for $\faultset$ with respect to $\faultset_\text{sub}$, then it must be that $d(\faultset) = d(\faultset_\text{sub})$.
Moreover, if $d(\faultset) < d(\faultset_\text{sub})$, it must be the case that there is at least one hazardous hook fault in $\faultset$ with respect to $\faultset_\text{sub}$.
The existence of any brazen hook faults for $\faultset$ with respect to $\faultset_\text{sub}$ implies that $d(\faultset) < d(\faultset_\text{sub})$.

\begin{figure}[h]
    (a)\includegraphics[width=0.4\linewidth]{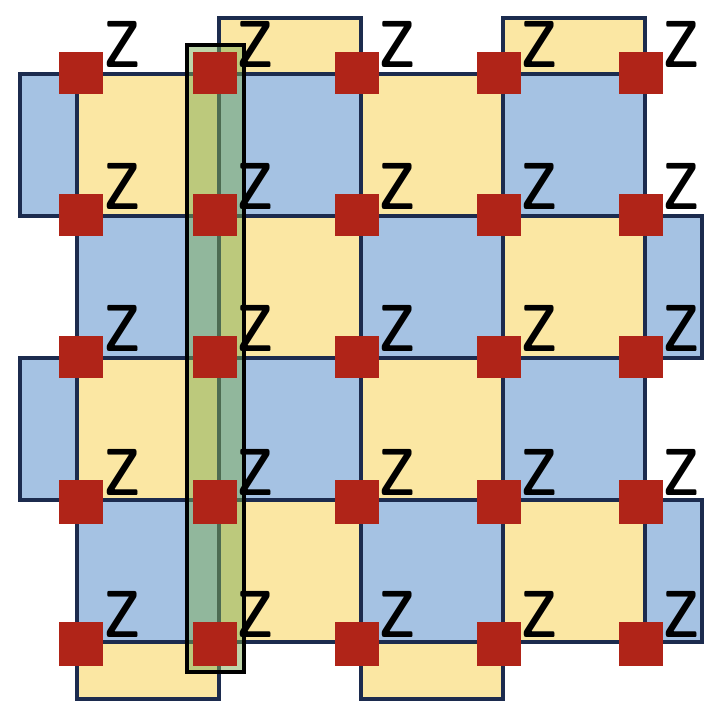}
    (b)\includegraphics[width=0.4\linewidth]{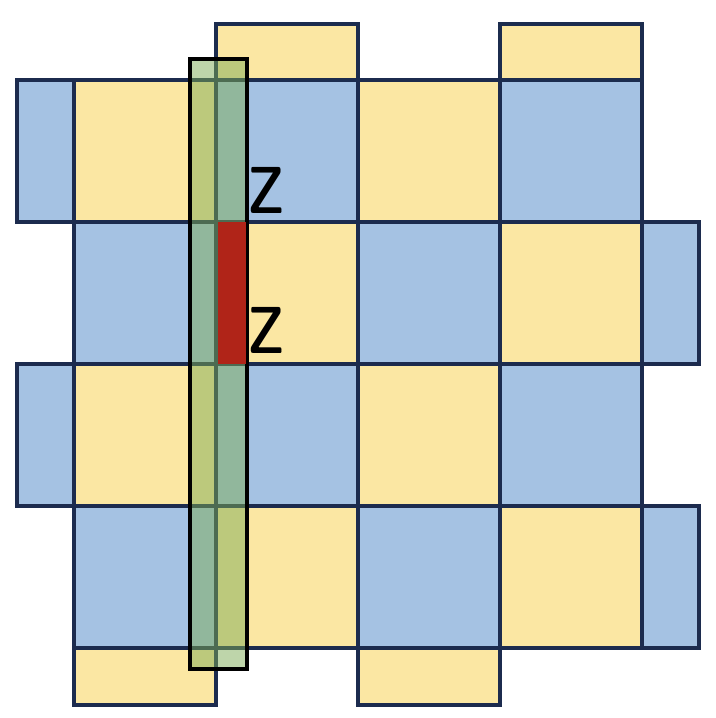}
    (c)\includegraphics[width=0.4\linewidth]{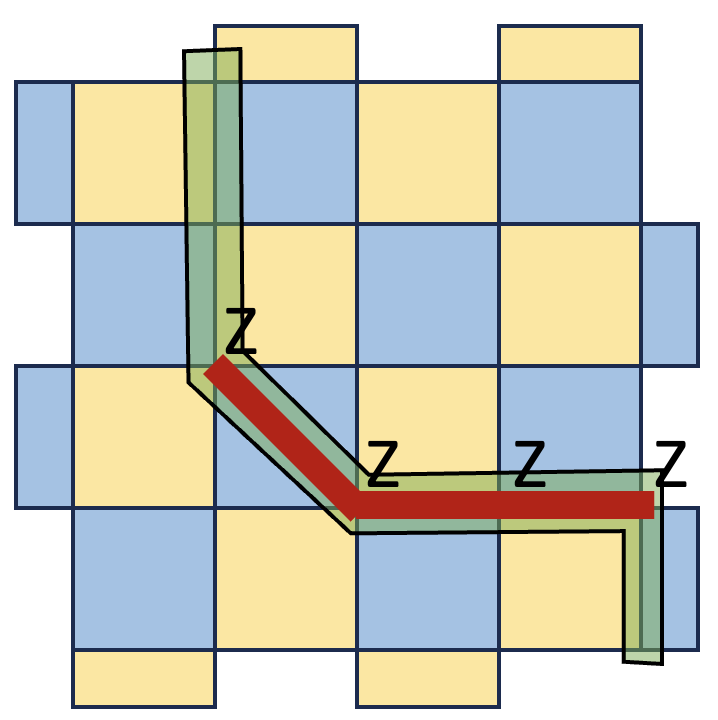}
    (d)\includegraphics[width=0.4\linewidth]{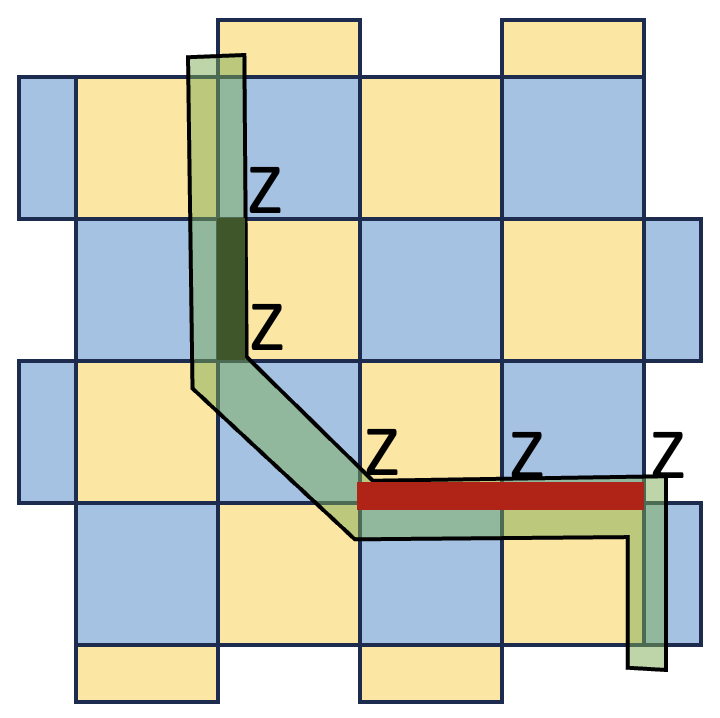}
    \caption{
        Different classes of hook faults illustrated using a single round of surface code stabilizer measurements, with $X$ and $Z$ stabilizers corresponding to blue and yellow plaquettes respectively.
        (a) $\faultset_\text{sub}$ is single-qubit $Z$ faults (red squares). 
        $F$ is a minimum-weight logical fault configuration for $\faultset_\text{sub}$ (green), such that $d(\faultset_\text{sub})=5$. 
        Now consider adding an in-equivalent fault $f$ to $\faultset_\text{sub}$ to form $\faultset$.
        (b) If $f$ (red) is equivalent to the combination of $w>1$ faults in $F$, then the fault distance is reduced to $d(\faultset) = d(\faultset_\text{sub}) - w +1$ and we say $f$ is brazen for $\faultset$ with respect to $\faultset_\text{sub}$.
        (c) Non-brazen faults can also reduce the fault distance however.
        If $f$ (red) is in a logical fault configuration $F'$ for fault set $\faultset$ as shown, we say $f$ is hazardous for $\faultset$ with respect to $\faultset_\text{sub}$.
        In this example $F'$ does not correspond to a minimum-weight logical fault for $\faultset_\text{sub}$, and the hazardous $f$ reduces the distance. 
        (d) Hazardous faults may not reduce the fault distance however.
        In this example we add two faults $f_1$ (green) and $f_2$ (red) to $\faultset_\text{sub}$ to form $\faultset$, which results in $d(\faultset) = d(\faultset_\text{sub})-1$.
        Removing $f_1$ from $\faultset$ (but leaving $f_2$) recovers the distance.  
    }
    \label{fig:hook-fault-definitions}
\end{figure}

These concepts can be used in the design of fault-tolerant stabilizer channels. 
A typical scenario is to work with $\faultset$ as corresponding to the full circuit noise model for a candidate stabilizer channel $\mathcal{C}$, and $\faultset_\text{sub}$ be a subset of fault locations that correspond to a phenomenological noise model that is insensitive to small changes in the circuit.
The standard approach is then to design circuit $\mathcal{C}$ such that $d(\faultset_\text{sub})$ achieves the target fault distance, and if the initial choice of $\mathcal{C}$ does not have a fault distance that matches $d(\faultset_\text{sub})$, one can consider small modifications to $\mathcal{C}$ which change $\faultset$ without changing $\faultset_\text{sub}$ to eliminate different kinds of hook faults until the desired distance is achieved.
However, defining hooks with respect to an arbitrary subset of faults allows us to work in a more general setting, for example not just by changing $\faultset$ while keeping $\faultset_\text{sub}$ fixed, but by also increasing the size of $\faultset_\text{sub}$ in iterations to find and remove brazen hook faults in each iteration.

We make these approaches more concrete in examples in \sec{examples} and consider the computational aspects of these different kinds of hook faults in \sec{calculating-fault-dist}.

\section{Analytic analysis of stabilizer channels}
\label{sec:code-def-channels}

Code deformation involves modifying the code in which information is stored by measuring stabilizer generators for the new code.
Many techniques in the field of quantum error correction are naturally described in terms of code deformation.
In this section, we show that in fact any stabilizer channel can be understood as a sequence of code deformation steps.
By expressing a stabilizer channel in terms of code deformations, the circuit's action is made more transparent, which can make it easier to improve the design of the circuit. 
The code deformation viewpoint is complementary to other approaches that also help elucidate how circuits work in different ways, such as the topological picture described in Ref.~\cite{Bombin2021} for surface codes.

In \sec{code-def-step} we review a single code deformation step and its logical action when considered as a stabilizer channel.
Then, in \sec{code-def-viewpoint-sub} we describe how any stabilizer channel can be reformulated as a sequence of code deformation steps.
In \sec{code-def-channel-logical-action}, we show how one can easily identify the logical action of a stabilizer channel by composing the logical action of each deformation step.
Lastly, in \sec{flip-fault-effect}, we provide an approach to map the effect of an outcome-flip fault in a stabilizer channel to a fault configuration of Pauli errors, which will be useful in our examples in \sec{examples}.

\subsection{An elementary code deformation step}
\label{sec:code-def-step}

Here we review the analysis of a single code deformation step from Ref.~\cite{kliuchnikov2023}.
In \fig{code-def-round}(a) we show an elementary code deformation step, which has an input state stabilized by a Pauil group $\stab_\text{in}$ which undergoes a set of commuting Pauli measurements which generate a group $\mathcal{M}$.
It is convenient to add a conditional Pauli correction $\hat{P}(m)$ to `fix up' the code deformation step to ensure that the output state is independent of the outcome vector $m$ of a set of generators of $\mathcal{M}$~\cite{kliuchnikov2023}.
We call the result a \emph{phase-fixed code deformation step} as shown in \fig{code-def-round}(b).
The output state's stabilizer group contains $\stab_\text{out} = (\stab_\text{in} \cap \mathcal{M}^\perp) \cdot \mathcal{M}$.

\begin{figure}[ht]
    (a)\includegraphics[width=0.43\linewidth]{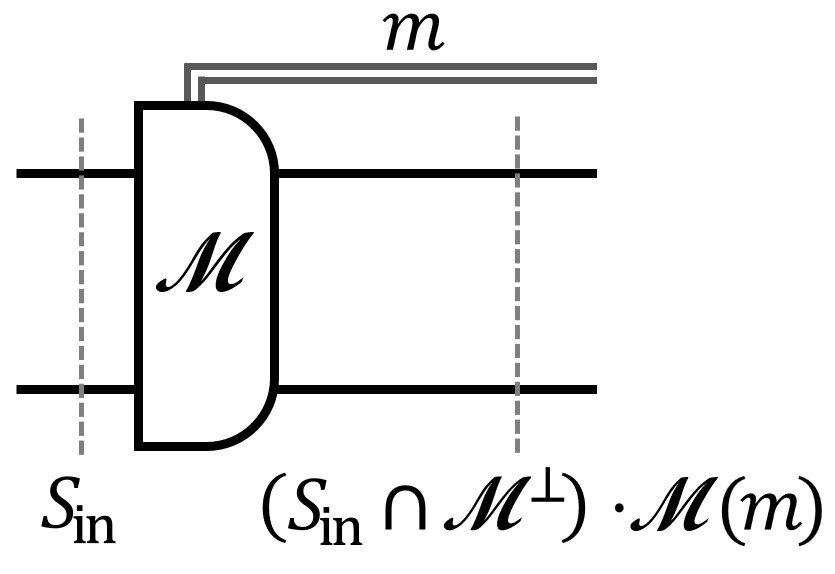}
    (b)\includegraphics[width=0.47\linewidth]{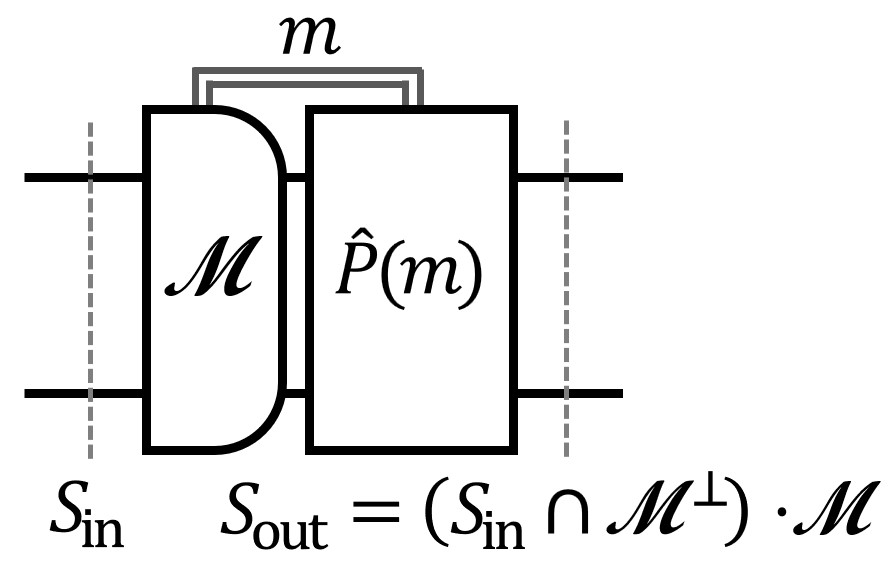}
    (c)\includegraphics[width=0.95\linewidth]{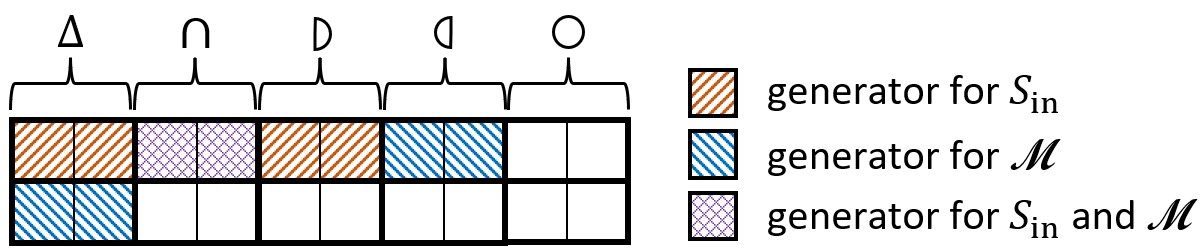}    
    \caption{
    (a) A \emph{code deformation step} consists of a set of commuting Pauli measurements (which generate the group $\mathcal{M}$, producing outcome vector $m$) given an input stabilized by group $\stab_\text{in}$.
    The output state stabilizer group depends on $m$, in particular it contains $\mathcal{M}(m)$ which is the group $\mathcal{M}$, but with some phases changed.
    (b) We can \emph{phase-fix} a code deformation step by adding a conditional Pauli $\hat{P}(m)$ to ensure the output is independent of $m$ and is stabilized by $\stab_\text{out} = (\stab_\text{in} \cap \mathcal{M}^\perp) \cdot \mathcal{M}$. 
    (c) A \emph{common symplectic Pauli basis} $\mathcal{B}(\stab_\text{in},\mathcal{M})$ exists for any two stabilizer groups~\cite{kliuchnikov2023}. 
    The basis has $2n$ Pauli generators (boxes) separated into five sets (symbols) depending on their membership of $\stab_\text{in}$ and $\mathcal{M}$.
    Generators pairwise-commute except for vertical neighbors.
    }
    \label{fig:code-def-round}
\end{figure}

In what follows, we use the terminology and notation from \sec{stab-circuits-logical} on Pauli bases.
The fix-ups $\hat{P}(m)$ are naturally defined in terms of a \emph{common symplectic basis} $\mathcal{B}(\stab_\text{in},\mathcal{M})$ which respect both $\stab_\text{in}$ and $\mathcal{M}$; see \fig{code-def-round}(c).
In this basis, the group $\mathcal{M}$ is generated by the elements of $\mathcal{X}^\Delta$, $\mathcal{Z}^\cap$ and $\mathcal{Z}^{\cleftsemicirc}$, and $\hat{P}(m)$ is chosen such that for each element of $\mathcal{X}^\Delta$ or $\mathcal{Z}^{\cleftsemicirc}$ which has non-trivial outcome, its anticommuting partner from $\mathcal{Z}^\Delta$ or $\mathcal{X}^{\cleftsemicirc}$ is applied respectively.

From the common symplectic basis $\mathcal{B}(\stab_\text{in},\mathcal{M})$, the subset of generators $\mathcal{Z}^{\newcirc}$ and $\mathcal{X}^{\newcirc}$ can each be multiplied by the stabilizer group $\stab_\text{out}$ to form a symplectic basis for the logical operators of the code $\stab_\text{out}$, which we call $[\mathcal{B}(\stab_\text{in},\mathcal{M})]_{\stab_\text{out}}$.
Similarly, we can use $\mathcal{Z}^{\cleftsemicirc},\mathcal{X}^{\cleftsemicirc},\mathcal{Z}^{\newcirc}$ and $\mathcal{X}^{\newcirc}$ to define a symplectic basis $[\mathcal{B}(\stab_\text{in},\mathcal{M})]_{\stab_\text{in}}$ for the logical operators of $\stab_\text{in}$.

The logical action of a phase-fixed code deformation step has a simple general form circuit as shown in \fig{code-def-round-logical}. 
This general form circuit starts with a logical basis change $L^\dagger$ of the code $\stab_\text{in}$ from $[\mathcal{B}_\text{in}]_{\stab_\text{in}}$ to $[\mathcal{B}(\stab_\text{in},\mathcal{M})]_{\stab_\text{in}}$.
Next, single-qubit $Z$ measurements are applied (which measure all the logical operators of $S_\text{in}$ that have a representative in $\mathcal{M}$).
Lastly, a logical basis change $R$ of the code $\stab_\text{out}$ is applied from $[\mathcal{B}(\stab_\text{in},\mathcal{M})]_{\stab_\text{out}}$ to $[\mathcal{B}_\text{out}]_{\stab_\text{out}}$.

An alternative approach to analyzing an elementary code deformation step is discussed in Ref.~\cite{Vuillot2019}.

\begin{figure}[ht]
    \includegraphics[width=1.0\linewidth]{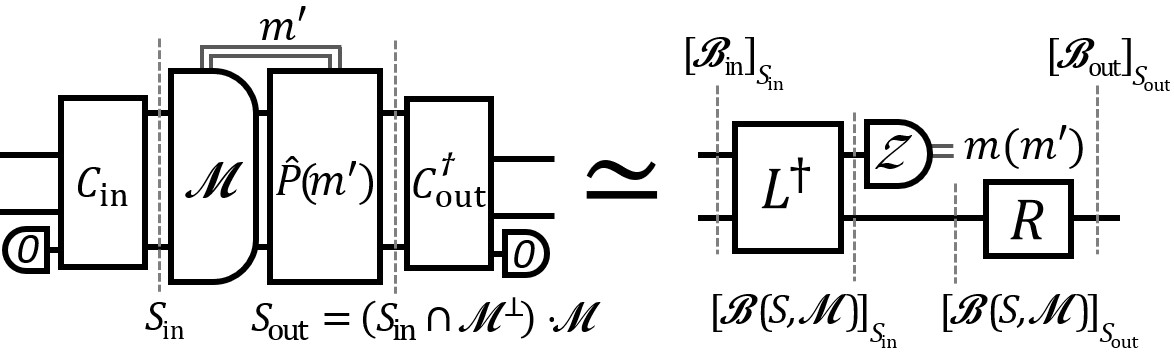}
    \caption{
    The logical action of a phase-fixed code deformation step is captured by a simple general form.
    The $k_\text{in}$-qubit unitary $L$ maps the logical operators of $\stab_\text{in}$ from the common symplectic basis to the basis specified by $C_\text{in}$.
    The $k_\text{out}$-qubit unitary $R$ maps the logical operators of $\stab_\text{out}$ from the common symplectic basis to the basis specified by $C_\text{out}$.
    }
    \label{fig:code-def-round-logical}
\end{figure}

\subsection{Stabilizer channels are code deformations}
\label{sec:code-def-viewpoint-sub}

Here we begin by defining a \emph{code deformation sequence}, which consists of a sequence of code deformation steps.
We go on to explain that in fact all stabilizer channels can be considered as code deformation sequences.

We define a code deformation sequence as depicted in \fig{code-deformation-channel}.
Each of the $T$ steps has a Clifford unitary $U_i$ followed by a set of measurements of commuting Pauli operators which generate a group $\mathcal{M}_i$.
We take $m_i$ to be the outcomes of a generating set for the group $\mathcal{M}_i$.
We also find it convenient to include a conditional phase fix-up $\hat{P}_i(m_i)$ at the end of each step to ensure that the stabilizer group after each step is independent of the measurement outcomes.

\begin{figure}[ht]
    \includegraphics[width=1.0\linewidth]{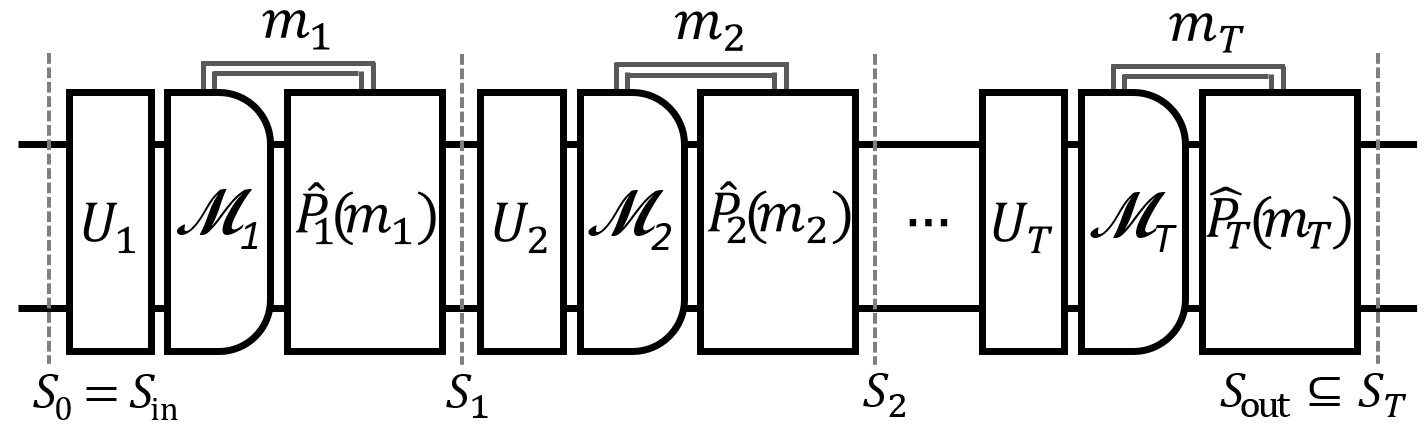}
    \caption{Any stabilizer channel can be expressed as a \emph{code deformation sequence} as shown here, consisting of a sequence of phase-fixed code deformation steps, where the $i$th step involves applying the Clifford unitary $U_i$ before measuring a set of commuting Pauli operators giving outcomes $m_i$ followed by a fix-up $\hat{P}_i(m_i)$ as described in \fig{code-def-round}(b).
    The input and output states of the channel are encoded in the stabilizer codes $\stab_\text{in}$ and $\stab_\text{out}$ respectively.
    }
    \label{fig:code-deformation-channel}
\end{figure}

We consider the channel's logical action with respect to an input code with stabilizer group $\stab_\text{in}$ and an output code with stabilizer group $\stab_\text{out}$ (with their logical bases fixed by encoding circuits $C_\text{in}$ and $C_\text{out}$).
The state after the $i$th phase-fixed code deformation step has a stabilizer group containing $\stab_i$, where $\stab_{i} = ((U_{i} \stab_{i-1} U_{i}^\dagger) \cap \mathcal{M}_i^\perp) \cdot \mathcal{M}_i$ for each $i = 1,\dots, T$ and $\stab_0 = \stab_\text{in}$.
We require for consistency between the channel and the specified output code $\stab_\text{out}$ that $\stab_T \supseteq \stab_\text{out}$.

There are many ways to express a stabilizer circuit $\mathcal{C}$ as an equivalent code deformation sequence $\mathcal{C}_{\text{def}}$.
For example, one can split $\mathcal{C}$ into a sequence of $T$ sub-circuits $\mathcal{C}_j$ each of which has the same number of input and output qubits 
such that $\mathcal{C} = \mathcal{C}_1 \circ \ldots \circ \mathcal{C}_T$.
Each sub-circuit $\mathcal{C}_i$ is equivalent to a code deformation step (not necessarily phase fixed).
This is because each sub-circuit $\mathcal{C}_i$ is equivalent to a general form circuit~\fig{general-form-circuit} (with unitaries $L_i,R_i$ and $k_i$ destructive measurements),
and the general form circuit with the same number of input and output qubits is equivalent to a code deformation step
with $U_i =  R_i L_i^\dagger$ and $\mathcal{M}_i = \langle R_i Z_j R_i^\dagger : j \in [k_i] \rangle$ and appropriately chosen conditional Paulis.
To get the channel in \fig{code-deformation-channel} it remains to adjust the conditional Paulis to ensure that every step is phase-fixed.

\subsection{Logical action of stabilizer channels}
\label{sec:code-def-channel-logical-action}

When a stabilizer channel has been expressed as a code deformation sequence, it is straight-forward to compute its logical action analytically.
For any code deformation sequence, the logical action is equivalent to the simple general form circuit shown on the left of \fig{code-deformation-channel-logical}.
Note that this is a restriction of the general form circuit shown in \fig{general-form-circuit}, having no random bit outcomes or conditional Pauli (by virtue of the phase fixing in the channel).
The logical action of a code deformation sequence is fully specified by a $k_\text{in}$-qubit Clifford unitary $L$, an integer $k$ that specifies the number of independent logical Paulis measured by the channel, and a $k_\text{out}$-qubit Clifford unitary $L$.
The \emph{logical outcome function} $\mathfrak{f}(O)$ identifies what the $k$-bit logical outcome vector $m$ of the channel is given the channel outcome vector $O$.

\begin{figure}[ht]
    \includegraphics[width=1.0\linewidth]{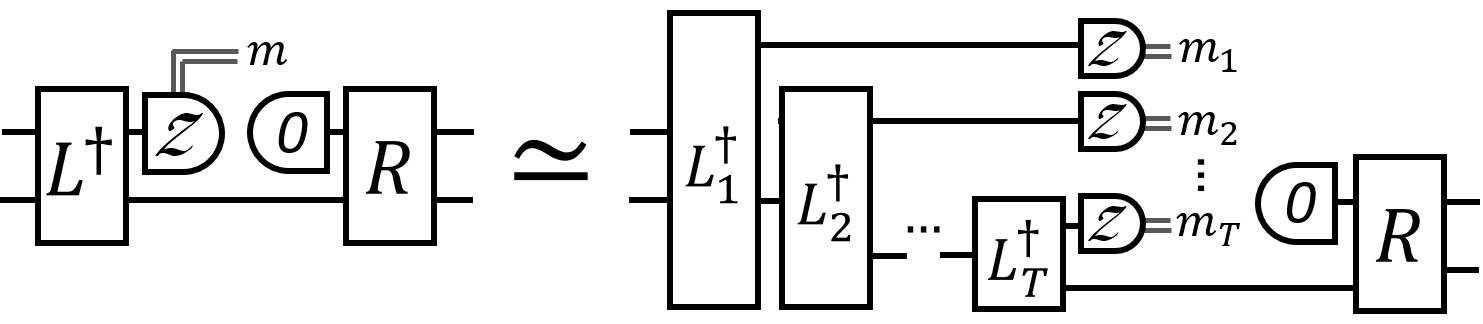}
    \caption{
    The logical action of any code deformation sequence (with phase fixes) is captured by a general form circuit as shown on the left, with no random bits or conditional Pauli.
    The Clifford unitary $L$ can be expressed in terms of products of the logical basis change Cliffords $L_j$ (expressed in carefully chosen bases as described in the text) of each non-trivial code deformation step.
    The Clifford unitary $R$ can be expressed as a code basis conversion from $\stab_T$ tp $\stab_\text{out}$.
    There exists a map from the channel outcomes for each step $m_1, m_2, \dots m_T$ to the logical outcome vector $m$. 
    }
    \label{fig:code-deformation-channel-logical}
\end{figure}

The general form circuit for the logical action of the channel makes it clear that any code deformation sequence simply measures out a set of commuting logical Pauli operators of the input code, and then re-encodes the remaining logical information in the output code in some basis.
The inclusion of the zero-state ancilla in the general form allows for some logical degrees of freedom of the output code to be fixed.

As illustrated in \fig{code-deformation-channel-logical}, the Cliffords $L$ and $R$ which specify this general form can be found from more elementary logical basis change matrices.
To define these basis changes, we introduce a sequence of symplectic Pauli bases $\mathcal{B}_0, \mathcal{B}_1, \dots, \mathcal{B}_T$, where for $j = 1,2, \dots T$, we define $\mathcal{B}_j = \mathcal{B}(U_{j} \stab_{j-1}U_{j}^\dagger,\mathcal{M}_j)$ as a common symplectic basis for $U_{j} \stab_{j-1}U_{j}^\dagger$ and $\mathcal{M}_j$, and we take $\mathcal{B}_0 = \mathcal{B}_\text{in}$ to be the Pauli basis specified by the input code encoding Clifford $C_\text{in}$.
To use~ \fig{code-def-round-logical} we need to specify with respect to which input and output symplectic bases each of the measurement rounds is analyzed. 
For round $j$, we choose the input and output pair to be $U_j \mathcal{B}_{j-1} U_j^\dagger$ and $\mathcal{B}_j$.
With the above choice of input and output symplectic bases, for all $j = 1,2, \dots T$, $L^\dagger_j$ is a logical basis change from $[U_j \mathcal{B}_{j-1} U_j^\dagger]_{U_j S_{j-1} U_j^\dagger}$
to $[\mathcal{B}_j]_{U_j S_{j-1} U_j^\dagger}$,
and $R_j$ is a logical basis change of the code $\stab_{j}$ from $[\mathcal{B}_{j}]$ to $[\mathcal{B}_{j}]$ and is therefore equal to the identity.
The $R$ Clifford in the general form on the left of \fig{code-deformation-channel-logical} is then identified as a code basis conversion from $\stab_T$ (with encoding Clifford specified by $\mathcal{B}_{j}$) to $\stab_\text{out}$ (with encoding Clifford specified by $C_\text{out}$).

Note that with these choices of bases, the logical action of $U_j$ is trivial with respect to the input code $S_{j-1}$ with logical operator basis $[\mathcal{B}_{j-1}]_{\stab{j-1}}$
and output code $U_j S_{j-1} U^\dagger_j$ with logical operator basis $[U_j \mathcal{B}_{j-1} U^\dagger_j]_{U_j \stab{j-1} U^\dagger_j}$.

\subsection{Fault effects of measurement outcome flips}
\label{sec:flip-fault-effect}

Re-expressing the fault effect of measurement outcome flips in a circuit as fault configurations of Pauli operators with the same effect can be a useful technique. 
The main tool for this is the notion of an \emph{absolutely trivial fault configuration}. 
Given a stabilizer channel $\mathcal{C}$ and a sub-circuit $\mathcal{C}'$ of $\mathcal{C}$, a fault configuration $F$ within $\mathcal{C}'$ is absolutely trivial if the fault-free sub-circuit $\mathcal{C}'$ implements the same linear map as the sub-circuit $\mathcal{C}'$ with the fault configuration $F$ for all possible measurement outcomes of $\mathcal{C}'$. 
Every absolutely trivial fault configuration is also a trivial fault configuration.

For example, consider a sub-circuit $\mathcal{C}'$ that measures a Pauli $Z$ on some qubit, with a fault configuration of a Pauli $Z$ right before the measurement. 
This is a simple example of a trivial fault configuration.

In Section \ref{sec:examples}, we use absolutely trivial fault configurations to analyze the fault-tolerant properties of code deformation rounds. Specifically, consider a code deformation round that consists of measuring a commuting set of Pauli operators $P_1,\ldots,P_m$. 
Let $Q_j$ be a Pauli operator that anti-commutes with $P_j$ and commutes with the rest of the measured Pauli operators. The fault configuration that consists of (1) a Pauli $Q_j$ before measuring $P_1,\ldots,P_m$, (2) an outcome flip of $P_j$, and (3) a Pauli $Q_j$ after measuring $P_1,\ldots,P_m$, is an absolutely trivial fault configuration. 
This allows us to analyze the outcome flip of $P_j$ in terms of an equivalent fault configuration consisting of Pauli operators.

\section{Combining stabilizer channels}
\label{sec:combining-stabilizer-channels}

The goal of this section is to establish sufficient conditions on a set of stabilizer channels, so that 
a stabilizer channel composed from the set of stabilizer channels has fault distance $d$.
The key technical condition we require of the stabilizer channels is that they have a specific time-locality property.
In what follows, we provide a rigorous definition of this property and show that for a stabilizer channel which satisfies it, the simple fault configurations must be contiguous in time.
This will allow us to localize any undetectable fault configuration in a composed stabilizer channel to a pair of consecutive stabilizer channels in the composition.

\begin{dfn}[Time-local stabilizer channels]
Consider a stabilizer channel with elementary fault set $\faultset$.
Let $\faultset = \faultset_1 \sqcup \faultset_2 \sqcup \ldots \sqcup \faultset_T$ be a partition of the elementary fault set into disjoint subsets. 
We say that the stabilizer channel with checks $\checkset$ is \emph{time-local} with respect to this elementary fault set partition if the check set can be partitioned as $\checkset = \checkset_1 \sqcup \checkset_2 \sqcup \ldots \sqcup \checkset_T$ such that: 
(a) checks in $\checkset_1$ depend only on the input code syndrome $s_\text{in}$ of the input error $E_\text{in}$ and elementary faults in $\faultset_1$,
and $s_\text{in}$ can be calculated from checks in $\checkset_1$ when there are no faults,
(b) checks in $\checkset_j$ depend only on elementary faults in $\faultset_{j-1}\sqcup \faultset_j$ for $j \in [2,T]$
(c) $s_\text{out}$ depends only on the checks $\checkset$ and elementary faults from $\faultset_T$.
\end{dfn}

Let us next show that simple fault configurations must be contiguous in time in a time-local stabilizer channel.
Consider a simple fault configuration $\cF$ and suppose that it is a subset of 
$\faultset_{j_1} \sqcup \ldots \sqcup \faultset_{j_m}$ for $j_1 < j_2 < \ldots < j_m$, and that $\cF \cap \faultset_{j_k}$ is non-empty for all $k \in [m]$.
We show that for each $k \in [2,m]$ it must be that $j_k = j_{k-1} + 1$, that is the fault configuration $\cF$ is contiguous in time.
Suppose that this is not the case for some $k$, 
then fault configurations $\cF \cap (\faultset_{j_1} \sqcup \ldots \sqcup \faultset_{j_{k-1}})$ 
and $\cF \cap (\faultset_{j_{k}} \sqcup \ldots \sqcup \faultset_{j_m})$ must both both be undetectable, which is a contradiction to $\cF$ being simple.

The composition of two time-local stabilizer channels is again a time-local stabilizer channel.
To see this, consider stabilizer channels $\mathcal{C}^\alpha$ and $\mathcal{C}^\beta$ that are time-local with respect to partitions of the elementary fault sets
$\faultset^\alpha = \{ \faultset^\alpha_j \}_{j\in[T^\alpha]}$
and
$\faultset^\beta = \{ \faultset^\beta_j \}_{j\in[T^\beta]}$,
and of the check bases 
$\checkset^\alpha = \{ \checkset^\alpha_j \}_{j\in[T^\alpha]}$ 
and
$\checkset^\beta = \{ \checkset^\beta_j \}_{j\in[T^\beta]}$.
We require that $\stab^\beta_\text{in} = \stab^\alpha_\text{out}$, and consider them with respect to the same symplectic Pauli basis.
We seek to show that the composed channel $\mathcal{C}^\alpha \circ \mathcal{C}^\beta$ is time-local with respect to the partition 
$\{ \faultset^\alpha_j \}_{j\in[T^\alpha]} \sqcup \{ \faultset^\beta \}_{j\in[T^\beta]}$
of elementary fault set $\faultset^\alpha \sqcup \faultset^\beta$.
We must also specify the set of checks and their partition for the composed channel $\mathcal{C}^\alpha \circ \mathcal{C}^\beta$.
Recall that the checks in $\checkset^\beta_1$ depend on the input code syndrome $s^\beta_\text{in}$ of the input error $E^\beta_\text{in}$ and faults in $\faultset^\beta_1$.
Also note that $s^\beta_\text{in} = s^\alpha_\text{out}$ is a function of faults in $\faultset_{T^\alpha}$ and checks in $\checkset^\alpha$.
For this reason $s^\beta_\text{out}$ can be expressed as a function of checks in $\checkset^\alpha \sqcup \checkset^\beta$ and faults in $\faultset^\beta_{T^\beta}$.
We can modify $\checkset^\beta_1$ into $\tilde \checkset^\beta_1$ by combining it with checks in 
$\checkset^\alpha$, such that $\tilde \checkset^\beta_1$ depends on $\faultset^\alpha_{T^\alpha} \sqcup \faultset^\beta_1$.
The check basis partition $\{ \checkset^\alpha_j \}_{j\in[T^\alpha]} \sqcup \{  \tilde \checkset^\beta_1 \} \sqcup \{ \checkset^\beta \}_{j\in[2,T^\beta]}$
satisfies the properties needed to establish the time-locality of $\mathcal{C}^\alpha \circ \mathcal{C}^\beta$.

Note that when $T = 1$ the time-locality condition becomes very simple. 
It requires that the input syndrome $s_\text{in}$ can be inferred from the checks when no faults occur in the channel.
Indeed, if this is the case we can re-express the output syndrome as a linear function of the fault vector and checks.

For the rest of this section, let us consider the case of uniform weights. 
In this case, a simple fault configuration of weight $d$ in a time-local stabilizer channel can span at most $d$ consecutive elements of the partition $\faultset_{j+1} \sqcup \ldots \sqcup \faultset_{j+d}$ for some $j$.
Otherwise its weight would be greater than $d$.
This observation is crucial to establish the conditions under which stabilizer channel composition is distance preserving.

\begin{thm}
\label{thm:stabilizer channel-composition}
Let $\mathcal{C}^\alpha$, $\mathcal{C}^\beta$, $\mathcal{C}^\gamma$ be a sequence of compatible~\footnote{
We say that the stabilizer channels in a sequence are compatible if the output stabilizer code of each channel in the sequence matches the input stabilizer code of the next channel in the sequence.
} time-local stabilizer channels with unit-weight faults.
Suppose that compositions $\mathcal{C}^\alpha \circ \mathcal{C}^\beta$, $\mathcal{C}^\beta \circ \mathcal{C}^\gamma$ are stabilizer channels
with fault distance at least $d$ 
and the partition with respect to which $\mathcal{C}^\beta$ is time local is at least of size $d$, 
then stabilizer channel $\mathcal{C}^\alpha \circ \mathcal{C}^\beta \circ \mathcal{C}^\gamma$ has fault distance at least $d$. 
\end{thm}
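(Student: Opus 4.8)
The plan is to argue by contradiction, using the two structural results already established in this section: the fault distance of any channel is attained on a \emph{simple} fault configuration, and in a time-local channel with unit weights a simple fault configuration is contiguous in time and touches at most $\weight(\cF)$ consecutive parts of the fault-set partition. Since the composition of time-local channels is again time-local, $\mathcal{C}^\alpha \circ \mathcal{C}^\beta \circ \mathcal{C}^\gamma$ is time-local with respect to the concatenated partition $\faultset^\alpha_1, \dots, \faultset^\alpha_{T^\alpha}, \faultset^\beta_1, \dots, \faultset^\beta_{T^\beta}, \faultset^\gamma_1, \dots, \faultset^\gamma_{T^\gamma}$, in which the block inherited from $\mathcal{C}^\beta$ occupies $T^\beta \geq d$ consecutive parts sandwiched between the $\alpha$- and $\gamma$-blocks. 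First I would suppose the composed channel had fault distance at most $d-1$, so that a simple non-trivial undetectable fault configuration $\cF$ exists with $\weight(\cF) \leq d-1$.

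The first key step is a \emph{confinement} argument. As $\cF$ is simple in a time-local channel, the parts it meets form a contiguous interval of length at most $\weight(\cF) \leq d-1 < d$. Because the $\mathcal{C}^\beta$-block consists of $T^\beta \geq d$ consecutive parts lying strictly between the $\alpha$- and $\gamma$-parts, any interval containing both an $\alpha$-part and a $\gamma$-part would have to contain the entire middle block and hence more than $d$ parts, which is impossible. Thus $\cF$ is supported wholly within $\faultset^\alpha \sqcup \faultset^\beta$ or wholly within $\faultset^\beta \sqcup \faultset^\gamma$; without loss of generality I take the former, so $\cF$ contains no fault of $\mathcal{C}^\gamma$.

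It then remains to show that $\cF$ is a non-trivial undetectable fault configuration \emph{for the sub-channel} $\mathcal{C}^\alpha \circ \mathcal{C}^\beta$, which contradicts the hypothesis that this channel has fault distance at least $d$, since $\weight(\cF) < d$. For undetectability I would reuse the check bookkeeping from the composition-preserves-time-locality argument: the $\alpha$- and $\beta$-checks of the sub-channel are, up to combination, among the checks of the full channel and so vanish on $\cF$, while the sub-channel's perfect output syndrome $s^\beta_\text{out} = s^\gamma_\text{in}$ is — since $\cF$ has no $\gamma$-faults — recoverable from the first $\gamma$-check block via property (a) of time-locality, which is itself zero on the undetectable $\cF$; inverting this recovery map gives $s^\beta_\text{out}(\cF) = 0$, so $\cF$ is undetectable for $\mathcal{C}^\alpha \circ \mathcal{C}^\beta$.

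For non-triviality I would prove the contrapositive: if $\cF$ were \emph{trivial} for $\mathcal{C}^\alpha \circ \mathcal{C}^\beta$, I would show it is trivial for the full channel, contradicting the choice of $\cF$. Triviality for the sub-channel means $\mathcal{L}(\cF) = 0$ together with undetectability, so the sub-channel realizes the same logical general-form action with and without $\cF$; in particular the residual error at the $\beta$-output acts trivially on the output codespace (this is precisely the case the definition of $\mathcal{L}$ is designed to capture, e.g.\ when $\overline{E}(\cF)$ is a logical operator stabilized by the channel's logical measurements) and the logical outcomes are unchanged. Feeding this unchanged output through the fault-free $\mathcal{C}^\gamma$ then preserves the final state and all downstream logical outcomes, yielding $\mathcal{L}(\cF) = 0$ for $\mathcal{C}^\alpha \circ \mathcal{C}^\beta \circ \mathcal{C}^\gamma$. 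I expect this logical-action transfer to be the main obstacle, as it requires tracking how the general-form logical actions of the three pieces compose and verifying rigorously that a logically trivial residual error decouples from everything downstream; by comparison the confinement step is essentially immediate from the already-proven contiguity of simple configurations.
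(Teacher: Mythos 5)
Your proposal is correct and follows essentially the same route as the paper's proof: contradiction via a simple minimum-weight configuration, time-locality of the composed channel to confine that configuration to $\faultset^\alpha \sqcup \faultset^\beta$ or to $\faultset^\beta \sqcup \faultset^\gamma$, and then reduction to the corresponding pairwise composition, with undetectability obtained by recovering the intermediate output syndrome $s^\beta_\text{out}$ from the downstream channel's first check block (property (a) of time-locality of $\mathcal{C}^\gamma$) and non-triviality obtained by the contrapositive logical-action transfer.

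One caveat: your ``without loss of generality'' is not a genuine symmetry, since the definition of time-locality is not invariant under time reversal, and the paper accordingly treats the two cases separately. In case (B), where $\cF \subset \faultset^\beta \sqcup \faultset^\gamma$, the needed argument is different from (and simpler than) the one you give for case (A): the input syndrome of $\mathcal{C}^\beta$ vanishes simply because $\mathcal{C}^\alpha$ is fault-free on a codespace input, and the fault-free logical action of $\mathcal{C}^\alpha$ factors out in front, so no syndrome-recovery step is required. Since you worked out the harder case and the omitted case follows by an analogous, easier argument, this imprecision does not affect the validity of your proof.
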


\begin{proof}
We prove the result by contradiction.
Suppose that the stabilizer channel composition has a non-trivial undetectable fault configuration $\cF$ of weight $d-1$.
Without loss of generality, we choose $\cF$ to be simple. 
Note that that stabilizer channel composition $\mathcal{C}^\alpha \circ \mathcal{C}^\beta \circ \mathcal{C}^\gamma$ is time-local stabilizer channel
with respect to the partition
$\{ \faultset^\alpha_j \}_{j\in[T^\alpha]} \sqcup \{ \faultset^\beta_j \}_{j\in[T^\beta]} \sqcup \{ \faultset^\gamma_j \}_{j\in[T^\gamma]}$
of the elementary fault set 
$\faultset^\alpha \sqcup \faultset^\beta \sqcup \faultset^\gamma$.
Fault configuration $\cF$ must be supported on at most $d-1$ contiguous time partitions, and $T^\beta \ge d$,
therefore $\cF$ must be supported on either (A) $\faultset^\alpha \sqcup \faultset^\beta$
or (B) $\faultset^\beta \sqcup \faultset^\gamma$.

Next we show that the fault configuration $\cF$ corresponds to a simple non-trivial fault configuration of either $\mathcal{C}^\alpha \circ \mathcal{C}^\beta$ in case (A), or of $\mathcal{C}^\beta \circ \mathcal{C}^\gamma$ in case (B).
In both cases, this forms a contradiction since $\cF$ has weight $d-1$ but both $\mathcal{C}^\alpha \circ \mathcal{C}^\beta$ and $\mathcal{C}^\beta \circ \mathcal{C}^\gamma$ have fault distance at least $d$.
In case (A), the output syndrome of $\mathcal{C}^\beta$ is zero because it is equal to the input syndrome of $\mathcal{C}^\gamma$ in the stabilizer channel composition and
time locality of $\mathcal{C}^\gamma$ implies that the only way for the output syndrome of $\mathcal{C}^\gamma$ to be zero in the absence of faults in $\mathcal{C}^\gamma$ is for the input syndrome to be zero.
Similarly, the only way $\cF$ can have non-trivial logical effect with respect to $\mathcal{C}^\alpha \circ \mathcal{C}^\beta \circ \mathcal{C}^\gamma$ is if it has non-trivial logical effect with respect to $\mathcal{C}^\alpha \circ \mathcal{C}^\beta$. 
In case (B), the input syndrome of $\mathcal{C}^\beta$ is zero because there are no faults in the part $\mathcal{C}^\alpha$ of the composition and there are no logical errors caused by $\mathcal{C}^\alpha$. 
The only way for $\cF$ to be a non-trivial undetectable fault configuration is if it is 
a non-trivial undetectable fault configuration in $\mathcal{C}^\beta \circ \mathcal{C}^\gamma$, which completes the proof.
\end{proof}

\begin{figure}[h]
    \includegraphics[width=1.0\linewidth]{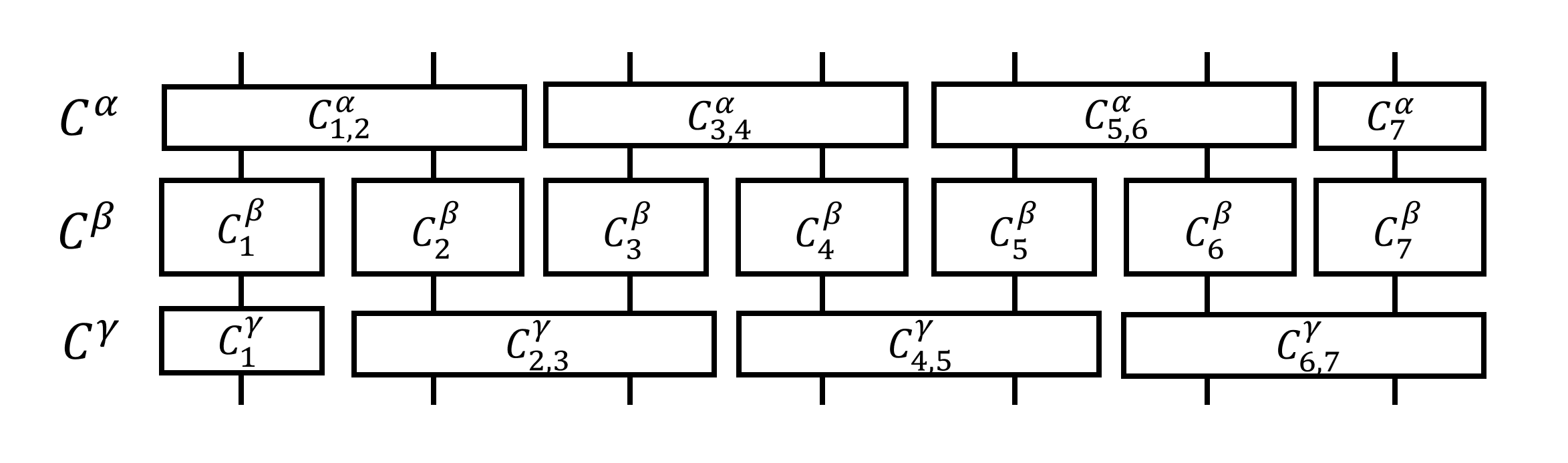}
    \caption{
    An example of applying Theorem~\ref{thm:stabilizer channel-composition} to a stabilizer channel composed of elementary stabilizer channels in space and time. 
    Assume that all the elementary stabilizer channels $\mathcal{C}^{\alpha}_{j},\mathcal{C}^{\alpha}_{j,j+1},\mathcal{C}^{\gamma}_{j},\mathcal{C}^{\gamma}_{j,j+1}$
    are time-local, and $\mathcal{C}^{\beta}_{j}$ are time-local with fault-set partition size at least $d$.
    Additionally assume that the stabilizer channel composition $\mathcal{C}^{\beta}_{j} \circ \mathcal{C}^{\gamma}_{j}$,
    $\mathcal{C}^{\alpha}_{j} \circ \mathcal{C}^{\beta}_{j}$, 
    $\mathcal{C}^{\alpha}_{j,j+1} \circ (\mathcal{C}^{\beta}_{j} \otimes \mathcal{C}^{\beta}_{j+1})$,
    and $ (\mathcal{C}^{\beta}_{j} \otimes \mathcal{C}^{\beta}_{j+1}) \circ \mathcal{C}^{\gamma}_{j,j+1}$ have fault distance $d$.
    Then the overall stabilizer channel $C^\alpha \circ C^\beta \circ C^\gamma$ has fault distance $d$ because $C^\alpha \circ C^\beta$
    and $C^\beta \circ C^\gamma$ satisfy conditions of Theorem~\ref{thm:stabilizer channel-composition},
    as the tensor products of distance $d$ time-local stabilizer channels. 
    } 
    \label{fig:space-composition}
\end{figure}

We conclude with a set of conditions on a set of stabilizer channels that ensure that their 
compositions have fault distance $d$. 
For simplicity, we consider the setting of stabilizer channels with the same number of input and output logical qubits, acting on one or two logical qubits
encoded in the same code, although it is straightforward to generalize these conditions to broader settings.
Consider two finite sets of time-local stabilizer channels $\mathfrak{C}^\text{short}$
and $\mathfrak{C}^\text{long}$. 
Stabilizer channels in $\mathfrak{C}^\text{long}$ are required to be time-local with fault-set partition of at least size $d$ and act on one logical qubit.
Additionally, we require that for any pair of one-qubit stabilizer channels $C^\alpha$ from $\mathfrak{C}^\text{short}$, $C^\beta$ from $\mathfrak{C}^\text{long}$
compositions $C^\alpha \circ C^\beta$, $C^\beta \circ C^\alpha$ have fault distance $d$. 
Similarly, we require that for any two-qubit stabilizer channel $C^\alpha$ from $\mathfrak{C}^\text{short}$, and one-qubit stabilizer channels $C^\beta,C^{\gamma}$ from $\mathfrak{C}^\text{long}$
compositions $C^\alpha \circ (C^\beta \otimes C^\gamma)$, $(C^\beta  \otimes C^\gamma) \circ C^\alpha$ have fault distance $d$. 
Theorem~\ref{thm:stabilizer channel-composition} implies that any composition of above stabilizer channels that interleaves short and long stabilizer channels is a
stabilizer channel of fault distance at least $d$, as illustrated by Figure~\ref{fig:space-composition}.
Above argument can be extended to the stabilizer channels that have different number of input and output logical qubits, 
as well as the stabilizer channel using multiple different codes for the logical qubits.

Note that the tools of this section can be used to analyze arbitrary arbitrary compositions of channels from a set of one-time-local distance-$d$ channels, at the expense of analyzing all possible compositions of $d+1$ such channels.

\section{Algorithms for fault distance and hook faults}
\label{sec:calculating-fault-dist}

Given a stabilizer channel and a local stochastic Pauli noise model with fault set $\faultset$, one can, in principle, calculate the fault distance directly by enumerating faults to find the lowest weight non-trivial fault.
However, the time complexity of such a brute-force approach will be exponential in the number of elementary faults in $\faultset$. 
Sometimes it is possible to express the channel check matrix as a graph in which the undetectable faults correspond to cycles, allowing efficient calculation of the fault distance. 
In \sec{graphic-channel-check} we review how this \emph{syndrome graph} can arise for a stabilizer channel, and point out existing efficient algorithms developed in the 1960s-1980s (which we believe are not widely known in the quantum computing community) that can be used to put a syndrome matrix into graph form.
In \sec{fault-dist-algo} we review existing algorithms to efficiently compute the fault distance given a syndrome graph.
In \sec{hook-fault-algo} we provide new simple algorithms to identifying hazardous hook faults and brazen hook faults.
In \sec{beyond-graphic} we review a number of approaches that can be used to find or bound the channel distance even when it is not possible to express the channel check matrix as a graph.
In the same section, we also point out some new connections between computing channel check distance and binary matroids.

\subsection{Channel check and logical effect matrices}
\label{sec:channel-check}
There are a number of approaches to efficiently compute complete set of channel checks, channel check matrix
and logical effect matrix for an arbitrary stabilizer circuit
\footnote{It it common to explicitly specify checks for a fault-tolerant circuit, here we review approaches that do not require that.}
One approach is based on space-time and outcome code of a stabilizer circuit~\cite{delfosse2023spacetime,delfosse2023simulation}.
The algorithms for a general form of a stabilizer circuit and the circuit's logical action from~\cite{kliuchnikov2023} can also be used for this purpose. 

\subsection{Graph-like channel check matrices}
\label{sec:graphic-channel-check}

Consider a channel which has a channel check matrix $A_{\checkset}$, expressed over the check basis $\checkset$.
We say the channel check matrix $A_{\checkset}$ is \emph{graph-like} if there exists a choice of check basis $\checkset'$ with the same span as $\checkset$ such that each column of $A_{\checkset'}$ has Hamming weight one or two.
The \emph{syndrome graph} $G=(V,E)$ is then constructed from $A_{\checkset'}$ as follows. 
A vertex is included in the vertex set $V$ for each row of $A_{\checkset'}$, along with an additional \emph{boundary vertex}. 
For each weight-2 column of $A_{\checkset'}$, an edge connecting the two vertices corresponding to that row's support is included in the edge set $E$. 
For each weight-1 column of $A_{\checkset'}$, an edge connecting the boundary vertex to the vertex corresponding to that row's support is included in $E$.
By construction, every edge $e$ in $E$ corresponds to an elementary fault $F$ in $\faultset$ and there is a logical effect associated with every edge $r(e) = A_L v_F$, and the edge has a weight given by $\weight(F)$ as defined in \sec{stochastic-noise}.

To relate undetectable faults to cycles in the graph $G$ we recall the definition of the cycle space of a graph.
We call a subset of edges $\mathcal{E}' \subset \mathcal{E}$ a cycle if the sub-graph $(V,\mathcal{E}')$ has even-degree vertices. 
With every set of edges we associate an indicator vector $v_\mathcal{E}$.
As the symmetric difference of two cycles is again a cycle, vectors $v_\mathcal{E}$ for which $\mathcal{E}$ is a cycle form a vector space over $\mathbb{F}_2$ \emph{the cycle space of the graph}~\cite{Diestel2017}.
Importantly, there exist efficient algorithms that can determine if $A_\checkset$ is graph-like and construct the corresponding decoding graph~\cite{Fujishige1980,Bixby1980,Truemper2014,Walter2012}.

\subsection{Graph-based fault distance algorithms}
\label{sec:fault-dist-algo}

There are a number of existing approaches that can be used to compute the fault-distance given a decoding graph $G$. 
One approach is to modify Dijkstra's algorithm, which is an approach alluded to in and implemented in \cite{Gidney2021stim}.

A second approach is to generalize the algorithm for finding minimal weight logical operators of CSS surface codes from \cite{Breuckmann2017} (attributed therein to a communication with Sergei Bravyi).
In this approach, one uses the fact that each row $(A_L)_i$ of the logical effect matrix $A_L$ defines a set $E_i \subset E$ of edges of the decoding graph $G$.
A minimal weight non-trivial fault then corresponds to a cycle that has odd overlap with at least one of the $E_i$s.
Finding a minimal weight cycle that has an odd overlap with a subset of edges $E_i$ is an example of the shortest odd cycle problem \cite{DePinaThesis}.
A key technique used to solve such problems is to reduce the shortest odd cycle problem in a graph $(V,E)$
to $|V|$ shortest path problems in a related graph with twice the number of vertices and edges, constructed based on $E_i$ \cite{DePinaThesis}. 
For completeness, we provide \algo{shortest-odd-cycle} for shortest odd cycle problem in \app{shortest-odd-cycle}.
Notably, this approach generalizes slightly beyond graph-like matrices, as discussed in \ref{sec:beyond-graphic}.

The asymptotic runtime of both approaches to computing the fault-distance given a decoding graph is the same.
For $n_L$ being the number of rows of the logical effect matrix, the runtime is $O(n_L(|V||E|+|V|^2\log|V|))$.

If something is known about the syndrome graph $G$ and its non-trivial cycles, one can often speed up both algorithms.
For example, a common scenario for topological codes is that the syndrome graph is a cellulation of an underlying smooth topological manifold, in which case the cycles associated with logical fault configurations correspond to non-contractible loops, while cycles with trivial action on the channel correspond to contractible loops.
In some such cases, one can be sure that any non-trivial cycle will pass through the boundary vertex, and as such we can considerably reduce the run time of the algorithm by only considering
the cycles that pass through the vertex.
If the upper-bound on the weight of minimal cycles is known, one can also speed-up the algorithms.
We provide further details on how to speed-up the second approach in~\app{shortest-odd-cycle}.

\subsection{Hook fault algorithm}
\label{sec:hook-fault-algo}

Here we provide \algo{fault-set-in-min} that can be used to find all hazardous and brazen hook faults (as defined in \sec{hook-faults}) for a channel with fault set $\faultset$ with respect to a fault subset $\faultset_\text{sub} \subset  \faultset$.
This algorithm requires a subroutine which calculates the fault-distance of the channel with fault sets that are subsets of $\faultset$, and if that subroutine is efficient, so too is \algo{fault-set-in-min}.
When the channel check matrix of $\faultset$ is graph-like, an efficient subroutine is provided by that described in \sec{fault-dist-algo}.

\algo{fault-set-in-min} is actually quite general and we believe it could be of broad interest beyond hook faults.  
The algorithm identifies if a given weight-$w$ fault configuration $F$ is in a minimal logical fault configuration for the channel with fault set $\faultset$ with unit weights.
A subroutine which finds the fault distance $d(\faultset)$ (such as that described in \sec{fault-dist-algo}) may be constructive and find a particular minimum-weight logical fault configuration, making it possible to check if $F$ is contained in the discovered minimum-weight logical configuration.
However there could be many minimal configurations (with some containing $F$ and others not) and we would not know which the fault distance algorithm will discover.
We can overcome this however by working with an augmented fault set $\tilde{\faultset}$ which is formed from $\faultset$ by including an additional elementary fault $\tilde{F}$ which has the same syndrome and logical effect as the fault configuration $F$, but with a weight $w-\epsilon$, where $0 < \epsilon < 1$.
This ensures that if $F$ is contained in \emph{any} minimal (weight-$d(\faultset)$) fault configuration of the channel with $\faultset$, then $\tilde{F}$ will be contained in a minimum-weight logical fault configuration of the channel with $\tilde{\faultset}$, which will have weight $d(\tilde{\faultset}) = d(\faultset)-\epsilon$. 
Whereas if $F$ is not contained in a minimal fault configuration of the channel with $\faultset$, then any logical fault configuration containing $\tilde{F}$ of the channel with $\tilde{\faultset}$ must have weight greater than $d(\faultset)+1 -\epsilon > d(\faultset)$, and as such $d(\tilde{\faultset}) = d(\faultset)$.
This forms the basis of 
\algo{fault-set-in-min}.

\begin{algorithm}[H]
    \caption{Inclusion in min-weight configuration}    
    \label{alg:fault-set-in-min}
    \begin{algorithmic}[1]
    \Require Fault set $\faultset$ for a channel, 
    \Blank fault configuration $F \subset \faultset$ with weight $w$, 
    \Blank subroutine to find fault-distance.
    \Ensure True or False: $F$ is in a min-weight logical fault configuration for the channel with $\faultset$ with unit weights.\
    \State Find fault-distance $d(\faultset)$ of the channel with fault set $\faultset$, with all elements of $\faultset$ having unit weight.
    \State Define fault set $\tilde{\faultset}$ by appending to $\faultset$ an elementary fault $\tilde{F}$ with the same syndrome and logical action as $F$.
    \State Find fault-distance $d(\tilde{\faultset})$ of the channel with fault set $\tilde{\faultset}$, with all elements of $\faultset$ having unit weight, except $\tilde{F}$ which has weight $w - \epsilon$ for $0 < \epsilon < 1$.
    \If{$d(\tilde{\faultset})  = d(\faultset) - \epsilon$}
        \Return{True.}
    \Else{}
        \Return{False.}
    \EndIf 
    \end{algorithmic}
\end{algorithm}

To check if a given hook fault $F \in \faultset \setminus \faultset_\text{sub}$ is hazardous we must know whether or not $F$ is in a minimal weight non-trivial fault configuration for $\faultset$ with unit weights.
This can be checked straight-forwardly using \algo{fault-set-in-min} with inputs being the full fault set $\faultset$ and $F$ with weight 1 (since $F$ is an elementary fault in $\faultset$).

To check if a given hook fault $F \in \faultset \setminus \faultset_\text{sub}$ is brazen we need to solve three sub-problems.
Sub-problem (i): Check if $F$ is equivalent to some fault configuration in $\faultset_\text{sub}$.
This is readily solved by Gaussian elimination.
Sub-problem (ii): Find a minimal-weight configuration $F' \subset \faultset_\text{sub}$ equivalent to $F$, with weight $w'$.
Sub-problem (iii): 
Check that a minimal weight non-trivial fault configuration for $\faultset_\text{sub}$ contains the fault configuration $F'$.
Sub-problem (iii) can be checked straight-forwardly using \algo{fault-set-in-min} with inputs being the fault subset $\faultset_\text{sub}$ and $F'$ with weight $w'$ (which has weight $w'$ since it is an elementary fault in $\faultset$) as inputs.

Note that if the channel check matrix for $\faultset$ is graph-like then the efficient fault-distance algorithm described in \sec{fault-dist-algo} can be used as the subroutine in \algo{fault-set-in-min} to identify hazardous hook faults and to solve sub-problem (iii) for brazen hook faults.
Furthermore, when $\faultset$ is graph-like we can efficiently solve the additional sub-problem (ii) using a minimum-weight decoder needed to identify brazen hook faults, provided that the weight $w'$ of $F'$ satisfies $w' < d(\faultset)-1$.
As we will see, this solution of sub-problem (ii) also solves sub-problem (i) more efficiently than Gaussian elimination as a bi-product.
Since $F$ is an elementary fault in $\faultset$, it forms an edge connecting the two vertices $v_1$ and $v_2$ in the graph defined over $\faultset$, and the fault's syndrome is $\sigma = \{v_1,v_2\}$.
If one or both of these vertices are missing from the graph defined over $\faultset_\text{sub}$, then there can be no $F' \subset \faultset_\text{sub}$ equivalent to $F$.
If both $v_1$ and $v_2$ are present in the graph defined over $\faultset_\text{sub}$, one can use a standard decoder to find a minimum-weight correction with the syndrome $\sigma$. 
If no such correction exists, then there can be no $F' \subset \faultset_\text{sub}$ equivalent to $F$.
But if it does, then the minimum-weight correction is $F'$ as we now prove.
Let $F''$ be a minimum-weight correction with syndrome $\sigma$.
For $F''$ to be equivalent to $F$ we further require that their logical actions are the same.
If the weight of the undetectable fault configuration $F + F''$ is less than $d(\faultset)$, then $F + F''$ must have trivial logical action and therefore $F''$ is equivalent to $F$.
If the weight of $F+F''$ is greater than or equal to $d(\faultset)$, then the weight of $F''$ is at least $d(\faultset)-1$, and the weight of $F'$ must also be at least $d(\faultset)-1$ by virtue of $F''$ having minimal weight for a fault configuration of $\faultset_\text{sub}$ with syndrome $\sigma$.
This contradicts our assumption.

\subsection{Beyond graph-like channel check matrices}
\label{sec:beyond-graphic}

Here we discuss a number of cases where the channel check matrix is not graph-like, but where we can use additional structure to calculate or at least bound the fault distance of the channel.

{\bf Upper bounding the fault distance}.---
When the channel check matrix is not graph-like, an upper bound on the fault distance can be found by finding any non-trivial undetectable fault configuration. 
One way to achieve this is to use a BP-OSD decoding approach as pointed out in Ref.~\cite{panteleev2021}.
Here we propose another method to form upper bounds using a graph $G_\text{res}$ that can be constructed, which we call the \textit{restricted syndrome graph}.
In what follows, we explain how this works.
First we choose a check basis (the restricted syndrome graph can depend on this choice), and then build the graph $G_\text{res}$ by adding edges as normal for those columns which have Hamming weight one or two (it can beneficial to choose a basis that maximizes the number of such columns).
Then finding the fault distance from this graph $G_\text{res}$ using the algorithm in \sec{fault-dist-algo} will provide an upper bound for the fault distance of the channel.
This is because we will have found the smallest non-trivial fault given a restricted elementary fault set, which will still be non-trivial for the full fault set but may not be minimal. 

{\bf Lower bounding the fault distance}.---
We can also lower bound the fault distance by constructing a different graph, which we call the \textit{reduced syndrome graph} $G_\text{red}$.
In what follows, we explain how this works.
We start by building $G_\text{res}$ as before, but now we will also include edges for the remaining columns of the channel check matrix as follows.
Suppose that $s > 2$ is the Hamming weight of one of the remaining columns, and let $w$ be the weight of the fault that column corresponds to. 
We add a new vertex to the graph (similar to the boundary vertex), and add $s$ new edges to the graph, with each edge having weight $w/s$ and connecting to the new vertex.
These edges connect pairs of the vertices in the fault (and connect one vertex to the boundary vertex if $s$ is odd).
We assign all of the residual error from the elementary fault to one of these new edges, and set the other residual errors to be trivial.
This is done for all remaining columns to produce $G_\text{red}$.
Suppose that the true fault distance of the channel is $d$, and let $d_\text{red}$ be the fault distance of $G_\text{red}$ (which can be obtained by running the algorithm in \sec{fault-dist-algo} on $G_\text{red}$).
The key here is to note that the minimum weight non-trivial faults for the true channel can all be considered as the minimum-weight one by the graph algorithm for $G_\text{red}$, and it is proposed then it will have the correct weight.
If another fault is proposed which contains just some of the new edges without the full fault, then it must have lower weight than the true lowest weight non trivial logical fault. 
Hence we obtain a lower bound.

Checking if a check matrix is `almost' graph-like is an interesting open question.
Let us consider a simple special case of this problem.
\textbf{Given} a set of linearly independent channel checks $\Sigma$ of size $n$, 
\textbf{find} a set of $n-1$ linearly-independent checks from the span of $\Sigma$
such than the corresponding channel check matrix is graph-like. 
This problem is closely related to the problem of recognising even-cycle binary matroids 
for which an efficient algorithm has recently been discovered~\cite{EvenCycleMatroids,Guenin2023}.

{\bf If the fault set `separates' into fault-sets with known fault distances }.---
In many stabilizer channels of interest, the channel check matrix is not graph-like, but we can split the decoding problem into independent parts that are graph-like.
This happens for example due to $Y$ errors in surface codes, which result in four rather than two checks being flipped for the standard choice of checks in the surface code. 
We can treat the $Y$ as being composed of an $X$ and a $Z$, and handling $X$ and $Z$ errors separately.
We now formalize this scenario more abstractly, using an approach similar to that described in Refs.~\cite{delfosse2023,Gidney2021stim}.
We require the following from the fault set, channel check and logical effect matrices:
\begin{enumerate}
    \item The fault set $\faultset$ separates into $K+1$ disjoint fault sets $\faultset_k$, so that fault distance with respect to each of the first $K$ fault sets alone is at least $d$. 
    \item For the first $K$ fault sets the spans $\Lambda_k$ of the columns of the check matrices with respect to $\faultset_k$ are linearly independent subspaces.
    \item Each elementary fault from $\faultset_{K+1}$ triggers the same checks and has the same logical effect as a sum of elementary faults from the first $K$ fault sets. 
    The sum includes at most one elementary fault from each $\faultset_k$.
\end{enumerate}
If the above properties hold, the fault distance is at least $d$. 
This is because any undetectable fault of weight $d$ corresponds to an undetectable fault of weight at most $d$ with respect to each of $\faultset_k$.
Moreover, a non-trivial fault of weight $d$ must correspond to a non-trivial fault of weight at most $d$ with respect to one of $\faultset_k$.

{\bf If the channel check matrix is a regular matroid}.
It is possible to efficiently find the fault distance for a more general class of check matrices, namely the matrices that correspond to binary regular matroids \cite{Truemper2014}.
The matrices that are graph-like are a proper subset of those which correspond to binary regular matroids.
There exist efficient algorithms and their C++ implementations that check if a given matrix corresponds to a regular matroid \cite{Truemper2014, Walter2012}.
An efficient algorithm for solving an analog to an odd cycle problem in regular binary matroids is presented in \cite{DePinaThesis}.
We are not aware of any practical examples where channel check matrix is a regular matroid but not graph-like. 
Identifying such examples, as well as decoding algorithms for this class of check matrices is an open question.
Another promising case is for channel check matrices related to perturbed graphic matroids, which have an efficient algorithm for a shortest cycle problem \cite{Geelen2018}. 
It is an open problem to find an efficient algorithm shortest odd cycle in perturbed graphic matroids.

\section{Examples with surface codes}

\label{sec:examples}

In this section we illustrate some of the main concepts discussed in earlier sections by working through two stabilizer channel design examples.
We use the following strategy to design circuit implementations of a given logical operation with fault distance that match the code distance $d$:
\begin{enumerate}
    \item \textbf{Code-deformation sequence}.---
    First find a sequence of codes and which have at least code distance $d$ (for all logical Pauli operators which are not measured out by the channel).
    The sequence should be chosen such that the first and last rounds match the input and output codes, and the logical operators evolve as required by the logical operation as described in \sec{code-def-channel-logical-action}.
    
    \item \textbf{Phenomenological implementation}.---
    Then, one specifies a phenomenological version of the channel in terms of generator measurements for the code sequence.
    Under a simple phenomenological noise model, modifications are made to ensure a fault distance of the channel is $d$ calculated as in \sec{calculating-fault-dist} such as repeating some rounds, or adding intermediate codes to the code deformation sequence.
    
    \item \textbf{Circuit implementation}.---
    Lastly, one specifies an explicit stabilizer circuit implementation of the channel in terms of allowed instructions. 
    This circuit is built from extraction circuits that measure the generators of the phenomenological implementation, and as such there is a natural restriction of the faults of the circuit $\faultset$ to the phenomenological fault subset $\faultset_\text{sub} \subset \faultset$ in \sec{hook-faults}.
    To achieve fault distance $d$, we ensure that the extraction circuits are \textit{valid}, i.e. they act correctly when performed together in the overall circuit and also seek to eliminate brazen hook faults with respect to the phenomenological fault subset. 
    Eliminating these brazen hook faults may not be sufficient to achieve full circuit distance, in which case it can be beneficial to expand the fault subset to $\faultset'_\text{sub} \supset \faultset_\text{sub}$ to include diagonal edges in the decoding graph and eliminate all brazen faults with respect to this expanded subset $\faultset'_\text{sub}$.
    At each stage, it may be necessary to modify the Phenomenological implementation by increasing round repetitions etc.
\end{enumerate}

In \sec{surface-codes} we fix the setting for our surface code examples.
In \sec{ZZ-meas-channel} we design a stabilizer channel to non-destructively measure the logical $X_1 X_2$ operator of a pair of surface codes.
This is a standard example, but acts as a warm-up to clarify our broad approach to design fault-tolerant stabilizer channels.
Then in \sec{hadamard-channel-def}, \sec{hadamard-channel-phen} and \sec{hadamard-channel-circ}  we design a stabilizer channel to fault-tolerantly implement the logical Hadamard on a surface code patch at circuit level.
In this second example we see that a number of intricacies arise which must be handled to achieve a fault distance $d$, some of which we believe have not been noted in prior literature. 
This adds to the collection of intricacies arising in fault-tolerant circuit design \cite{gidney2023inplace,ChamberlandCampbell2022,gidney2023bacon,McEwen2023,gidney2023cleaner,geher2023tangling}.

\subsection{Surface code stabilizer channels}
\label{sec:surface-codes}

\begin{figure}[h]
    (a)\includegraphics[width=0.45\linewidth]{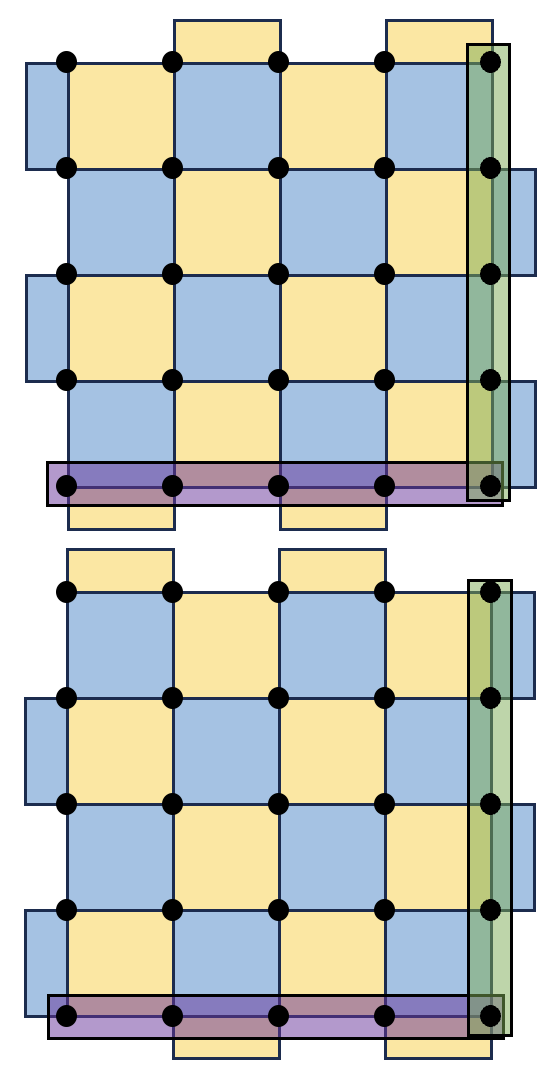}
    (b)\includegraphics[width=0.44\linewidth]{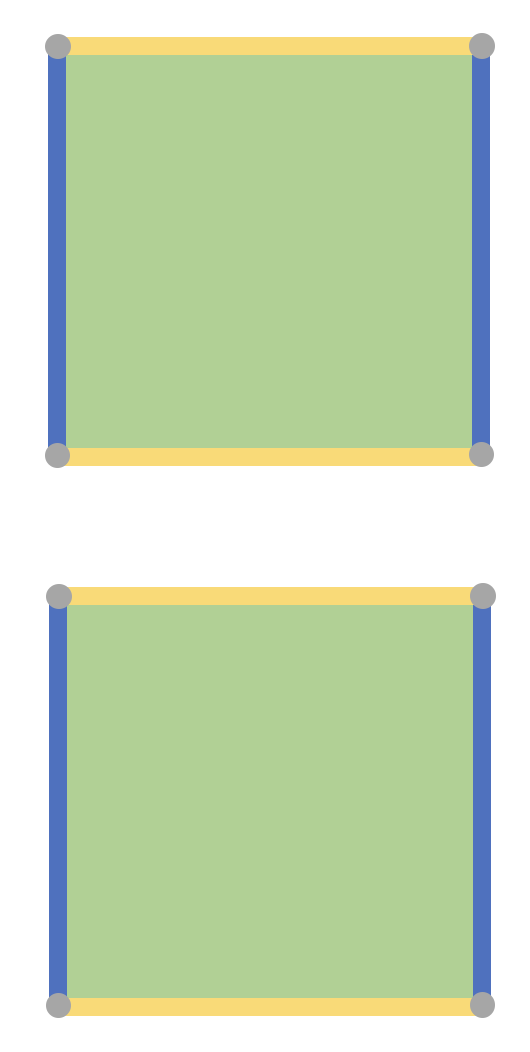}
    \caption{
        (a) a pair of distance $d = 5$ surface code patches with slightly different boundary implementations. 
        Qubits are placed on vertices, and $X$-type ($Z$-type) stabilizer generators are represented by plaquettes colored in blue (yellow).
        We consider the logical action with respect to the standard basis, with a horizontal purple $X$ string is a logical $X$ representative, and a vertical green $Z$ string is a logical $Z$ representative. 
        (b) The topological features of the two patches are the same.
        We represent the two types of boundary with yellow (blue) lines, and grey dots at the intersection of boundary types.
        We provide a brief review of the topological viewpoint of surface codes in \app{topo-viewpoint}.
    }
    \label{fig:surfacecode}
\end{figure}

By the term ``surface code'', we refer to a broad class of quantum error correcting codes defined on qubits embedded in two-dimensional surfaces with a pattern of local stabilizer generators tiling the surface.
There are many variants due to the range of surfaces that can be chosen, a choice of how the qubits are embedded in the surface, and some local freedom in the tiling both in the bulk, but also along the boundaries.
Among the most familiar surface codes are the family of square-patch surface codes depicted in \fig{surfacecode}, which we will call the \emph{standard patch}.
In this section, we are interested in stabilizer channels that implement logical operations on surface codes.
We will assume the following throughout:
\begin{itemize}

    \item \textbf{Qubit layout}.---
        Physical qubits can be placed on the sites of two lattices on $\mathbb{R}^2$:
        the \textit{data lattice} consisting of integer points $(i,j)$,
        and the \textit{ancilla lattice} consisting of half-integer points $(i+\frac{1}{2}, j+\frac{1}{2})$.
        In most figures we just show the data lattice.
        
    \item \textbf{Input and output codes}.---
        The input and output codes for the channels we consider are a pair of surface code patches as shown in \fig{surfacecode}(a). 
        All of our figures will be shown for the $d=5$ case but generalize to arbitrary odd integer distance $d$.
        
    \item \textbf{Instantaneous stabilizer codes}.---
        When viewed as a code deformation channel, instantaneous stabilizer codes are surface code patches with each check being weight-four or weight-two restrictions of an alternating checkerboard pattern.
        
    \item \textbf{Phenomenological instruction set}.---
        We allow joint-Pauli preparations and measurements supported on sets of up to four data qubits contained within a unit square.
        We also allow for the simultaneous shift of all qubits in a region by a uniform displacement vector, even if the shifted region overlaps with the original region.
        
    \item \textbf{Circuit instruction set}.---
        We allow the circuit operations and faults specified in \sec{stochastic-noise}, with two-qubit Clifford unitaries between nearest-neighbor data-ancilla pairs.
        We further assume that any generators measured by circuits are done so using a single ancilla qubit and a CNOT circuit as shown in \fig{def-circuits} (and discuss hook faults in the same figure).
        
    \item \textbf{Fix-ups}.---
        In this section we follow the convention that stabilizer channels are circuits with all conditional Paulis at the end as in~\fig{enc-and-dec}.
    
    \item \textbf{Channel checks}.---
        We construct the check set in the phenomenological implementation (and use the same checks for the circuit implementation) in such a way that the fault sets separate into graph-like decoding matrices as described in \sec{beyond-graphic}. 
        For each Pauli $P$ measured in a phenomenological round we consider both the previous and the next round.
        If $P$ is a product of Pauli operators measured in the previous round we add the corresponding check.
        If $P$ is a product of Pauli operators measured in the next round we add the corresponding check.
        We do not add the same check to the set of checks twice.
        When constructing the checks for the first and last rounds respectively, we follow the same principle but use the input syndrome in place of a round before the first, and the pre-phase-fixed output syndrome in place of a round after the last~\footnote{Recall that we follow the convention where the conditional Paulis are placed between the last round and the output syndrome of a stabilizer channel (as in~\fig{enc-and-dec}).
        To obtain the pre-phase-fixed output syndrome, we commute the output syndrome past the conditional Paulis to find the values they would have immediately after the final round.}.

    \item \textbf{Noise}.---
        We assume the standard circuit noise model in \sec{stochastic-noise}, with unit weights for faults. 
        The fault set for the phenomenological noise is bit-flips of measurement outcomes and i.i.d. Pauli noise on qubits between rounds. 
        Note that these phenomenological faults can always be considered as a strict subset of the faults of the circuit noise model in accordance with the hook fault definitions in \sec{hook-faults}.
    
\end{itemize}

\begin{figure}
    (a)\includegraphics[width=0.29\linewidth]{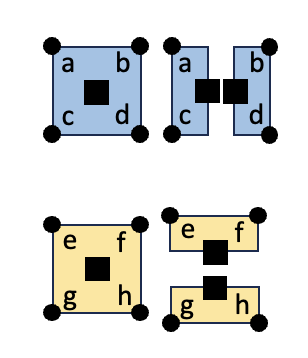}
    (b)\includegraphics[width=0.24\linewidth]{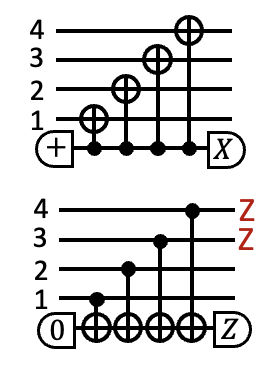}
    (c)\includegraphics[width=0.32\linewidth]{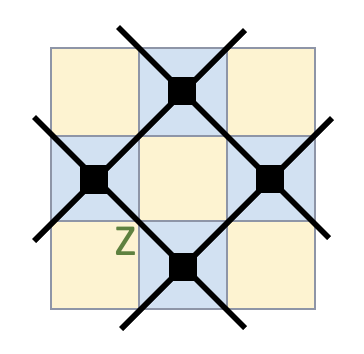}
    (d)\includegraphics[width=0.28\linewidth]{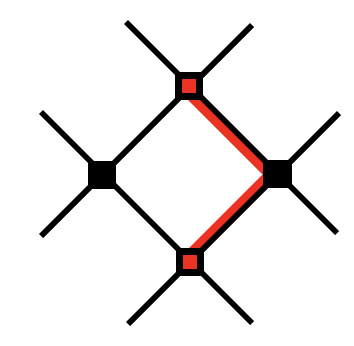}
    (e)\includegraphics[width=0.28\linewidth]{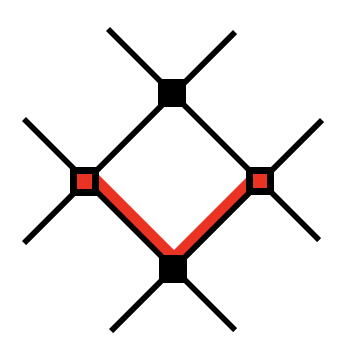}
    (f)\includegraphics[width=0.28\linewidth]{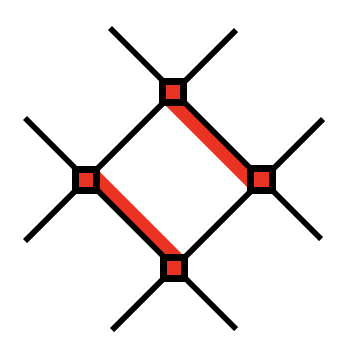}
    (g)\includegraphics[width=0.45\linewidth]{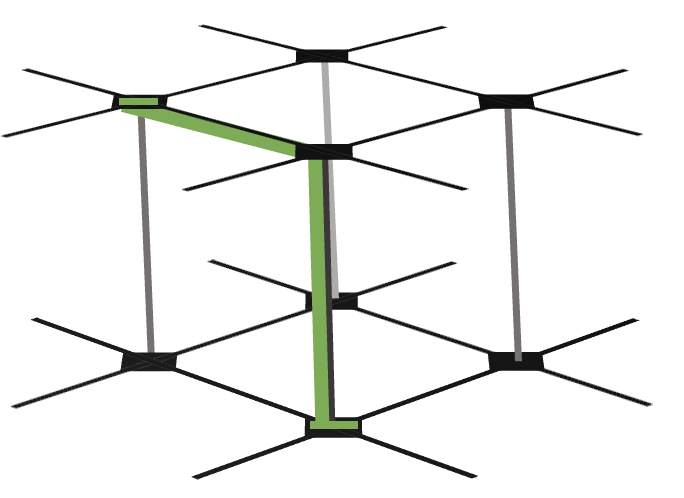}
    (h)\includegraphics[width=0.45\linewidth]{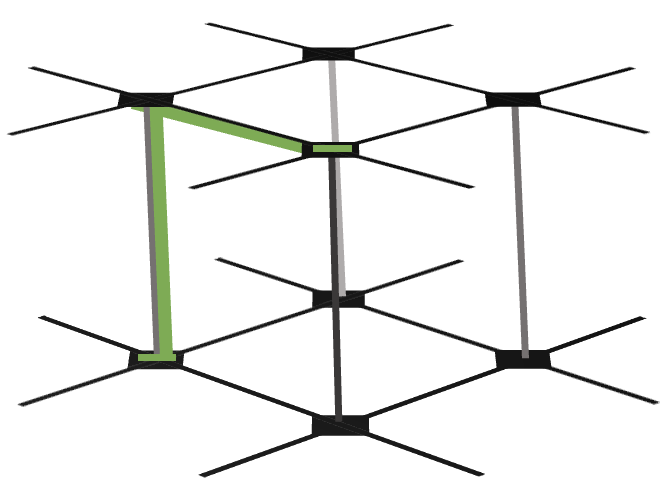}
    \caption{
    (a,b) The circuits used to measure the $X$-type and $Z$-type generators are specified by an assignment of the letters $a,b,c,d, e, f, g, h$ in (b) to a time ordering as in (a).
    Some single faults in these circuits can propagate into weight-two Pauli errors on the data qubits, such as the $ZZ$ error indicated in red.
    We call these space-like hook faults with respect to the phenomenological fault set.
    (c) A section of the decoding graph for $Z$ errors.
    (d,e,f) Depending on the relative time ordering of $e,f,g,h$, the $ZZ$ error in (b) produces different patterns of space-like hook faults with respect to the phenomenological fault set in the $Z$ error decoding graph.
    Specifically, (d) arises if $\{f,h\} =\{3,4\}$, (e) arises if $\{g,h\} =\{3,4\}$ and (f) arises if $\{g,f\} =\{3,4\}$.
    (g,h) Depending on the relative time ordering of $d$ and $a$, a single-qubit $Z$ fault as shown in green in (c) between time steps $d$ and $a$ results in different space-time hooks.
    Specifically, (g) arises if $d<a$, while (h) arises if $d>a$.
    }
    \label{fig:def-circuits}
\end{figure}

\subsection{XX measurement}
\label{sec:ZZ-meas-channel}

In this warm-up example we apply some of the techniques we have introduced to carry out the three-step design flow to construct a fault-tolerant implementation of the logical $X_1 X_2$ measurement of a pair of surface code patches. 
In this example, we reproduce the well-known lattice-surgery implementation of this operation.

\begin{figure}
    \includegraphics[width=1.0\linewidth]{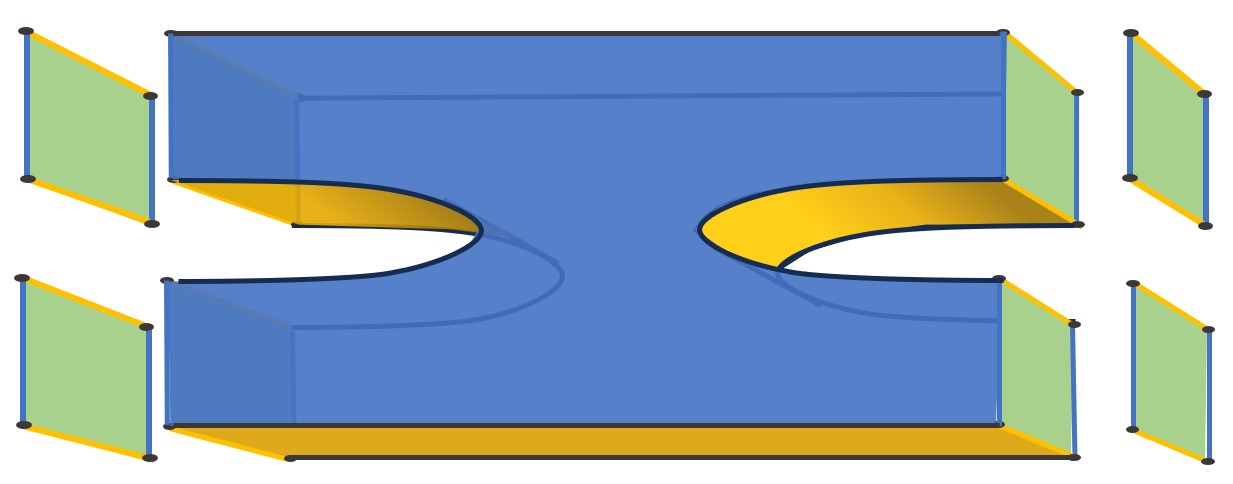}
    \caption{
    A topological representation of a logical $X_1 X_2$ measurement of a pair of adjacent surface code patches, with time moving from left to right using the lattice surgery operations patch-merge followed by patch-split.
    }
    \label{fig:ZZ-measurement-topo}
\end{figure}

\begin{figure}
    \includegraphics[width=1.0\linewidth]{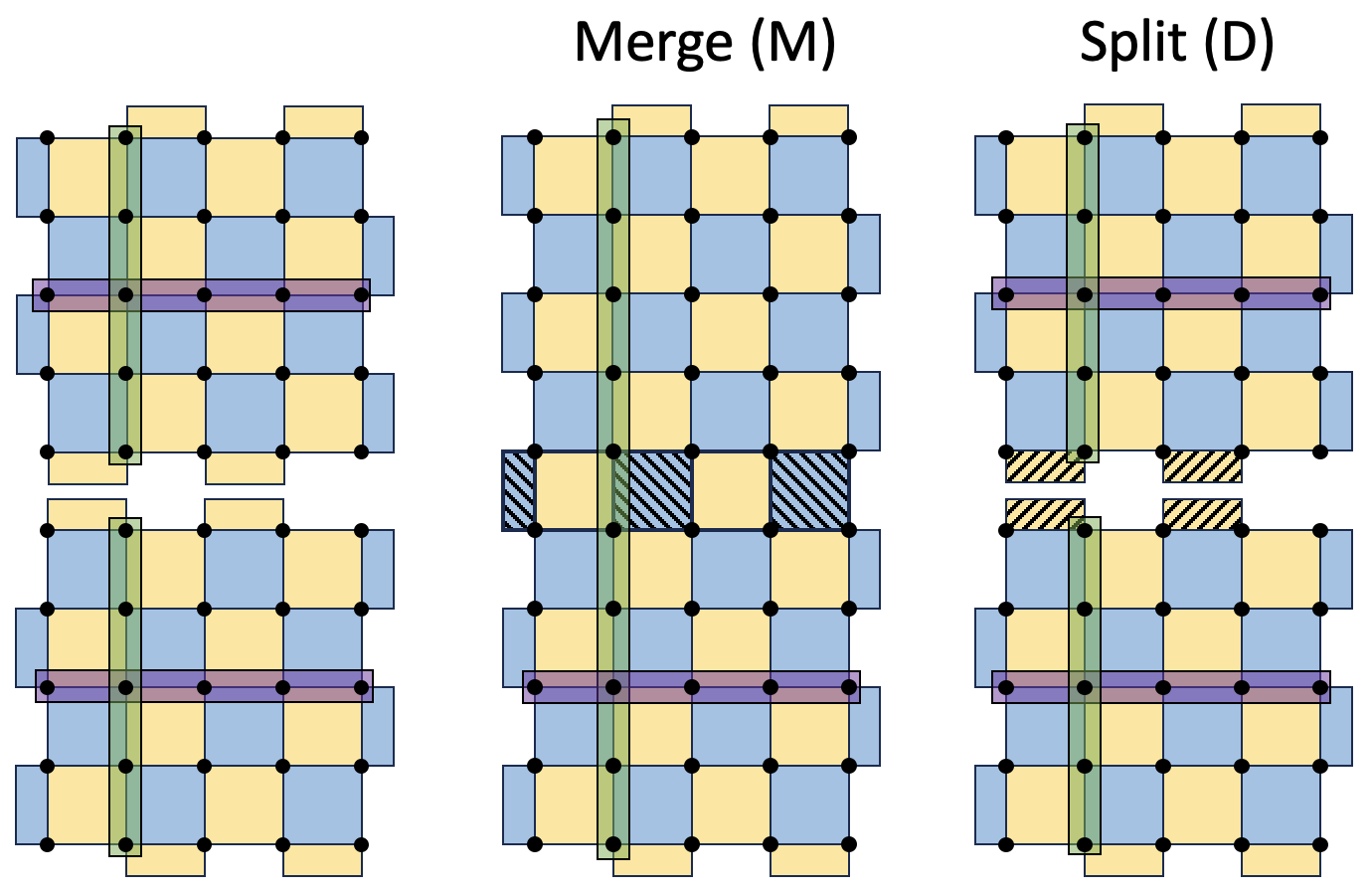}
    \caption{
    A sequence of three instantaneous stabilizer groups in a code deformation sequence that implements the same logical channel as in \fig{ZZ-measurement-topo}. 
    The input code is depicted on the left, with a stabilizer group $\stab_\text{in}$ corresponding to the stabilizer groups of the two input surface code patches. 
    We include $X$ and $Z$ logical operator representatives in purple and green (using the freedom in which representatives we choose to avoid overlap for visual clarity).
    The merge round results in the middle stabilizer group, $\stab_\text{M}$ (with M for merge).
    The split round results in the right stabilizer group, $\stab_\text{D}$, which contains the output stabilizer group $\stab_\text{out} =\stab_\text{in}$ (along with the logical $X_1 X_2$ operator), which again corresponds to the stabilizer groups of the two input surface code patches.
    For each deformation round, we shade those stabilizer generators which were not present in the previous instantaneous stabilizer group.
    The logical $X_1 X_2$ outcome $f(O)$ of the channel is given by the parity of the outcomes of the shaded blue generators in the merge round.
    }
    \label{fig:ZZ-measurement-code-def}
\end{figure}

\textbf{Code deformation sequence}.---
We begin with a topological viewpoint of how this logical gate can be applied as shown in \fig{ZZ-measurement-topo} (an understanding of the topological viewpoint of surface codes will not be necessary to understand the results in this section, but we include a brief overview in \app{topo-viewpoint} for completeness).
To form an explicit code-deformation sequence from this topological picture, we consider three time-slices which correspond to the sequence of surface codes shown in \fig{ZZ-measurement-code-def}.
In this code deformation sequence, an input pair of surface code patches (with stabilizer group $\stab_\text{in}$) are first merged together into a single surface code patch (with stabilizer group $\stab_M$), and then split once more.
Note that each code in this sequence has code distance $d$.

The logical action of this code deformation sequence is straight forward to calculate using the approach in \sec{code-def-channel-logical-action}.
The input and output code basis are the standard choices of logical $X$ and $Z$ for both patches as illustrated in \fig{ZZ-def-logical-action}.

\begin{figure}
    \includegraphics[width=0.9\linewidth]{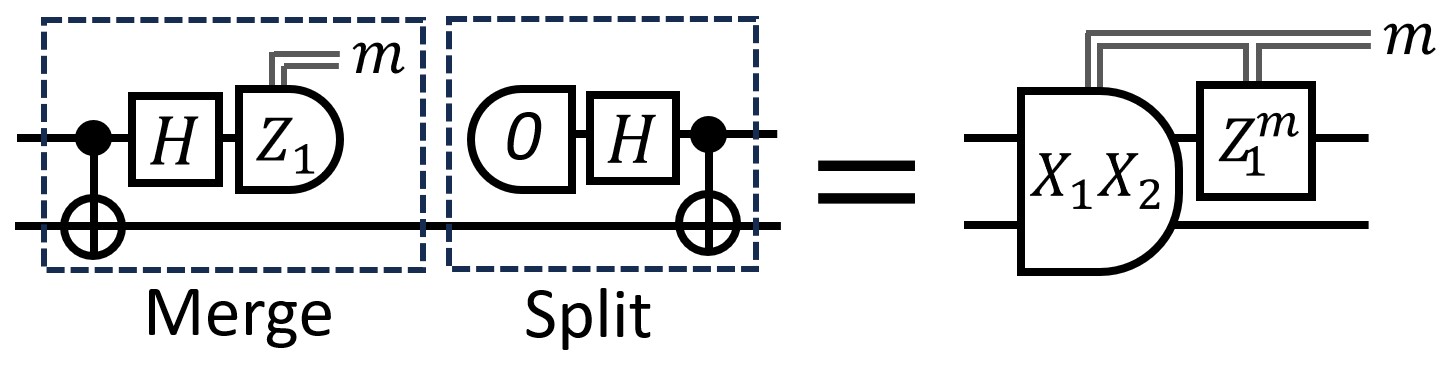}
    \caption{
    The logical action of the code deformation sequence depicted in \fig{ZZ-measurement-code-def} is equivalent to a logical $X_1 X_2$ measurement up to a conditional Pauli (which ensures that the output state is stabilized by logical $X_1 X_2$).
    Let the outcome vector of the circuit be $O$.
    The logical outcome $f(O)$ of $X_1 X_2$ is obtained from the parity of the set of shaded $X$-type checks in the measurement of $\stab_M$.
    }
    \label{fig:ZZ-def-logical-action}
\end{figure}

\textbf{Phenomenological implementation}.---
The phenomenological channel consists of the following rounds:
\begin{enumerate}
    \item $d$ rounds: measure the generators of $\stab_M$.
    \item $1$ round: measure the generators of $\stab_\text{out}$.
\end{enumerate}

It is well-known that the channel check matrix in this example has the property described in \sec{beyond-graphic} whereby it separates into two graph-like matrices.
Specifically, this occurs when we separate faults into (i) $X$ Paulis and $Z$-check flips producing the $X$ decoding graph $G_X$, and (ii) $Z$ Paulis and $X$-check flips, forming the $Z$ decoding graph $G_Z$ as shown in \fig{ZZ-def-decoding-graphs}.
Each of the remaining faults are $Y$ Paulis described as the combination of precisely one $X$ Pauli fault in $G_X$ and one $Z$ Pauli fault in $G_Z$.

\begin{figure*}
    (a)\includegraphics[width=0.8\linewidth]{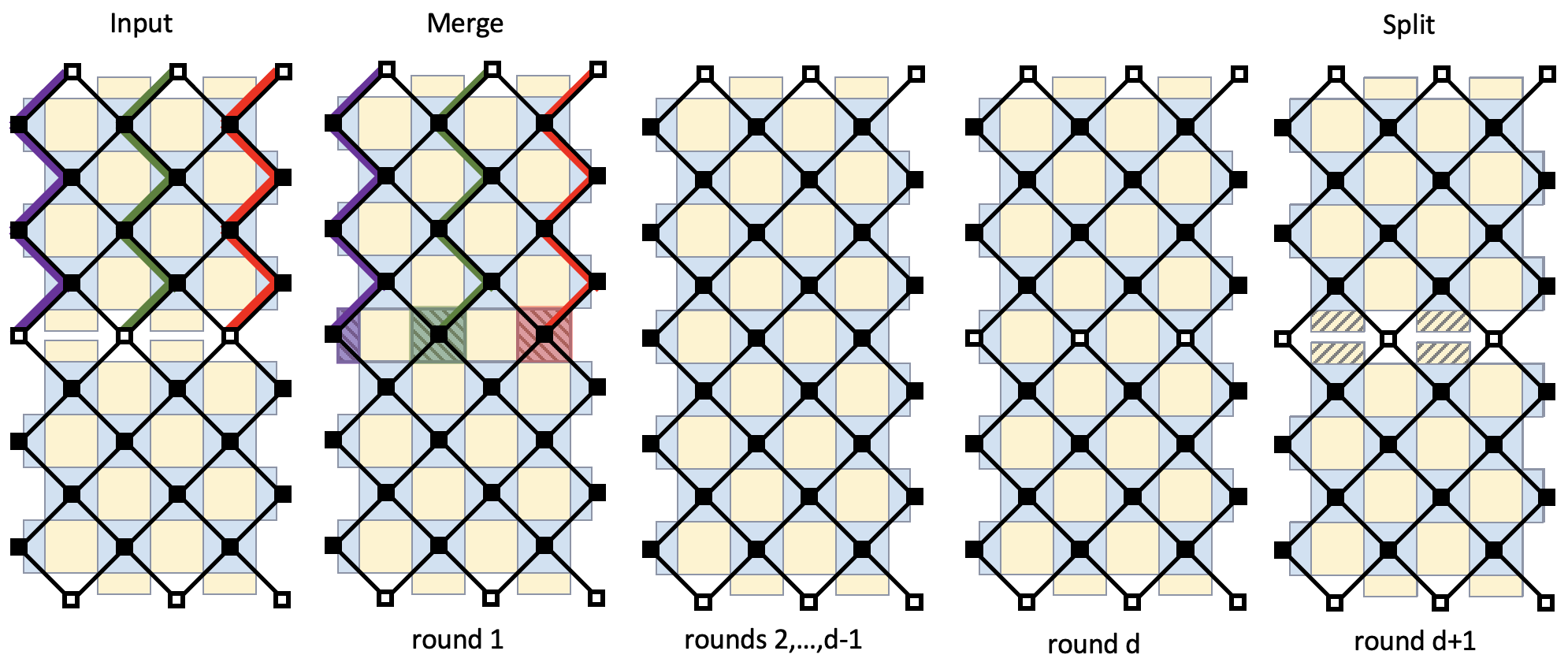}
    (b)\includegraphics[width=0.82\linewidth]{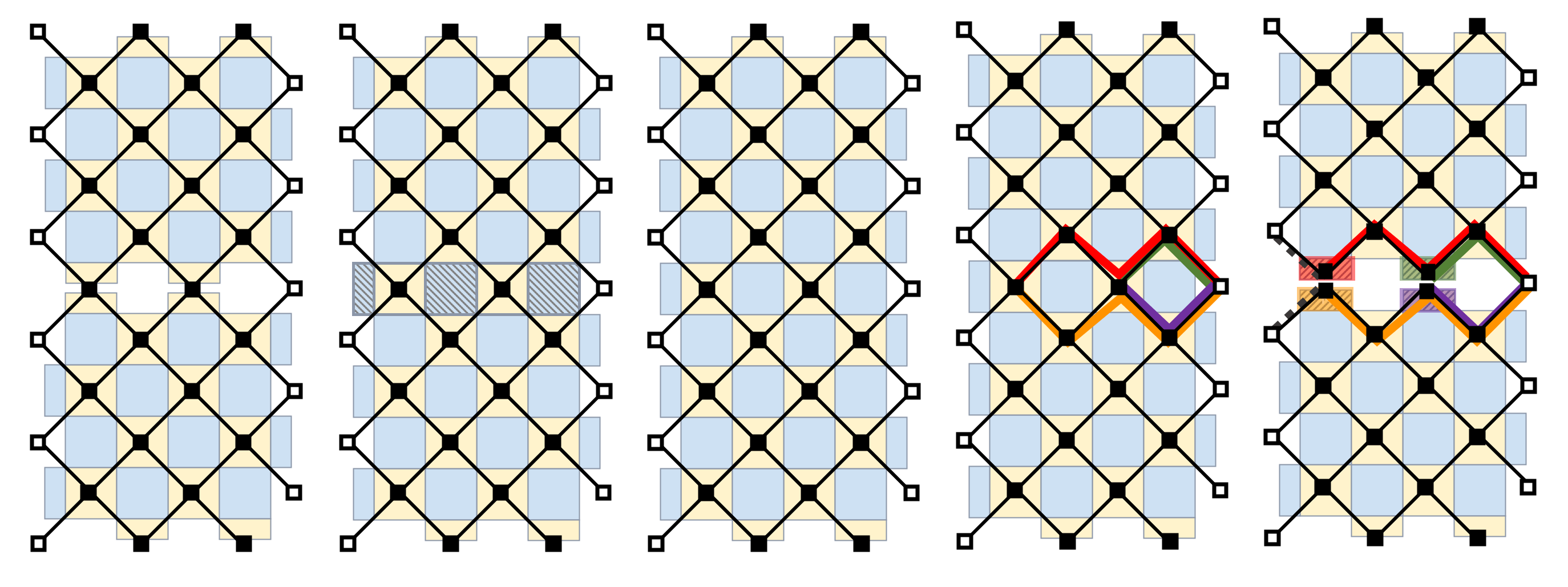}
    \caption{
    (a) and (b) show time-slices of the $Z$ decoding graph $G_Z$ and the $X$ decoding graph $G_X$ respectively for each step in the phenomenological implementation of the XX measurement channel.
    Each solid vertex corresponds to a check, and each empty vertex is identified with the boundary vertex in the graph. 
    The solid vertices of the time slice of the decoding graph shown on top of round $j$ correspond to checks that are parities of measurement outcomes in rounds $j$ and $j+1$.
    Each black edge corresponds to a fault caused by a single X or Z fault on the qubit at the center of that edge after round $j$ and before round $j+1$.
    The checks associated with the vertices at the ends of any edge are flipped when the fault corresponding to that edge occurs.
    Each set of colored edges of the same color (which can span multiple layers) shows a Pauli fault configuration equivalent to a single measurement outcome flip highlighted with the same color (as in \sec{flip-fault-effect}).
    We use the colored edge set to represent the edge associated with that measurement outcome flip in the decoding graph. 
    To form full decoding graphs from these time slices, we also include a vertical edge between pairs of filled vertices that have the same location in consecutive time slices.
    }
    \label{fig:ZZ-def-decoding-graphs}
\end{figure*}

To see why the first code deformation step is repeated $d$ times to form the phenomenological implementation of the channel, it is important to understand the lowest-weight non-trivial logical fault configurations of this phenomenological channel.
Clearly, the space-like logical operators of the instantaneous stabilizer groups (indicated on \fig{ZZ-measurement-code-def}) can result in logical faults.
Simply by choosing the code patches to be distance $d$ as shown ensures those logical faults are all of size at least $d$.
There are time-like logical faults too: a repeated outcome flip of any one of the shaded checks in $\stab_M$ of \fig{ZZ-measurement-code-def} corresponds to a string of vertical edges in \fig{ZZ-measurement-code-def} connecting the boundary vertex to itself and will not be seen by any channel check and will result in the outcome $\mathfrak{f}(O)$ from being flipped.
These time-like logical faults require us to use $d$ repeated rounds -- if fewer repetitions were made then the undetectable logical outcome flip could occur with fewer than $d$ faults.

Looking at the decoding graphs $G_X$ and $G_Z$ in \fig{ZZ-def-decoding-graphs}, one may wonder why the high-weight colored faults that arise due to flips of the random outcomes do not result in a reduction of the distance.
For $G_Z$, the merge step introduces vertical weight-($d$) strings (red, green, purple) - these lie along minimum-weight logical $Z$ operators of the stabilizer code after that time step.
However, these minimum-weight logical $Z$ operators have weight $2d$, and as such these faults do not reduce the distance below $d$.
The reason that the flips of the shaded measurement outcomes in round $(d+1)$ of \fig{ZZ-def-decoding-graphs} do not reduce the fault distance of $G_X$ is more subtle.
Take for example the weight-4 undetectable fault configuration formed from the red and orange faults and the two dashed edges in \fig{ZZ-measurement-code-def}.
A careful analysis of this fault configuration shows it corresponds to the application of a stabilizer immediately before round $d$, and then the application of a logical $X_1X_2$ Pauli operator immediately before round $d+1$.
However, this logical operator has trivial logical action as defined in \sec{3b}, since the logical $X_1X_2$ operator is stabilized by the output state (given that the channel measures this operator).

\textbf{Circuit implementation}.---
The circuit-level implementation of the channel requires additional ancilla qubits as shown in \fig{ZZ-measurement-hooks}.
The channel consists of the same rounds as the phenomenological channel, but with the weight-two and weight-four Pauli measurements replaced by extraction circuits as described in \fig{def-circuits}.
As noted earlier in this section, we (i) require that these circuits are valid, and (ii) avoid brazen hook faults with respect to the phenomenological fault set. 

To ensure the circuits are valid amounts to careful checks on the ordering of any interleaved CNOTs such that the Pauli measurements in the code deformation sequence are correctly implemented in the absence of faults. 
To avoid brazen hook faults one could in principle check each circuit fault individually using \algo{fault-set-in-min}.
However, this assumes we already have a circuit in mind to test.
It can instead be useful to motivate good candidate circuits by ensuring they avoid some known patterns that result in brazen hooks.
One such pattern is to ensure that time-like hook faults of the type shown in \fig{def-circuits} do not have large overlap with the minimum-weight logical operators of the instantaneous stabilizer groups of the channel as discussed in \fig{ZZ-measurement-hooks}.
As is already known~\cite{ChamberlandCampbell2022}, we find that the standard extraction circuits for surface codes avoid these known patterns.

\begin{figure}
    \includegraphics[width=1.0\linewidth]{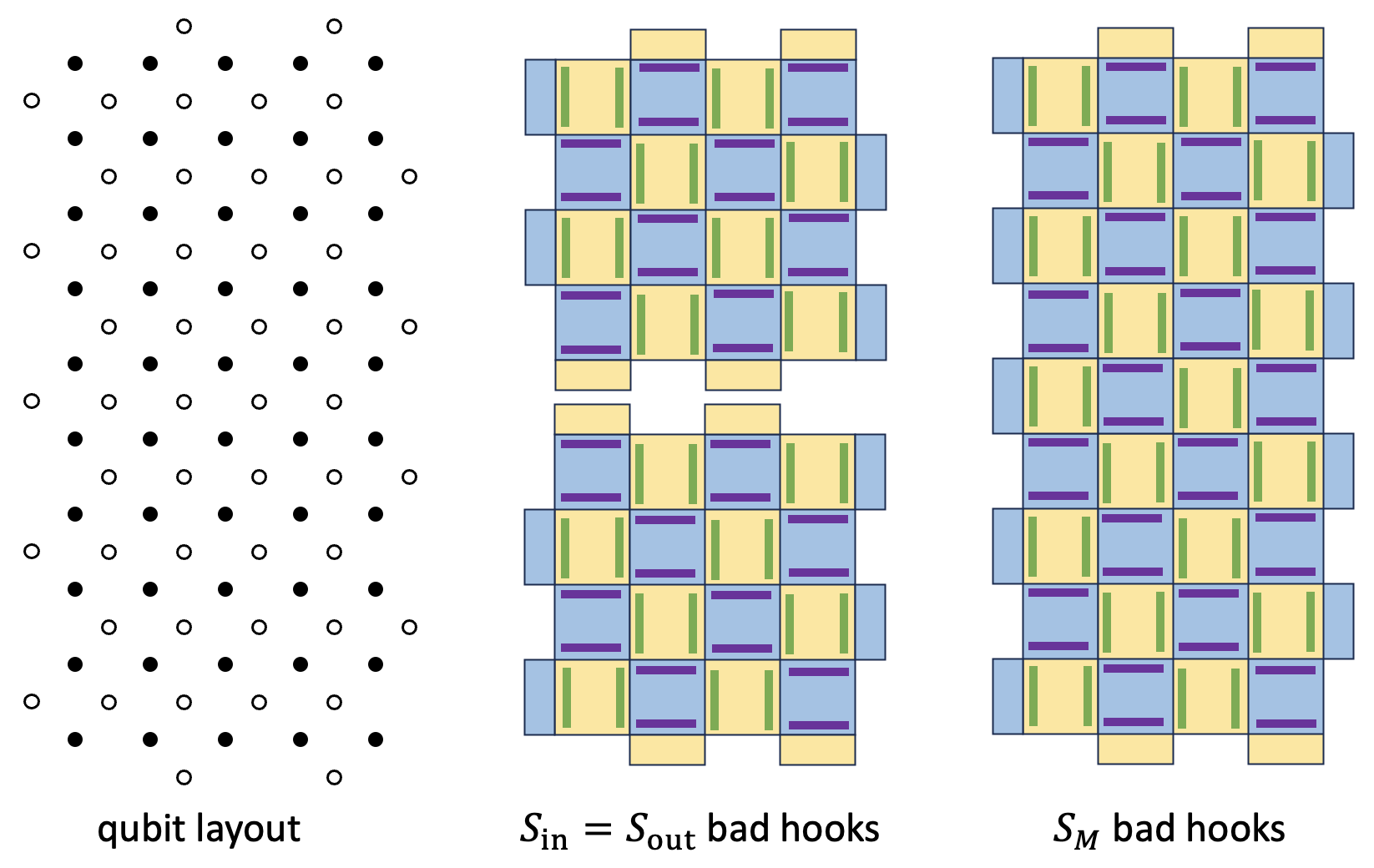}
    \caption{
    We include qubits on the ancilla lattice (white) for the circuit implementation of the logical XX measurement.
    Given the structure of the  two stabilizer codes that appear in the code deformation sequence, there are a number of brazen hook faults that must be avoided.
    Specifically, all horizontal X-type and vertical Z-type hook faults are brazen and must be avoided throughout the channel.
    This is achieved by the standard surface code CNOT ordering~\cite{Fowler2012}, which corresponds to the labelling $(a,b,c,d) = (1,3,2,4)$ and $(e,f,g,h) = (1,2,3,4)$ in \fig{def-circuits}.
    }
    \label{fig:ZZ-measurement-hooks}
\end{figure}

The checks are the same as for the phenomenological channel. 
Given these choices, the resulting channel check matrix once more separates into graph-like parts $G_Z$ for the subset of faults that only flip $X$-checks, and $G_X$ for the subset of faults that only flip $Z$-checks, with all other faults decomposing into these types.
The fault distance of this circuit implementation of the channel is $d$.

\subsection{Hadamard: deformation sequence}
\label{sec:hadamard-channel-def}

We begin our design of a fault-tolerant implementation of a logical Hadmard gate by proposing a code deformation sequence based on a topological viewpoint of this gate, which is very natural as shown in \fig{hadamard-topo}.
The code is rotated, such that the boundaries where the logical $X$ and $Z$ string operators terminate are swapped, and then a transverse Hadamard is applied to the data qubits which exchanges all $X$ and $Z$ stabilizers and logical operators. 
This leaves the code oriented as it was initially, but with the logical $X$ operator having been mapped to the logical $Z$ operator and vice-versa, as is required for the logical Hadamard.

\begin{figure}
\includegraphics[width=1.0\linewidth]{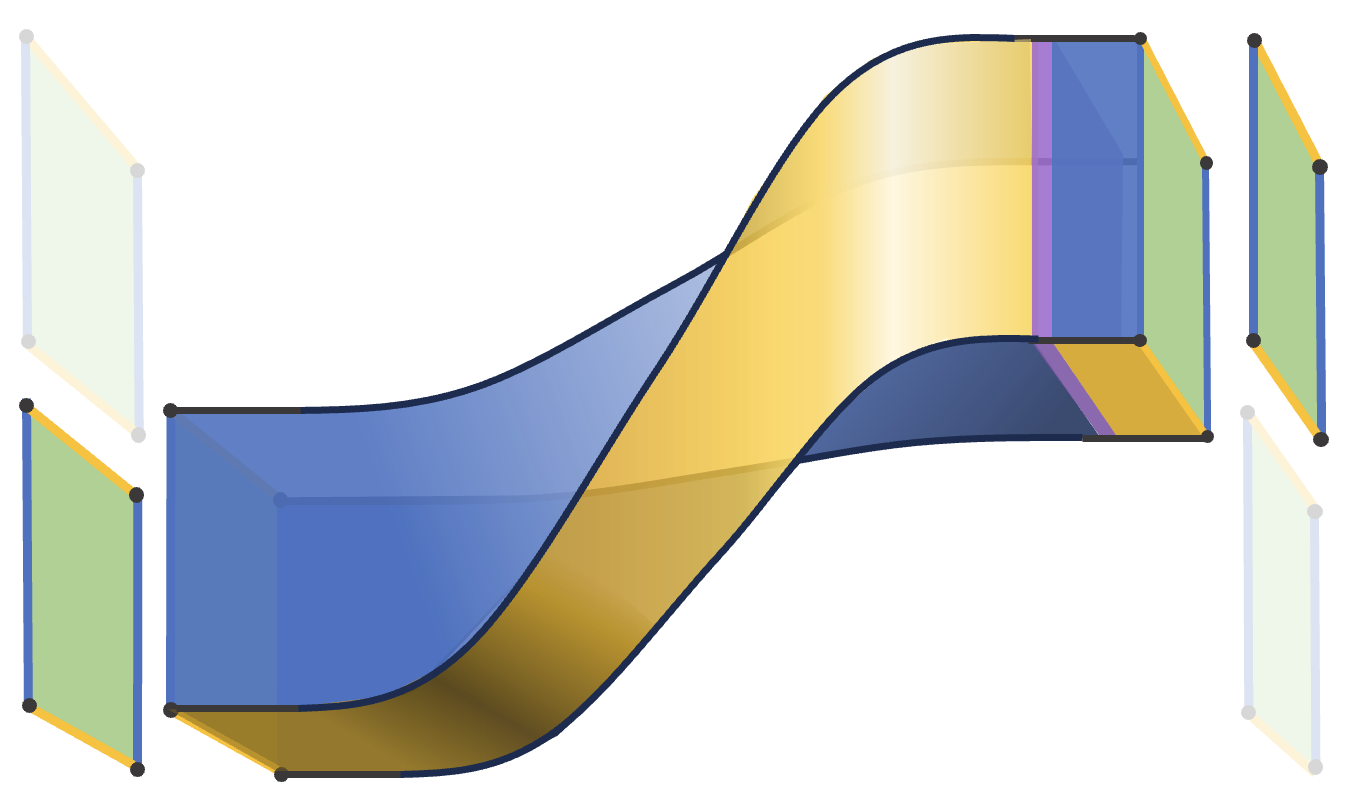}
    \caption{
    A topological representation of a the logical Hadamard on a surface code patch, with time moving from left to right.
    We focus on the ``Hadamard + move'', where the input and output surface codes are on adjacent patches.
    }
    \label{fig:hadamard-topo}
\end{figure}

In \fig{hadamard-code-def} we give a code deformation sequence which implements the logical Hadamard, or more specifically, the logical Hadamard along with a patch move. 
This can be considered an explicit implementation of the topological viewpoint of the channel in \fig{hadamard-topo}.
We note that if one requires the patch of the output to be in the same location as the input patch, one can use a logical move operation. 
We focus on the "Hadamard plus move" here because it has a lower time cost by avoiding the patch move and therefore may be beneficial in practice.

\begin{figure*}[h]
    \includegraphics[width=1.0\linewidth]{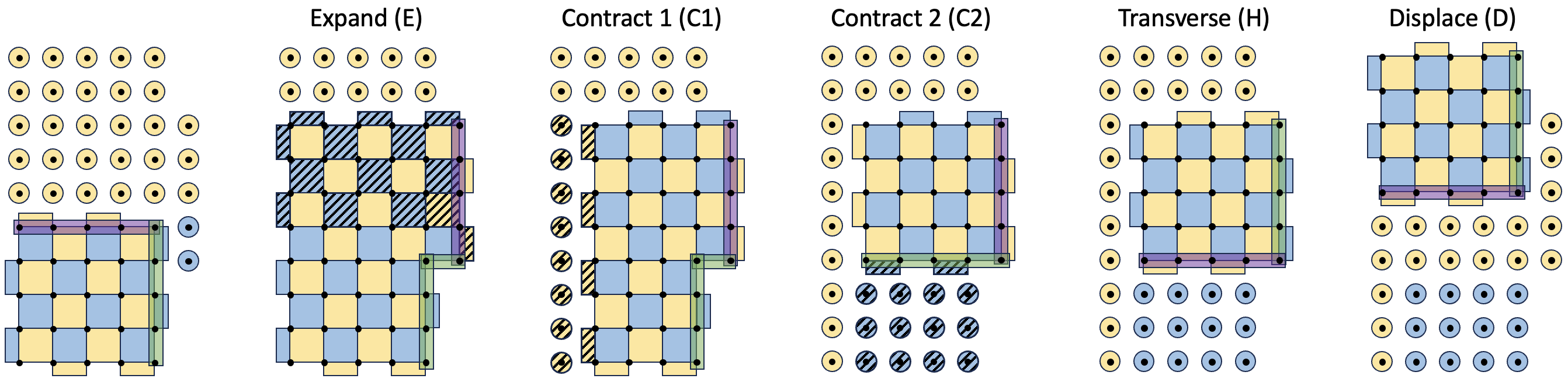}
    \caption{
    \textbf{Code deformation sequence:}
    A sequence of six instantaneous stabilizer groups in a code deformation sequence that implements the logical Hadamard. 
    For each, we shade those stabilizer generators which were not present in the previous instantaneous stabilizer group.
    The first and last stabilizer groups contain the stabilizer groups for the input and output code patches.
    Note that in this implementation, the logical qubit is placed next to its starting point.
    The penultimate stabilizer group is obtained by a transverse Hadamard on the data qubits in the patch (which exchanges stabilizer generator types).
    In the last code deformation step, a unitary move operation is applied.
    }
    \label{fig:hadamard-code-def}
\end{figure*}

\begin{figure*}[h]
    (a)\includegraphics[width=0.31\linewidth]{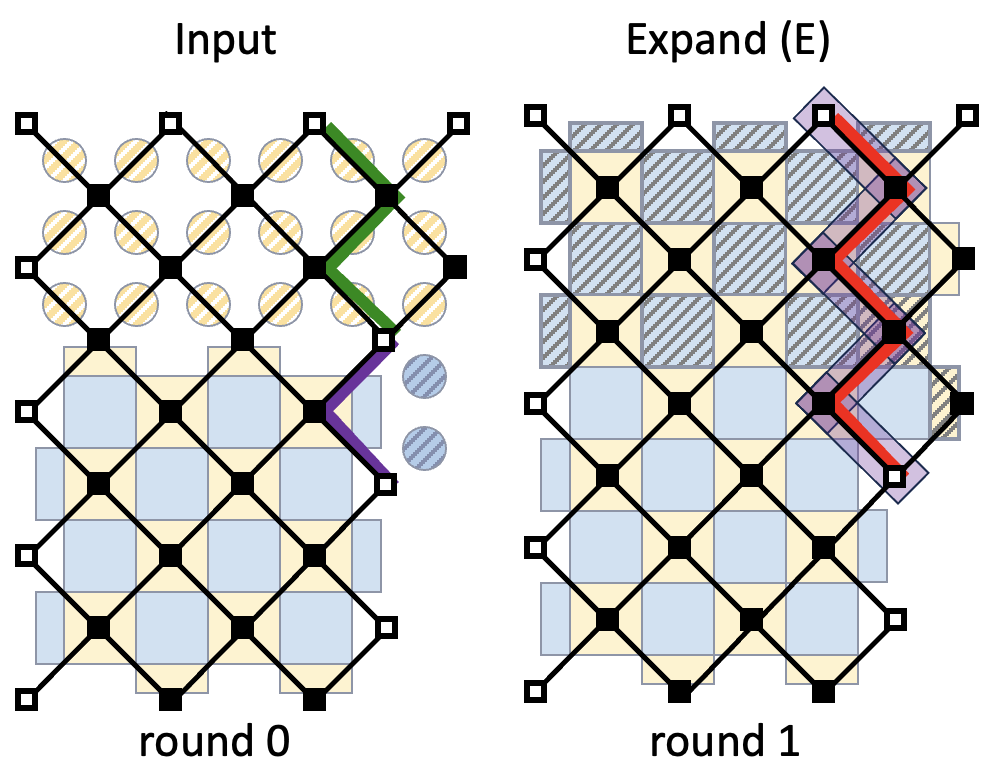}
    (b)\includegraphics[width=0.64\linewidth]{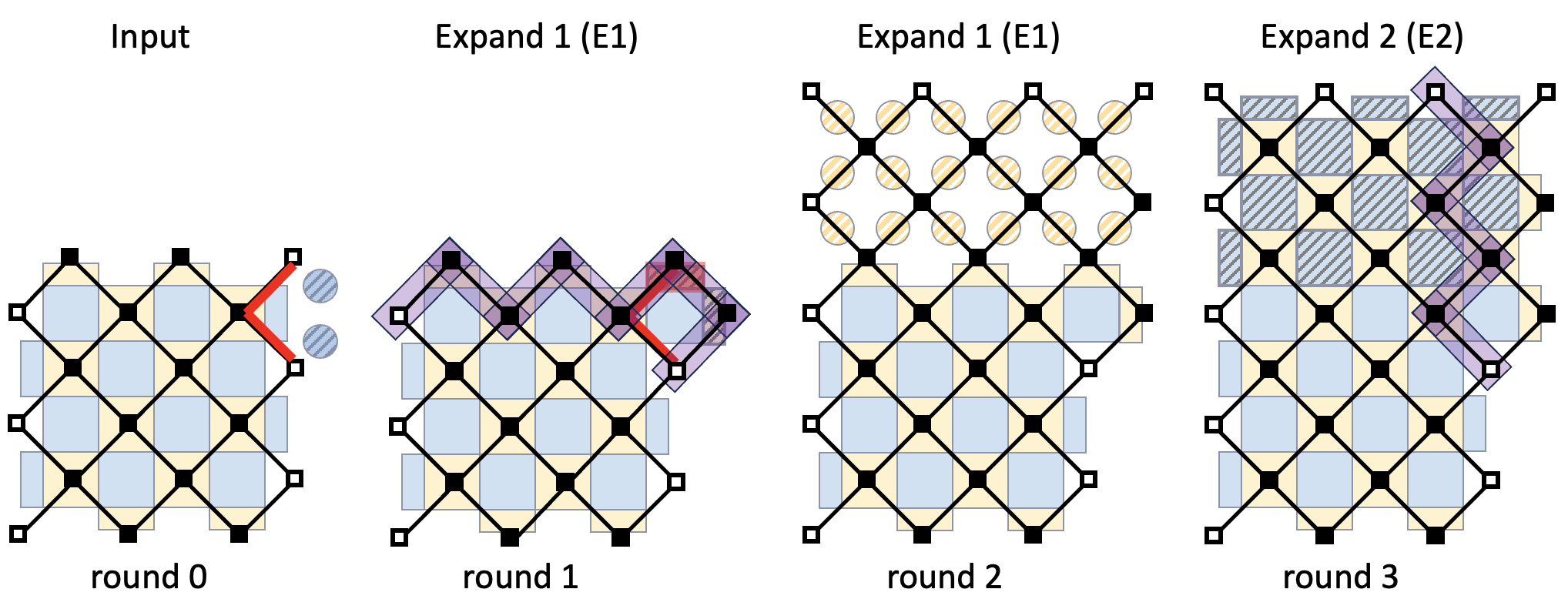}
    \caption{
    \textbf{Phenomenological implementation:} 
    To achieve a fault distance $d$ in the phenomenological implementation, we separate the expansion stage E into two substages E1 and E2. 
    (a) The reason can be seen from the round 0 and round 1 time-slices of the $X$ decoding graph $G_X$.
    Here, we use the same conventions to produce decoding graph time slices as in \fig{ZZ-def-decoding-graphs}, taking the measurement of the input code to be round 0 and both rounds 1 and 2 to be measurements of the `Expand (E)' generators.
    The fault highlighted in green in the first time slice is an undetectable non-trivial fault. 
    This is because the green fault combined with purple fault configuration (which is a stabilizer) is equivalent to the red fault configuration in the panel `Expand (E)', which is a logical operator with non-trivial logical action.
    (b) We see that the time-slices of the $X$ decoding graph for the two substages E1 and E2 avoid this issue. 
    Potentially problematic random outcome flip in E1 is equivalent to 
    weight two Pauli fault that does not overlap with a minimum weight logical operator.
    }
    \label{fig:Hadamard-phenom-extra-step}
\end{figure*}

\subsection{Hadamard: phenomenological implementation}
\label{sec:hadamard-channel-phen}

When we build the phenomenological implementation, we might at first hope that, like in our logical $XX$ measurement example, we simply need to ensure that some of the codes formed in the code deformation sequence we found in \sec{hadamard-channel-def} need to be repeated in order to achieve a fault distance $d$.
However, this does not work --- as is discussed in \fig{Hadamard-phenom-extra-step}, the steps of the code deformation sequence introduce a weight $d-2$ undetectable non-trivial fault configuration.
This can be avoided by including an additional code deformation step, by splitting the expand step into two sub-steps as shown in the same figure.

The phenomenological channel consists of the following rounds:
\begin{enumerate}
    \item $2$ rounds: measure the generators of $\stab_\text{E1}$.
    \item $d$ rounds: measure the generators of $\stab_\text{E2}$.
    \item $d$ rounds: measure the generators of $\stab_\text{C1}$.
    \item $1$ round: measure the generators of $\stab_\text{C2}$.
    \item $1$ round: apply transverse Hadamard to data qubits.
    \item $O(1)$ rounds: shift data qubits distance $O(1)$ to final position.
    This can be achieved by a sequence of shifts using qubits in the ancilla lattice (this is the same approach as we use in the circuit implementation of the channel, and we provide more explicit details for that in \fig{Hadamard-circuits}.
\end{enumerate}

\subsection{Hadamard: circuit implementation}
\label{sec:hadamard-channel-circ}

As we saw in the logical $XX$ measurement example, the choice of circuits to implement the generator measurements is very important.
In \fig{Hadamard-expanded-allowed-hooks} we consider a number of the surface code patches in isolation and analyze which space-like hook faults would be brazen for each check in the various patches, which constrains the circuits that can be used to perform generator extraction.

\begin{figure}[t]
    (a)\includegraphics[width=0.93\linewidth]{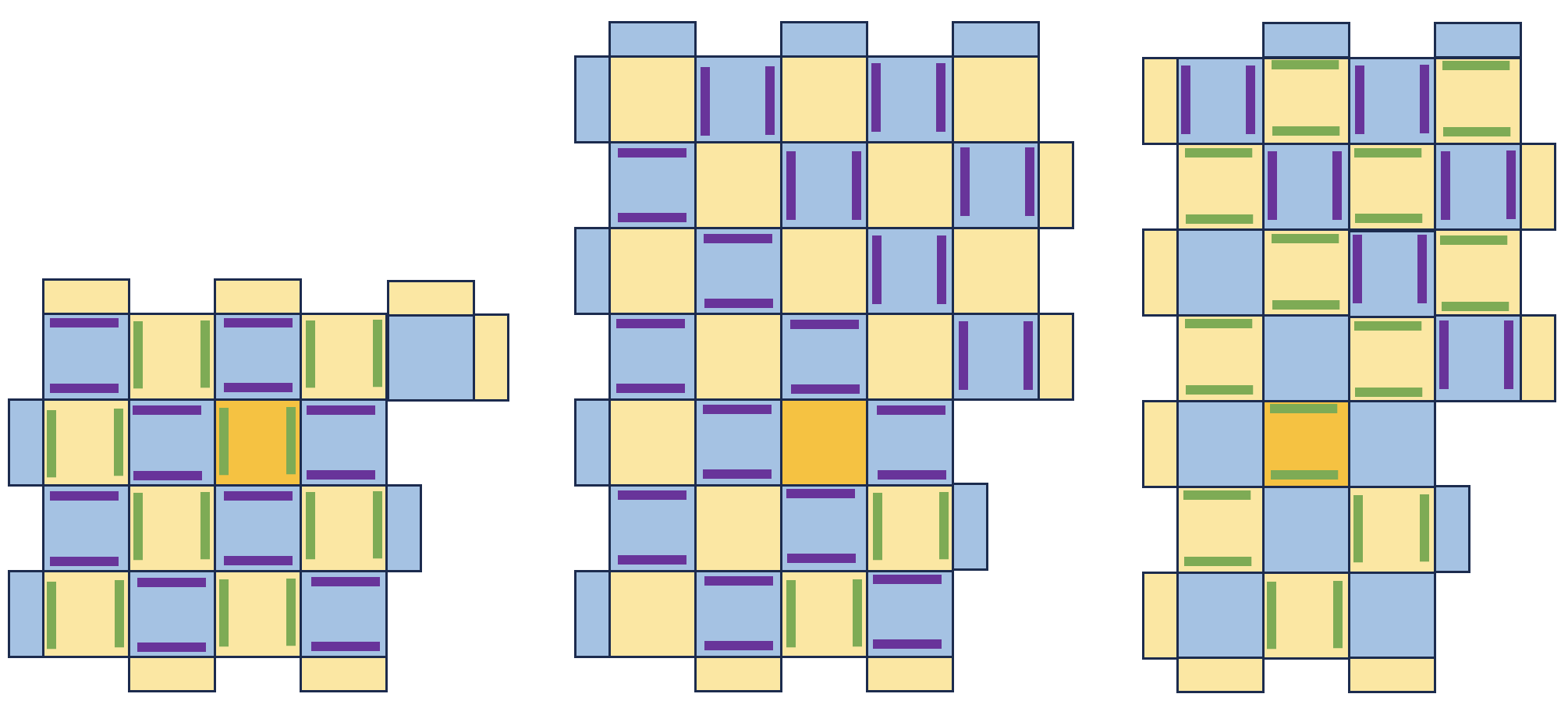}
    (b)\includegraphics[width=0.97\linewidth]{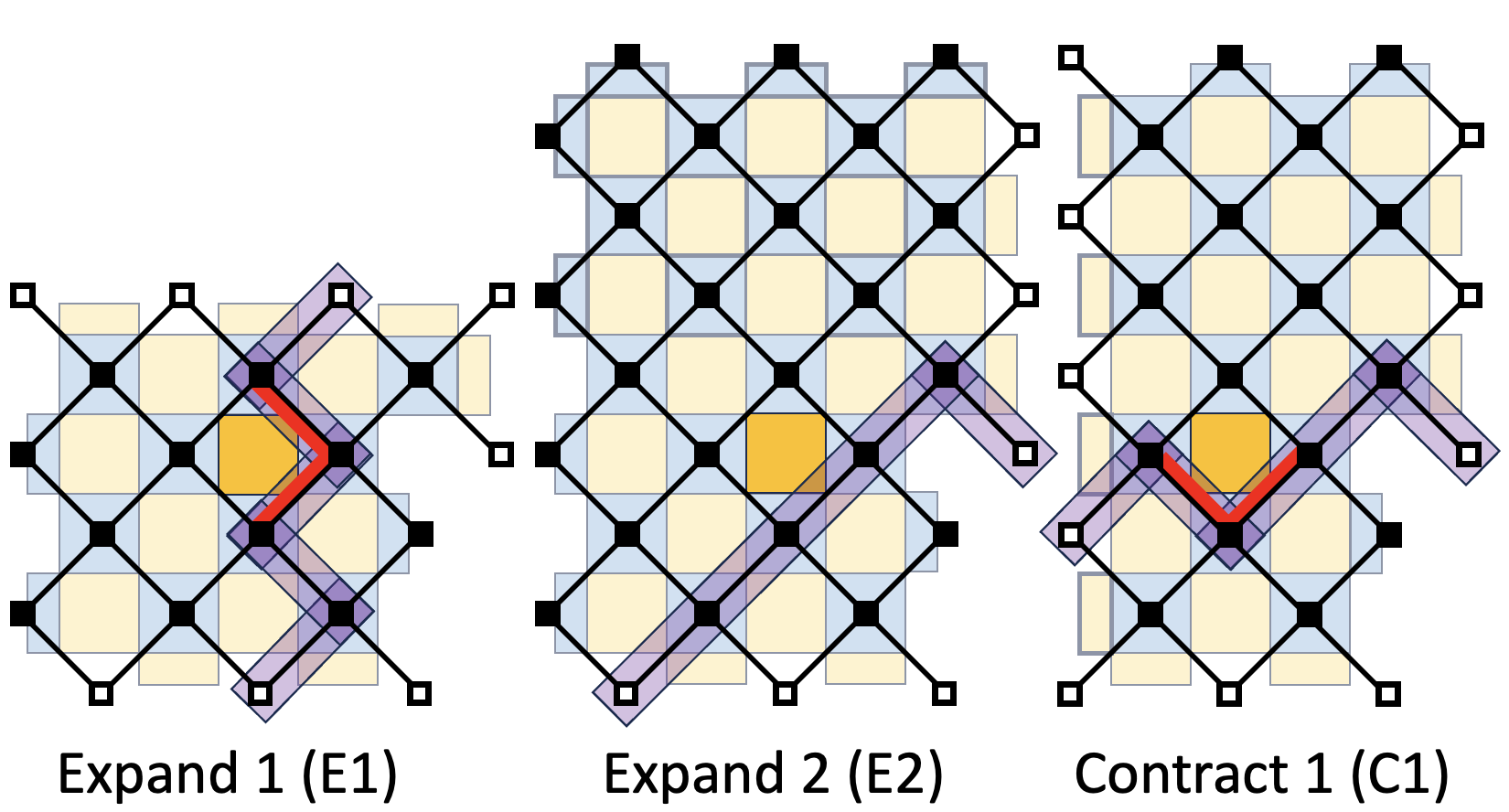}
    \caption{
    (a) Potential brazen hook faults that could arise from generator extraction circuits for each of the three non-standard surface code patches that appear in the phenomenological implementation. 
    In each case, one must take care to ensure that these brazen hook faults do not appear in the generator extraction circuits for the given code.
    (b) An example of how these potential brazen hook faults are identified for the highlighted yellow $Z$-type stabilizer generator.
    }
    \label{fig:Hadamard-expanded-allowed-hooks}
\end{figure}

From \fig{Hadamard-expanded-allowed-hooks}, it is clear that although some generators appear in multiple code patches (such as that highlighted in yellow), the circuit requirements can depend on the patch. 
However, choosing a circuit that avoids brazen hook faults for each stabilizer code independently is not sufficient to achieve the fault distance of the channel. 
One also needs to consider the effects of hazardous hook faults, which align with minimum-weight logical fault configurations of $\faultset$ but not with align with minimum-weight logical fault configurations of the phenomenological subset $\faultset_\text{sub}$. 

As we saw in \fig{def-circuits}, circuit faults can introduce diagonal space-time edges to the decoding graph. 
This allows many minimum-weight logical fault configurations of the phenomenological noise $\faultset_\text{sub}$ to deform in the time direction to form minimum-weight logical fault configurations $\faultset$ as shown in \fig{Hadamard-deformed-logicals}. 
We can analyze such faults by extending $\faultset_\text{sub}$ to form $\faultset'_\text{sub}$ by including diagonal faults from $\faultset$.
Then, eliminating brazen hook faults with respect to $\faultset'_\text{sub}$ is needed to ensure full distance.

\begin{figure}[t]
    \includegraphics[width=0.7\linewidth]{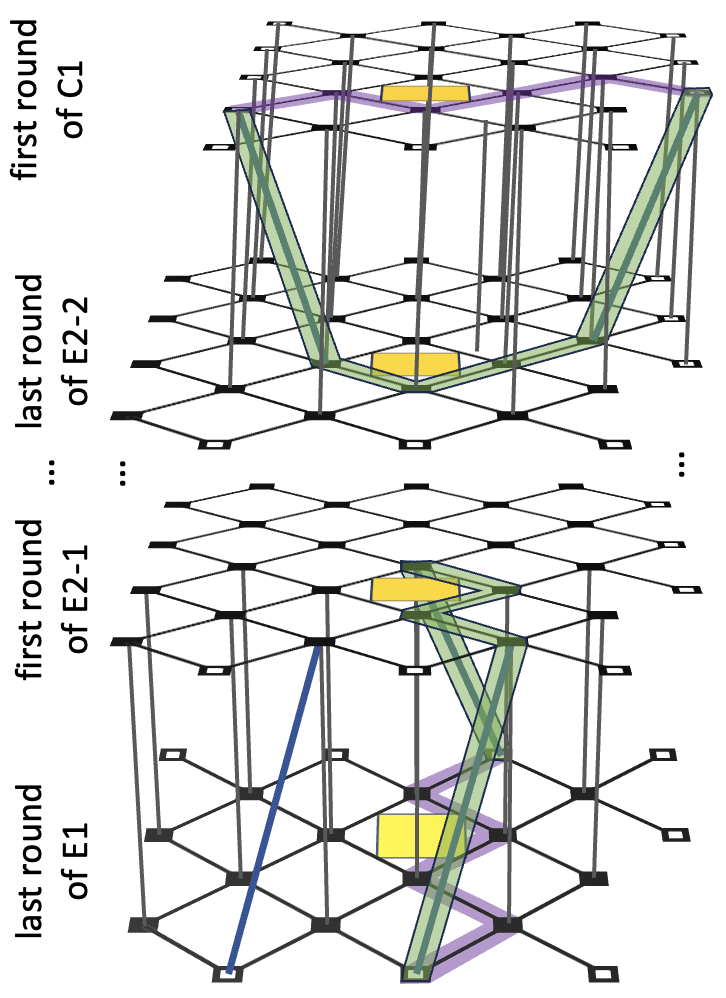}
    \caption{
    Here we show parts of the decoding graph for $Z$-type errors with circuit noise.
    Many of the edges here are also present in the phenomenological graph, but there are also many diagonal edges too (of which we only include a few for visual clarity).
    For example, a diagonal edge is present in the graph for any $Z$ fault on a data qubit between two CNOTs that connect it to two different $X$-stabilizers, leading to detection events in different rounds.  
    An effect of diagonal edges is to remove the penalty for some deformations of logical faults in the time dimension.
    For example, in the graphs here, the weight-$d$ logical operators of the codes C1 and E1 (purple) can be deformed into fault-weight-$d$ logical faults of the channel (green).
    Which paths can be taken depends on which diagonal edges are introduced by the circuits.
    We include these diagonal edges in an extended fault subset $\faultset'_\text{sub}$ and avoid brazen hooks with respect to $\faultset'_\text{sub}$.
    This introduces the requirement to avoid hooks associated with logical operators not just of the current instantaneous stabilizer group, but also those in the recent past and near future. 
    Here, this results in different requirements for those circuits which are in the earlier and later E2 rounds, as illustrated on the yellow face. 
    }
    \label{fig:Hadamard-deformed-logicals}
\end{figure}

To avoid brazen hook faults with respect to $\faultset'_\text{sub}$ due to diagonal edges, we can ensure that the brazen space-like hook faults with respect to $\faultset_\text{sub}$ (which we already found for each time-slice of the phenomenological decoding graph in \fig{Hadamard-expanded-allowed-hooks}) also don't occur in any of the $O(d)$ rounds before or after that round.
This leads us to split stage E2 into two substages E2-1 and E2-2, each consisting of $O(d)$ rounds, with different syndrome extraction circuits.
We take a more abstract viewpoint of this property in \fig{Hadamard-circuits}(a), which also shows that C1 must be split into two substages.

The precise number of rounds for each stage in needed in \fig{Hadamard-circuits}(a) depends on the precise details of the circuits used -- for example which diagonal edges are present in the decoding graph. 
In \fig{Hadamard-circuits}(b) we provide explicit circuit families, and use the distance algorithm to show that the specified numbers of rounds is sufficient to achieve full distance for $d=3,5,7,9$.
We observe that about $d/2$ rounds are sufficient for E2-1 for example, which can be understood by inspecting the diagonal space-time edge structure of the circuits in E2-1, and noting that only half of them point in the $y$-direction, such that after $d/2$ rounds of E2-1, there is no-longer any need to avoid $Z$ hook faults in the $y$-direction.

The circuit implementation of the logical Hadamard then consists of the following rounds:
\begin{enumerate}
    \item $2$ rounds: measure the generators of $\stab_\text{E1}$ with E1 circuits.
    \item $(d+1)/2$ rounds: measure the generators of $\stab_\text{E2}$ with E2-1 circuits.
    \item $(d-1)/2$ rounds: measure the generators of $\stab_\text{E2}$ with E2-2 circuits.
    \item $d$ rounds: measure the generators of $\stab_\text{C1}$ with C1-1 circuits.
    \item $d-3$ rounds: measure the generators of $\stab_\text{C1}$ with C1-2 circuits.
    \item $1$ round: measure the generators of $\stab_\text{C2}$ with C2 circuits.
    \item $1$ round: apply transverse Hadamard to data qubits.
    \item $O(1)$ rounds: translate the patch a distance $O(1)$ to final position.
\end{enumerate}
Note that the total number of time steps in the circuit implementation is more than the phenomenological implementation, essentially because C1 has been broken into C1-1 and C1-2 and each require $d$ rounds, but also some syndrome extraction circuits need 5 rather than 4 CNOT time steps.

\begin{figure*}[h]
    (a)\includegraphics[width=0.95\linewidth]{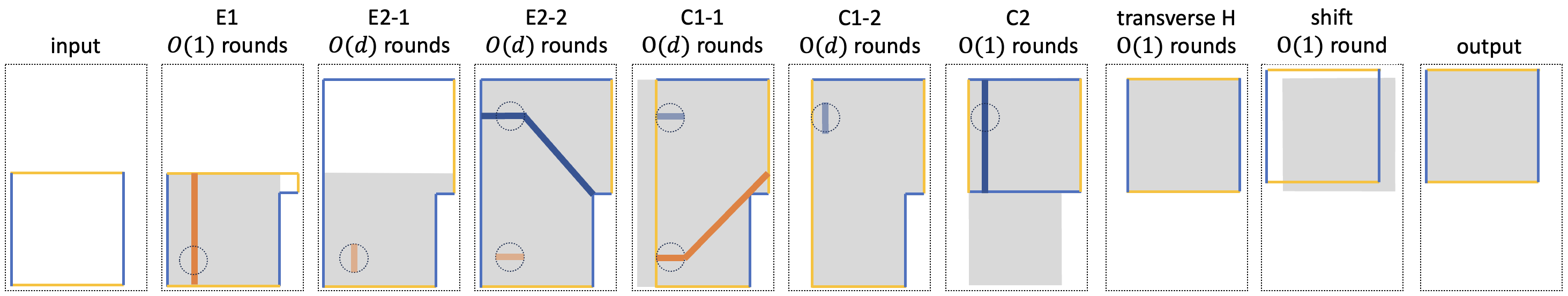}
    
    \vspace{1cm}
    (b)\includegraphics[width=0.95\linewidth]{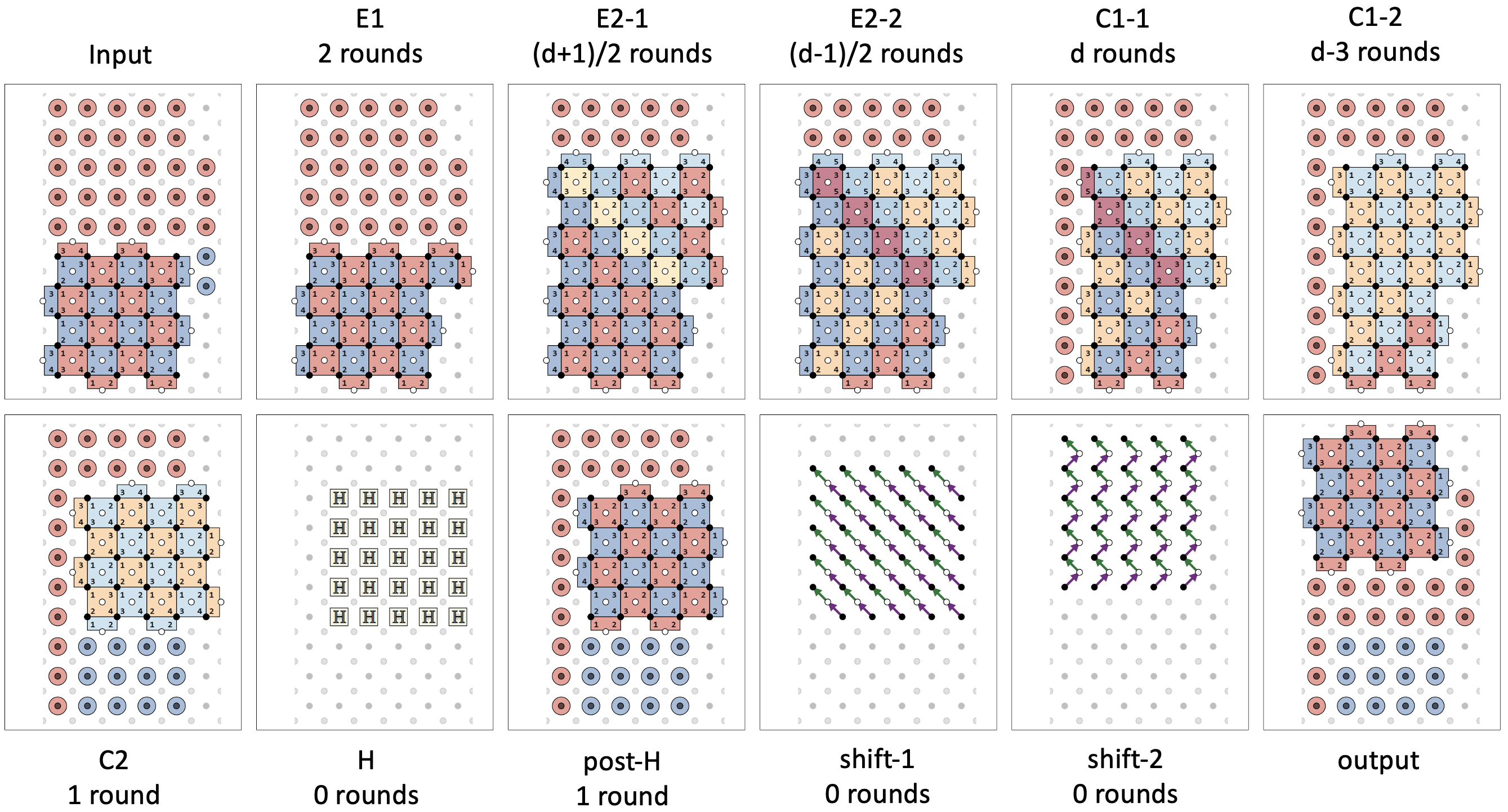}
    \caption{
    (a) To achieve a fault distance $d$ in the circuit implementation of the Hadamard, we separate each of E2 and C1 into two substages which we call E2-1, E2-2 and C1-1, C1-2 respectively. 
    We can see that the orange logical $Z$ operator, which is in the $y$-direction in E1, can be deformed along diagonal space-time edges to terminate at a later time in E2-1, which necessitates the use of a circuit in E2-1 which does not have $Z$-type hook faults in the $y$-direction.
    On the other hand, the logical operator of the patch C1 can be deformed along diagonal space-time edges to terminate at an earlier time in E2-2 with a component in the $x$-direction.
    This places the requirement that circuits in E2-2 do not have $Z$-type hook faults in the $y$-direction.
    The number of rounds of each must be $O(d)$, since diagonal edges can be taken along the $O(d)$ path.
    A very similar argument leads us to split C1 into C1-1 and C1-2 with different $X$-type hook faults.
    (b) Here we provide explicit circuit families, and use the distance algorithm to show that the specified numbers of rounds is sufficient to achieve full distance for $d=3,5,7,9,11$.
    We conjecture these circuits will achieve full distance for all $d$.
    }
    \label{fig:Hadamard-circuits}
\end{figure*}

\textbf{Relation to other work}.---
Prior to the recent work in Ref.~\cite{geher2023b}, the known approaches to implement the logical Hadamard with surface codes can be viewed as variants of that originally proposed in Ref.~\cite{Horsman2012} in which a transverse Hadamard is applied to the data qubits of 
a surface code patch, and the patch is rotated by first expanding the patch and then shrinking it.
It is assumed in Ref.~\cite{Horsman2012} that 
at the phenomenological-level, the expansion and shrink can both be accomplished by a single step followed by $d$ rounds of stabilizer measurements.
We find however that this results in the fault distance dropping below the code distance, and we propose a modified full-distance version of this phenomenological implementation that includes a few additional patch deformation steps to avoid this problem.
This problem is also avoided by the phenomenological implementations of Ref.~\cite{Vuillot2019} and Ref.~\cite{Fowler2018}, but with different space and time cost tradeoffs.
To extend the full-distance phenomenological implementation to form a full-distance circuit implementation with single-ancilla measurement extraction circuits, we find that the circuits must vary both spatially and temporally as the patch is deformed, and yet more idling steps must be added.

Recently, Ref.~\cite{geher2023b} was released after our initial draft was completed with the construction provided here.
Therein, two approaches to implement the Hadamard are provided.
The first is quite similar to ours, and again can be viewed as a variant of that in Ref.~\cite{Horsman2012}.
The second is a little different, where the authors have deformed the domain wall from a time-like application of the transverse Hadamard into a space like domain wall which cuts through one of the codes in their code sequence.
We note that their circuit-level implementation is more efficient than that which we present here -- taking $2d +O(1)$ error correction cycles rather than $3d + O(1)$ as we find here.

%!TEX root = main.tex 
\section{Concluding remarks}
\label{sec:concl}

We foresee the tools developed in this paper finding application for designing and verifying the fault tolerance of future stabilizer channels, especially for Floquet, LDPC and space-time codes.

In what follows we list some other directions that would be interesting to explore in future work:
\begin{itemize}
    \item \textbf{Non-Clifford operations.}---
    Our framework does not explicitly address non-Clifford operations.
    We can include non-Clifford state preparation in our framework via the following argument.
    Assume that a noisy non-Clifford state preparation channel $\mathcal{T}'$ with output code $\mathcal{S}$ is equivalent to a perfect state preparation $\mathcal{T}$
    followed by a noisy channel $\mathcal{I}$ consisting of identity gates on each of the output qubits of $\mathcal{T}$ with stochastic Pauli noise.
    For any stabilizer channel $\mathcal{C}$ with input code $\mathcal{S}$, probability of logical failure of $\mathcal{T}' \circ \mathcal{C}$
    is then the same as probability of logical failure of stabilizer channel $\mathcal{I} \circ \mathcal{C}$. 
    Therefore, all the tools we have developed for analysis of stabilizer channels apply to this more general scenario.
    One open question is: if and how decoder for channel $\mathcal{C}$ can be benefit from the measurement outcomes of the channel $\mathcal{T'}$.
    The second open question is: does assumption that $\mathcal{T}'$ is equivalent to $\mathcal{T} \circ \mathcal{I}$ captures practically relevant scenarios and 
    what are practically useful generalisations of the assumption.
    
    \item \textbf{Adaptive logical circuits.}---
    We did not explicitly consider logical circuits with adaptive operations conditional on measurement outcomes.
    Adaptive operations are necessary when injecting magic states\cite{Bravyi2005} and are a useful tool for reducing resource requirements for various quantum algorithms \cite{Jones2013,Gidney2018,gidney2021cccz}.
    One way to address this is to ensure that for any possible fixed set of logical outcome we apply a Clifford channel with fault distance at least $d$.
    This can be achieved by relying on our channel composition results in \sec{combining-stabilizer-channels}.
    Assuming elementary faults occur with probability $O(p)$, 
    the probability of the adaptive channel failure is at most $O(p^{(d/2)}) + p'$, where $p'$ is the probability of the failure to predict the correct outcome of the logical measurements during the channel execution.
    Ensuring that $p'$ is small might require the adaptive channel design modifications and is related to the data backlog problem \cite{Terhal2015}.
    Design constraints on adaptive logical circuits is an active area of research closely related to efficient scalable decoders.
    
    \item \textbf{Efficient and scalable decoders.}---
    We have not considered decoding in much detail. 
    Implementing large-scale quantum algorithms fault-tolerantly requires real-time parallel window  decoding~\cite{skoric2023parallel} 
    due to data backlog problem \cite{Terhal2015}.
    In the main text we only discussed global non-real-time decoders for graph-like check matrices.
    There are many practical examples where check matrices are not graph-like 
    and require extending decoders beyond graph-like check matrices\cite{delfosse2023,honeycombmemory}.
    
    \item \textbf{Entropic contributions to channel failure.}---
    Here we have looked at the fault distance $d$, which corresponds to the size of the minimum-weight logical fault configuration of the channel.
    This relates to the exponent $t$ of the lowest-order contributions to the logical failure rate $C \cdot p^t$.
    Another object that would be interesting to study is the number of minimum-weight logical fault configurations, which relates to the prefactor $C$, known as the entropic contribution~\cite{beverland2019}.
    Maybe new definitions and algorithms for classifying hook faults could aid the design of channels to reduce this prefactor.

    \item \textbf{More general noise models.}---
    While the class of stochastic Pauli noise models that we have assumed in this formalism is broad (allowing for high-weight errors and measurement outcome flips), it does not cover a number of practically motivated noise models.
    It would be interesting to explore extensions of the formalism that apply to soft measurement outcomes~\cite{Pattison2021}, coherent errors~\cite{bravyi2018surface_code_coherent_noise} and amplitude damping. 
    
    \item \textbf{Further FT conditions to prove a threshold.}--- 
    Our results do not imply a new version of the threshold theorem.
    In Ref.~\cite{aliferis2005}, the canonical example of the threshold theorem, there are two fault-tolerant conditions (see also Sec 1.1 of \cite{chamberland2018} for a brief discussion).
    The first condition concerns the number of faults that can be corrected in a gadget when the output is corrected perfectly, and is very similar to what we define here as a fault-distance. 
    The second condition involves the case where there is no perfect decoding at the end, and the output of one gadget is decoded and fed as the input to the next gadget.
    It is used to ensure that faults in far-separated components do not lead do logical faults, such that most fault configurations larger than $d$ are actually correctable.
    This is necessary to establish the threshold theorem.
    Our definition of time-local channels is a step towards this goal.
    
\end{itemize}

\textbf{Acknowledgements.}---
We thank Adam Paetznick, Nicolas Delfosse, Jeongwan Haah and Marcus Silva for useful discussions.
This work was initiated when S.H. was an intern at Microsoft Quantum, and was completed while M.E.B. was a researcher at Microsoft Quantum.

\bibliography{bib_fault_distance.bib}

%merlin.mbs apsrev4-1.bst 2010-07-25 4.21a (PWD, AO, DPC) hacked
%Control: key (0)
%Control: author (72) initials jnrlst
%Control: editor formatted (1) identically to author
%Control: production of article title (-1) disabled
%Control: page (0) single
%Control: year (1) truncated
%Control: production of eprint (0) enabled
\begin{thebibliography}{78}%
\makeatletter
\providecommand \@ifxundefined [1]{%
 \@ifx{#1\undefined}
}%
\providecommand \@ifnum [1]{%
 \ifnum #1\expandafter \@firstoftwo
 \else \expandafter \@secondoftwo
 \fi
}%
\providecommand \@ifx [1]{%
 \ifx #1\expandafter \@firstoftwo
 \else \expandafter \@secondoftwo
 \fi
}%
\providecommand \natexlab [1]{#1}%
\providecommand \enquote  [1]{``#1''}%
\providecommand \bibnamefont  [1]{#1}%
\providecommand \bibfnamefont [1]{#1}%
\providecommand \citenamefont [1]{#1}%
\providecommand \href@noop [0]{\@secondoftwo}%
\providecommand \href [0]{\begingroup \@sanitize@url \@href}%
\providecommand \@href[1]{\@@startlink{#1}\@@href}%
\providecommand \@@href[1]{\endgroup#1\@@endlink}%
\providecommand \@sanitize@url [0]{\catcode `\\12\catcode `\$12\catcode `\&12\catcode `\#12\catcode `\^12\catcode `\_12\catcode `\%12\relax}%
\providecommand \@@startlink[1]{}%
\providecommand \@@endlink[0]{}%
\providecommand \url  [0]{\begingroup\@sanitize@url \@url }%
\providecommand \@url [1]{\endgroup\@href {#1}{\urlprefix }}%
\providecommand \urlprefix  [0]{URL }%
\providecommand \Eprint [0]{\href }%
\providecommand \doibase [0]{http://dx.doi.org/}%
\providecommand \selectlanguage [0]{\@gobble}%
\providecommand \bibinfo  [0]{\@secondoftwo}%
\providecommand \bibfield  [0]{\@secondoftwo}%
\providecommand \translation [1]{[#1]}%
\providecommand \BibitemOpen [0]{}%
\providecommand \bibitemStop [0]{}%
\providecommand \bibitemNoStop [0]{.\EOS\space}%
\providecommand \EOS [0]{\spacefactor3000\relax}%
\providecommand \BibitemShut  [1]{\csname bibitem#1\endcsname}%
\let\auto@bib@innerbib\@empty
%</preamble>
\bibitem [{\citenamefont {Dennis}\ \emph {et~al.}(2002)\citenamefont {Dennis}, \citenamefont {Kitaev}, \citenamefont {Landahl},\ and\ \citenamefont {Preskill}}]{dennis2002tqm}%
  \BibitemOpen
  \bibfield  {author} {\bibinfo {author} {\bibfnamefont {E.}~\bibnamefont {Dennis}}, \bibinfo {author} {\bibfnamefont {A.}~\bibnamefont {Kitaev}}, \bibinfo {author} {\bibfnamefont {A.}~\bibnamefont {Landahl}}, \ and\ \bibinfo {author} {\bibfnamefont {J.}~\bibnamefont {Preskill}},\ }\href@noop {} {\bibfield  {journal} {\bibinfo  {journal} {Journal of Mathematical Physics}\ }\textbf {\bibinfo {volume} {43}},\ \bibinfo {pages} {4452} (\bibinfo {year} {2002})}\BibitemShut {NoStop}%
\bibitem [{\citenamefont {Brown}(2020)}]{brown2020}%
  \BibitemOpen
  \bibfield  {author} {\bibinfo {author} {\bibfnamefont {B.~J.}\ \bibnamefont {Brown}},\ }\href@noop {} {\bibfield  {journal} {\bibinfo  {journal} {Science advances}\ }\textbf {\bibinfo {volume} {6}},\ \bibinfo {pages} {eaay4929} (\bibinfo {year} {2020})}\BibitemShut {NoStop}%
\bibitem [{\citenamefont {Wang}\ \emph {et~al.}(2009)\citenamefont {Wang}, \citenamefont {Fowler}, \citenamefont {Stephens},\ and\ \citenamefont {Hollenberg}}]{Wang2009}%
  \BibitemOpen
  \bibfield  {author} {\bibinfo {author} {\bibfnamefont {D.~S.}\ \bibnamefont {Wang}}, \bibinfo {author} {\bibfnamefont {A.~G.}\ \bibnamefont {Fowler}}, \bibinfo {author} {\bibfnamefont {A.~M.}\ \bibnamefont {Stephens}}, \ and\ \bibinfo {author} {\bibfnamefont {L.~C.}\ \bibnamefont {Hollenberg}},\ }\href@noop {} {\bibfield  {journal} {\bibinfo  {journal} {arXiv preprint arXiv:0905.0531}\ } (\bibinfo {year} {2009})}\BibitemShut {NoStop}%
\bibitem [{\citenamefont {Fowler}\ \emph {et~al.}(2012)\citenamefont {Fowler}, \citenamefont {Mariantoni}, \citenamefont {Martinis},\ and\ \citenamefont {Cleland}}]{Fowler2012}%
  \BibitemOpen
  \bibfield  {author} {\bibinfo {author} {\bibfnamefont {A.~G.}\ \bibnamefont {Fowler}}, \bibinfo {author} {\bibfnamefont {M.}~\bibnamefont {Mariantoni}}, \bibinfo {author} {\bibfnamefont {J.~M.}\ \bibnamefont {Martinis}}, \ and\ \bibinfo {author} {\bibfnamefont {A.~N.}\ \bibnamefont {Cleland}},\ }\href {\doibase 10.1103/PhysRevA.86.032324} {\bibfield  {journal} {\bibinfo  {journal} {Physical Review A}\ }\textbf {\bibinfo {volume} {86}},\ \bibinfo {pages} {032324} (\bibinfo {year} {2012})}\BibitemShut {NoStop}%
\bibitem [{\citenamefont {Beverland}\ \emph {et~al.}(2021)\citenamefont {Beverland}, \citenamefont {Kubica},\ and\ \citenamefont {Svore}}]{beverland2021codeswitching}%
  \BibitemOpen
  \bibfield  {author} {\bibinfo {author} {\bibfnamefont {M.~E.}\ \bibnamefont {Beverland}}, \bibinfo {author} {\bibfnamefont {A.}~\bibnamefont {Kubica}}, \ and\ \bibinfo {author} {\bibfnamefont {K.~M.}\ \bibnamefont {Svore}},\ }\href {\doibase 10.1103/PRXQuantum.2.020341} {\bibfield  {journal} {\bibinfo  {journal} {PRX Quantum}\ }\textbf {\bibinfo {volume} {2}},\ \bibinfo {pages} {020341} (\bibinfo {year} {2021})}\BibitemShut {NoStop}%
\bibitem [{\citenamefont {Das}\ \emph {et~al.}(2020)\citenamefont {Das}, \citenamefont {Pattison}, \citenamefont {Manne}, \citenamefont {Carmean}, \citenamefont {Svore}, \citenamefont {Qureshi},\ and\ \citenamefont {Delfosse}}]{das2020scalable}%
  \BibitemOpen
  \bibfield  {author} {\bibinfo {author} {\bibfnamefont {P.}~\bibnamefont {Das}}, \bibinfo {author} {\bibfnamefont {C.~A.}\ \bibnamefont {Pattison}}, \bibinfo {author} {\bibfnamefont {S.}~\bibnamefont {Manne}}, \bibinfo {author} {\bibfnamefont {D.}~\bibnamefont {Carmean}}, \bibinfo {author} {\bibfnamefont {K.}~\bibnamefont {Svore}}, \bibinfo {author} {\bibfnamefont {M.}~\bibnamefont {Qureshi}}, \ and\ \bibinfo {author} {\bibfnamefont {N.}~\bibnamefont {Delfosse}},\ }\href@noop {} {\bibfield  {journal} {\bibinfo  {journal} {arXiv preprint arXiv:2001.06598}\ } (\bibinfo {year} {2020})}\BibitemShut {NoStop}%
\bibitem [{\citenamefont {Chamberland}\ \emph {et~al.}(2020{\natexlab{a}})\citenamefont {Chamberland}, \citenamefont {Noh}, \citenamefont {Arrangoiz-Arriola}, \citenamefont {Campbell}, \citenamefont {Hann}, \citenamefont {Iverson}, \citenamefont {Putterman}, \citenamefont {Bohdanowicz}, \citenamefont {Flammia}, \citenamefont {Keller} \emph {et~al.}}]{Chamberland2020b}%
  \BibitemOpen
  \bibfield  {author} {\bibinfo {author} {\bibfnamefont {C.}~\bibnamefont {Chamberland}}, \bibinfo {author} {\bibfnamefont {K.}~\bibnamefont {Noh}}, \bibinfo {author} {\bibfnamefont {P.}~\bibnamefont {Arrangoiz-Arriola}}, \bibinfo {author} {\bibfnamefont {E.~T.}\ \bibnamefont {Campbell}}, \bibinfo {author} {\bibfnamefont {C.~T.}\ \bibnamefont {Hann}}, \bibinfo {author} {\bibfnamefont {J.}~\bibnamefont {Iverson}}, \bibinfo {author} {\bibfnamefont {H.}~\bibnamefont {Putterman}}, \bibinfo {author} {\bibfnamefont {T.~C.}\ \bibnamefont {Bohdanowicz}}, \bibinfo {author} {\bibfnamefont {S.~T.}\ \bibnamefont {Flammia}}, \bibinfo {author} {\bibfnamefont {A.}~\bibnamefont {Keller}},  \emph {et~al.},\ }\href@noop {} {\bibfield  {journal} {\bibinfo  {journal} {arXiv preprint arXiv:2012.04108}\ } (\bibinfo {year} {2020}{\natexlab{a}})}\BibitemShut {NoStop}%
\bibitem [{\citenamefont {Geh{\'e}r}\ \emph {et~al.}(2023)\citenamefont {Geh{\'e}r}, \citenamefont {McLauchlan}, \citenamefont {Campbell}, \citenamefont {Moylett},\ and\ \citenamefont {Crawford}}]{geher2023b}%
  \BibitemOpen
  \bibfield  {author} {\bibinfo {author} {\bibfnamefont {G.~P.}\ \bibnamefont {Geh{\'e}r}}, \bibinfo {author} {\bibfnamefont {C.}~\bibnamefont {McLauchlan}}, \bibinfo {author} {\bibfnamefont {E.~T.}\ \bibnamefont {Campbell}}, \bibinfo {author} {\bibfnamefont {A.~E.}\ \bibnamefont {Moylett}}, \ and\ \bibinfo {author} {\bibfnamefont {O.}~\bibnamefont {Crawford}},\ }\href@noop {} {\bibfield  {journal} {\bibinfo  {journal} {arXiv preprint arXiv:2312.11605}\ } (\bibinfo {year} {2023})}\BibitemShut {NoStop}%
\bibitem [{\citenamefont {Hastings}\ and\ \citenamefont {Haah}(2021)}]{hastings2021dynamically}%
  \BibitemOpen
  \bibfield  {author} {\bibinfo {author} {\bibfnamefont {M.~B.}\ \bibnamefont {Hastings}}\ and\ \bibinfo {author} {\bibfnamefont {J.}~\bibnamefont {Haah}},\ }\href@noop {} {\enquote {\bibinfo {title} {Dynamically generated logical qubits},}\ } (\bibinfo {year} {2021}),\ \Eprint {http://arxiv.org/abs/2107.02194} {arXiv:2107.02194 [quant-ph]} \BibitemShut {NoStop}%
\bibitem [{\citenamefont {Kovalev}\ and\ \citenamefont {Pryadko}(2013)}]{kovalev2013}%
  \BibitemOpen
  \bibfield  {author} {\bibinfo {author} {\bibfnamefont {A.~A.}\ \bibnamefont {Kovalev}}\ and\ \bibinfo {author} {\bibfnamefont {L.~P.}\ \bibnamefont {Pryadko}},\ }\href@noop {} {\bibfield  {journal} {\bibinfo  {journal} {Physical Review A}\ }\textbf {\bibinfo {volume} {88}},\ \bibinfo {pages} {012311} (\bibinfo {year} {2013})}\BibitemShut {NoStop}%
\bibitem [{\citenamefont {Tillich}\ and\ \citenamefont {Z{\'e}mor}(2013)}]{tillich2013QLDPC}%
  \BibitemOpen
  \bibfield  {author} {\bibinfo {author} {\bibfnamefont {J.-P.}\ \bibnamefont {Tillich}}\ and\ \bibinfo {author} {\bibfnamefont {G.}~\bibnamefont {Z{\'e}mor}},\ }\href@noop {} {\bibfield  {journal} {\bibinfo  {journal} {IEEE Transactions on Information Theory}\ }\textbf {\bibinfo {volume} {60}},\ \bibinfo {pages} {1193} (\bibinfo {year} {2013})}\BibitemShut {NoStop}%
\bibitem [{\citenamefont {Gottesman}(1997)}]{GottesmanThesis}%
  \BibitemOpen
  \bibfield  {author} {\bibinfo {author} {\bibfnamefont {D.}~\bibnamefont {Gottesman}},\ }\href@noop {} {\enquote {\bibinfo {title} {Stabilizer codes and quantum error correction},}\ } (\bibinfo {year} {1997}),\ \Eprint {http://arxiv.org/abs/quant-ph/9705052} {arXiv:quant-ph/9705052 [quant-ph]} \BibitemShut {NoStop}%
\bibitem [{\citenamefont {Knill}(2004)}]{Knill2004}%
  \BibitemOpen
  \bibfield  {author} {\bibinfo {author} {\bibfnamefont {E.}~\bibnamefont {Knill}},\ }\href@noop {} {\bibfield  {journal} {\bibinfo  {journal} {arXiv preprint quant-ph/0402171}\ } (\bibinfo {year} {2004})}\BibitemShut {NoStop}%
\bibitem [{\citenamefont {Bravyi}\ and\ \citenamefont {Kitaev}(2005)}]{Bravyi2005}%
  \BibitemOpen
  \bibfield  {author} {\bibinfo {author} {\bibfnamefont {S.}~\bibnamefont {Bravyi}}\ and\ \bibinfo {author} {\bibfnamefont {A.}~\bibnamefont {Kitaev}},\ }\href@noop {} {\bibfield  {journal} {\bibinfo  {journal} {Physical Review A}\ }\textbf {\bibinfo {volume} {71}},\ \bibinfo {pages} {022316} (\bibinfo {year} {2005})}\BibitemShut {NoStop}%
\bibitem [{\citenamefont {Horsman}\ \emph {et~al.}(2012)\citenamefont {Horsman}, \citenamefont {Fowler}, \citenamefont {Devitt},\ and\ \citenamefont {Meter}}]{Horsman2012}%
  \BibitemOpen
  \bibfield  {author} {\bibinfo {author} {\bibfnamefont {C.}~\bibnamefont {Horsman}}, \bibinfo {author} {\bibfnamefont {A.~G.}\ \bibnamefont {Fowler}}, \bibinfo {author} {\bibfnamefont {S.}~\bibnamefont {Devitt}}, \ and\ \bibinfo {author} {\bibfnamefont {R.~V.}\ \bibnamefont {Meter}},\ }\href {\doibase 10.1088/1367-2630/14/12/123011} {\bibfield  {journal} {\bibinfo  {journal} {New Journal of Physics}\ }\textbf {\bibinfo {volume} {14}},\ \bibinfo {pages} {123011} (\bibinfo {year} {2012})}\BibitemShut {NoStop}%
\bibitem [{\citenamefont {Raussendorf}\ \emph {et~al.}(2006)\citenamefont {Raussendorf}, \citenamefont {Harrington},\ and\ \citenamefont {Goyal}}]{Raussendorf2006}%
  \BibitemOpen
  \bibfield  {author} {\bibinfo {author} {\bibfnamefont {R.}~\bibnamefont {Raussendorf}}, \bibinfo {author} {\bibfnamefont {J.}~\bibnamefont {Harrington}}, \ and\ \bibinfo {author} {\bibfnamefont {K.}~\bibnamefont {Goyal}},\ }\href {\doibase https://doi.org/10.1016/j.aop.2006.01.012} {\bibfield  {journal} {\bibinfo  {journal} {Annals of Physics}\ }\textbf {\bibinfo {volume} {321}},\ \bibinfo {pages} {2242} (\bibinfo {year} {2006})}\BibitemShut {NoStop}%
\bibitem [{\citenamefont {Raussendorf}\ and\ \citenamefont {Harrington}(2007)}]{raussendorf2007fault}%
  \BibitemOpen
  \bibfield  {author} {\bibinfo {author} {\bibfnamefont {R.}~\bibnamefont {Raussendorf}}\ and\ \bibinfo {author} {\bibfnamefont {J.}~\bibnamefont {Harrington}},\ }\href@noop {} {\bibfield  {journal} {\bibinfo  {journal} {Physical Review Letters}\ }\textbf {\bibinfo {volume} {98}},\ \bibinfo {pages} {190504} (\bibinfo {year} {2007})}\BibitemShut {NoStop}%
\bibitem [{\citenamefont {Raussendorf}\ \emph {et~al.}(2007)\citenamefont {Raussendorf}, \citenamefont {Harrington},\ and\ \citenamefont {Goyal}}]{Raussendorf2007b}%
  \BibitemOpen
  \bibfield  {author} {\bibinfo {author} {\bibfnamefont {R.}~\bibnamefont {Raussendorf}}, \bibinfo {author} {\bibfnamefont {J.}~\bibnamefont {Harrington}}, \ and\ \bibinfo {author} {\bibfnamefont {K.}~\bibnamefont {Goyal}},\ }\href {\doibase 10.1088/1367-2630/9/6/199} {\bibfield  {journal} {\bibinfo  {journal} {New Journal of Physics}\ }\textbf {\bibinfo {volume} {9}},\ \bibinfo {pages} {199} (\bibinfo {year} {2007})}\BibitemShut {NoStop}%
\bibitem [{\citenamefont {Bombin}\ and\ \citenamefont {Martin-Delgado}(2008)}]{Bombin2008}%
  \BibitemOpen
  \bibfield  {author} {\bibinfo {author} {\bibfnamefont {H.}~\bibnamefont {Bombin}}\ and\ \bibinfo {author} {\bibfnamefont {M.~A.}\ \bibnamefont {Martin-Delgado}},\ }\href {\doibase 10.1103/PhysRevB.78.115421} {\bibfield  {journal} {\bibinfo  {journal} {Phys. Rev. B}\ }\textbf {\bibinfo {volume} {78}},\ \bibinfo {pages} {115421} (\bibinfo {year} {2008})}\BibitemShut {NoStop}%
\bibitem [{\citenamefont {Bombin}\ and\ \citenamefont {Martin-Delgado}(2009)}]{Bombin2009}%
  \BibitemOpen
  \bibfield  {author} {\bibinfo {author} {\bibfnamefont {H.}~\bibnamefont {Bombin}}\ and\ \bibinfo {author} {\bibfnamefont {M.~A.}\ \bibnamefont {Martin-Delgado}},\ }\href {\doibase 10.1088/1751-8113/42/9/095302} {\bibfield  {journal} {\bibinfo  {journal} {Journal of Physics A: Mathematical and Theoretical}\ }\textbf {\bibinfo {volume} {42}},\ \bibinfo {pages} {095302} (\bibinfo {year} {2009})}\BibitemShut {NoStop}%
\bibitem [{\citenamefont {Fowler}\ \emph {et~al.}(2009)\citenamefont {Fowler}, \citenamefont {Stephens},\ and\ \citenamefont {Groszkowski}}]{fowler2009high}%
  \BibitemOpen
  \bibfield  {author} {\bibinfo {author} {\bibfnamefont {A.~G.}\ \bibnamefont {Fowler}}, \bibinfo {author} {\bibfnamefont {A.~M.}\ \bibnamefont {Stephens}}, \ and\ \bibinfo {author} {\bibfnamefont {P.}~\bibnamefont {Groszkowski}},\ }\href@noop {} {\bibfield  {journal} {\bibinfo  {journal} {Physical Review A}\ }\textbf {\bibinfo {volume} {80}},\ \bibinfo {pages} {052312} (\bibinfo {year} {2009})}\BibitemShut {NoStop}%
\bibitem [{\citenamefont {Bombin}(2010)}]{Bombin2010}%
  \BibitemOpen
  \bibfield  {author} {\bibinfo {author} {\bibfnamefont {H.}~\bibnamefont {Bombin}},\ }\href {\doibase 10.1103/PhysRevLett.105.030403} {\bibfield  {journal} {\bibinfo  {journal} {Phys. Rev. Lett.}\ }\textbf {\bibinfo {volume} {105}},\ \bibinfo {pages} {030403} (\bibinfo {year} {2010})}\BibitemShut {NoStop}%
\bibitem [{\citenamefont {Hastings}\ and\ \citenamefont {Geller}(2014)}]{hastings2014reduced}%
  \BibitemOpen
  \bibfield  {author} {\bibinfo {author} {\bibfnamefont {M.}~\bibnamefont {Hastings}}\ and\ \bibinfo {author} {\bibfnamefont {A.}~\bibnamefont {Geller}},\ }in\ \href {https://www.microsoft.com/en-us/research/publication/reduced-space-time-and-time-costs-using-dislocation-codes-and-arbitrary-ancillas/} {\emph {\bibinfo {booktitle} {QIC 15}}}\ (\bibinfo {year} {2014})\BibitemShut {NoStop}%
\bibitem [{\citenamefont {Poulin}(2005)}]{poulin2005}%
  \BibitemOpen
  \bibfield  {author} {\bibinfo {author} {\bibfnamefont {D.}~\bibnamefont {Poulin}},\ }\href {\doibase 10.1103/PhysRevLett.95.230504} {\bibfield  {journal} {\bibinfo  {journal} {Phys. Rev. Lett.}\ }\textbf {\bibinfo {volume} {95}},\ \bibinfo {pages} {230504} (\bibinfo {year} {2005})}\BibitemShut {NoStop}%
\bibitem [{\citenamefont {Bacon}\ \emph {et~al.}(2017)\citenamefont {Bacon}, \citenamefont {Flammia}, \citenamefont {Harrow},\ and\ \citenamefont {Shi}}]{bacon2017}%
  \BibitemOpen
  \bibfield  {author} {\bibinfo {author} {\bibfnamefont {D.}~\bibnamefont {Bacon}}, \bibinfo {author} {\bibfnamefont {S.~T.}\ \bibnamefont {Flammia}}, \bibinfo {author} {\bibfnamefont {A.~W.}\ \bibnamefont {Harrow}}, \ and\ \bibinfo {author} {\bibfnamefont {J.}~\bibnamefont {Shi}},\ }\href@noop {} {\bibfield  {journal} {\bibinfo  {journal} {IEEE Transactions on Information Theory}\ }\textbf {\bibinfo {volume} {63}},\ \bibinfo {pages} {2464} (\bibinfo {year} {2017})}\BibitemShut {NoStop}%
\bibitem [{\citenamefont {Delfosse}\ and\ \citenamefont {Paetznick}(2023{\natexlab{a}})}]{delfosse2023spacetime}%
  \BibitemOpen
  \bibfield  {author} {\bibinfo {author} {\bibfnamefont {N.}~\bibnamefont {Delfosse}}\ and\ \bibinfo {author} {\bibfnamefont {A.}~\bibnamefont {Paetznick}},\ }\href@noop {} {\enquote {\bibinfo {title} {Spacetime codes of clifford circuits},}\ } (\bibinfo {year} {2023}{\natexlab{a}}),\ \Eprint {http://arxiv.org/abs/2304.05943} {arXiv:2304.05943 [quant-ph]} \BibitemShut {NoStop}%
\bibitem [{\citenamefont {Brown}\ \emph {et~al.}(2017)\citenamefont {Brown}, \citenamefont {Laubscher}, \citenamefont {Kesselring},\ and\ \citenamefont {Wootton}}]{Brown2017}%
  \BibitemOpen
  \bibfield  {author} {\bibinfo {author} {\bibfnamefont {B.~J.}\ \bibnamefont {Brown}}, \bibinfo {author} {\bibfnamefont {K.}~\bibnamefont {Laubscher}}, \bibinfo {author} {\bibfnamefont {M.~S.}\ \bibnamefont {Kesselring}}, \ and\ \bibinfo {author} {\bibfnamefont {J.~R.}\ \bibnamefont {Wootton}},\ }\href {\doibase 10.1103/PhysRevX.7.021029} {\bibfield  {journal} {\bibinfo  {journal} {Phys. Rev. X}\ }\textbf {\bibinfo {volume} {7}},\ \bibinfo {pages} {021029} (\bibinfo {year} {2017})}\BibitemShut {NoStop}%
\bibitem [{\citenamefont {Bombin}\ \emph {et~al.}(2021)\citenamefont {Bombin}, \citenamefont {Dawson}, \citenamefont {Mishmash}, \citenamefont {Nickerson}, \citenamefont {Pastawski},\ and\ \citenamefont {Roberts}}]{Bombin2021}%
  \BibitemOpen
  \bibfield  {author} {\bibinfo {author} {\bibfnamefont {H.}~\bibnamefont {Bombin}}, \bibinfo {author} {\bibfnamefont {C.}~\bibnamefont {Dawson}}, \bibinfo {author} {\bibfnamefont {R.~V.}\ \bibnamefont {Mishmash}}, \bibinfo {author} {\bibfnamefont {N.}~\bibnamefont {Nickerson}}, \bibinfo {author} {\bibfnamefont {F.}~\bibnamefont {Pastawski}}, \ and\ \bibinfo {author} {\bibfnamefont {S.}~\bibnamefont {Roberts}},\ }\href@noop {} {\enquote {\bibinfo {title} {Logical blocks for fault-tolerant topological quantum computation},}\ } (\bibinfo {year} {2021}),\ \Eprint {http://arxiv.org/abs/arXiv:2112.12160} {arXiv:2112.12160} \BibitemShut {NoStop}%
\bibitem [{\citenamefont {Gottesman}(2013)}]{gottesman2013}%
  \BibitemOpen
  \bibfield  {author} {\bibinfo {author} {\bibfnamefont {D.}~\bibnamefont {Gottesman}},\ }\href@noop {} {\bibfield  {journal} {\bibinfo  {journal} {arXiv preprint arXiv:1310.2984}\ } (\bibinfo {year} {2013})}\BibitemShut {NoStop}%
\bibitem [{\citenamefont {Breuckmann}\ and\ \citenamefont {Eberhardt}(2021)}]{breuckmann2021balanced}%
  \BibitemOpen
  \bibfield  {author} {\bibinfo {author} {\bibfnamefont {N.~P.}\ \bibnamefont {Breuckmann}}\ and\ \bibinfo {author} {\bibfnamefont {J.~N.}\ \bibnamefont {Eberhardt}},\ }\href@noop {} {\bibfield  {journal} {\bibinfo  {journal} {IEEE Transactions on Information Theory}\ }\textbf {\bibinfo {volume} {67}},\ \bibinfo {pages} {6653} (\bibinfo {year} {2021})}\BibitemShut {NoStop}%
\bibitem [{\citenamefont {Panteleev}\ and\ \citenamefont {Kalachev}(2021)}]{panteleev2021}%
  \BibitemOpen
  \bibfield  {author} {\bibinfo {author} {\bibfnamefont {P.}~\bibnamefont {Panteleev}}\ and\ \bibinfo {author} {\bibfnamefont {G.}~\bibnamefont {Kalachev}},\ }\href@noop {} {\bibfield  {journal} {\bibinfo  {journal} {Quantum}\ }\textbf {\bibinfo {volume} {5}},\ \bibinfo {pages} {585} (\bibinfo {year} {2021})}\BibitemShut {NoStop}%
\bibitem [{\citenamefont {Lin}\ and\ \citenamefont {Pryadko}(2023)}]{lin2023}%
  \BibitemOpen
  \bibfield  {author} {\bibinfo {author} {\bibfnamefont {H.-K.}\ \bibnamefont {Lin}}\ and\ \bibinfo {author} {\bibfnamefont {L.~P.}\ \bibnamefont {Pryadko}},\ }\href@noop {} {\bibfield  {journal} {\bibinfo  {journal} {arXiv preprint arXiv:2306.16400}\ } (\bibinfo {year} {2023})}\BibitemShut {NoStop}%
\bibitem [{\citenamefont {Wang}\ \emph {et~al.}(2023)\citenamefont {Wang}, \citenamefont {Lin},\ and\ \citenamefont {Pryadko}}]{wang2023}%
  \BibitemOpen
  \bibfield  {author} {\bibinfo {author} {\bibfnamefont {R.}~\bibnamefont {Wang}}, \bibinfo {author} {\bibfnamefont {H.-K.}\ \bibnamefont {Lin}}, \ and\ \bibinfo {author} {\bibfnamefont {L.~P.}\ \bibnamefont {Pryadko}},\ }\href@noop {} {\bibfield  {journal} {\bibinfo  {journal} {arXiv preprint arXiv:2305.06890}\ } (\bibinfo {year} {2023})}\BibitemShut {NoStop}%
\bibitem [{\citenamefont {Davydova}\ \emph {et~al.}(2023)\citenamefont {Davydova}, \citenamefont {Tantivasadakarn},\ and\ \citenamefont {Balasubramanian}}]{Davydova2023}%
  \BibitemOpen
  \bibfield  {author} {\bibinfo {author} {\bibfnamefont {M.}~\bibnamefont {Davydova}}, \bibinfo {author} {\bibfnamefont {N.}~\bibnamefont {Tantivasadakarn}}, \ and\ \bibinfo {author} {\bibfnamefont {S.}~\bibnamefont {Balasubramanian}},\ }\href {\doibase 10.1103/PRXQuantum.4.020341} {\bibfield  {journal} {\bibinfo  {journal} {PRX Quantum}\ }\textbf {\bibinfo {volume} {4}},\ \bibinfo {pages} {020341} (\bibinfo {year} {2023})}\BibitemShut {NoStop}%
\bibitem [{\citenamefont {Aasen}\ \emph {et~al.}(2022)\citenamefont {Aasen}, \citenamefont {Wang},\ and\ \citenamefont {Hastings}}]{aasen2022}%
  \BibitemOpen
  \bibfield  {author} {\bibinfo {author} {\bibfnamefont {D.}~\bibnamefont {Aasen}}, \bibinfo {author} {\bibfnamefont {Z.}~\bibnamefont {Wang}}, \ and\ \bibinfo {author} {\bibfnamefont {M.~B.}\ \bibnamefont {Hastings}},\ }\href@noop {} {\bibfield  {journal} {\bibinfo  {journal} {Physical Review B}\ }\textbf {\bibinfo {volume} {106}},\ \bibinfo {pages} {085122} (\bibinfo {year} {2022})}\BibitemShut {NoStop}%
\bibitem [{\citenamefont {Higgott}\ and\ \citenamefont {Breuckmann}(2023)}]{higgott2023}%
  \BibitemOpen
  \bibfield  {author} {\bibinfo {author} {\bibfnamefont {O.}~\bibnamefont {Higgott}}\ and\ \bibinfo {author} {\bibfnamefont {N.~P.}\ \bibnamefont {Breuckmann}},\ }\href@noop {} {\bibfield  {journal} {\bibinfo  {journal} {arXiv preprint arXiv:2308.03750}\ } (\bibinfo {year} {2023})}\BibitemShut {NoStop}%
\bibitem [{\citenamefont {Gottesman}(2022)}]{gottesman2022opportunities}%
  \BibitemOpen
  \bibfield  {author} {\bibinfo {author} {\bibfnamefont {D.}~\bibnamefont {Gottesman}},\ }\href@noop {} {\enquote {\bibinfo {title} {Opportunities and challenges in fault-tolerant quantum computation},}\ } (\bibinfo {year} {2022}),\ \Eprint {http://arxiv.org/abs/2210.15844} {arXiv:2210.15844 [quant-ph]} \BibitemShut {NoStop}%
\bibitem [{\citenamefont {Gidney}(2021)}]{Gidney2021stim}%
  \BibitemOpen
  \bibfield  {author} {\bibinfo {author} {\bibfnamefont {C.}~\bibnamefont {Gidney}},\ }\href {\doibase 10.22331/q-2021-07-06-497} {\bibfield  {journal} {\bibinfo  {journal} {{Quantum}}\ }\textbf {\bibinfo {volume} {5}},\ \bibinfo {pages} {497} (\bibinfo {year} {2021})}\BibitemShut {NoStop}%
\bibitem [{\citenamefont {Kliuchnikov}\ \emph {et~al.}(2023)\citenamefont {Kliuchnikov}, \citenamefont {Beverland},\ and\ \citenamefont {Paetznick}}]{kliuchnikov2023}%
  \BibitemOpen
  \bibfield  {author} {\bibinfo {author} {\bibfnamefont {V.}~\bibnamefont {Kliuchnikov}}, \bibinfo {author} {\bibfnamefont {M.}~\bibnamefont {Beverland}}, \ and\ \bibinfo {author} {\bibfnamefont {A.}~\bibnamefont {Paetznick}},\ }\href@noop {} {\bibfield  {journal} {\bibinfo  {journal} {arXiv preprint arXiv:2309.08676}\ } (\bibinfo {year} {2023})}\BibitemShut {NoStop}%
\bibitem [{\citenamefont {Breuckmann}\ \emph {et~al.}(2017)\citenamefont {Breuckmann}, \citenamefont {Vuillot}, \citenamefont {Campbell}, \citenamefont {Krishna},\ and\ \citenamefont {Terhal}}]{Breuckmann2017}%
  \BibitemOpen
  \bibfield  {author} {\bibinfo {author} {\bibfnamefont {N.~P.}\ \bibnamefont {Breuckmann}}, \bibinfo {author} {\bibfnamefont {C.}~\bibnamefont {Vuillot}}, \bibinfo {author} {\bibfnamefont {E.}~\bibnamefont {Campbell}}, \bibinfo {author} {\bibfnamefont {A.}~\bibnamefont {Krishna}}, \ and\ \bibinfo {author} {\bibfnamefont {B.~M.}\ \bibnamefont {Terhal}},\ }\href {\doibase 10.1088/2058-9565/aa7d3b} {\bibfield  {journal} {\bibinfo  {journal} {Quantum Science and Technology}\ }\textbf {\bibinfo {volume} {2}},\ \bibinfo {pages} {035007} (\bibinfo {year} {2017})}\BibitemShut {NoStop}%
\bibitem [{\citenamefont {Chamberland}\ and\ \citenamefont {Campbell}(2022)}]{ChamberlandCampbell2022}%
  \BibitemOpen
  \bibfield  {author} {\bibinfo {author} {\bibfnamefont {C.}~\bibnamefont {Chamberland}}\ and\ \bibinfo {author} {\bibfnamefont {E.~T.}\ \bibnamefont {Campbell}},\ }\href {\doibase 10.1103/PhysRevResearch.4.023090} {\bibfield  {journal} {\bibinfo  {journal} {Phys. Rev. Res.}\ }\textbf {\bibinfo {volume} {4}},\ \bibinfo {pages} {023090} (\bibinfo {year} {2022})}\BibitemShut {NoStop}%
\bibitem [{\citenamefont {Chao}\ \emph {et~al.}(2020)\citenamefont {Chao}, \citenamefont {Beverland}, \citenamefont {Delfosse},\ and\ \citenamefont {Haah}}]{Chao2020}%
  \BibitemOpen
  \bibfield  {author} {\bibinfo {author} {\bibfnamefont {R.}~\bibnamefont {Chao}}, \bibinfo {author} {\bibfnamefont {M.~E.}\ \bibnamefont {Beverland}}, \bibinfo {author} {\bibfnamefont {N.}~\bibnamefont {Delfosse}}, \ and\ \bibinfo {author} {\bibfnamefont {J.}~\bibnamefont {Haah}},\ }\href@noop {} {\bibfield  {journal} {\bibinfo  {journal} {Quantum}\ }\textbf {\bibinfo {volume} {4}},\ \bibinfo {pages} {352} (\bibinfo {year} {2020})}\BibitemShut {NoStop}%
\bibitem [{\citenamefont {Chamberland}\ \emph {et~al.}(2020{\natexlab{b}})\citenamefont {Chamberland}, \citenamefont {Zhu}, \citenamefont {Yoder}, \citenamefont {Hertzberg},\ and\ \citenamefont {Cross}}]{Chamberland2020a}%
  \BibitemOpen
  \bibfield  {author} {\bibinfo {author} {\bibfnamefont {C.}~\bibnamefont {Chamberland}}, \bibinfo {author} {\bibfnamefont {G.}~\bibnamefont {Zhu}}, \bibinfo {author} {\bibfnamefont {T.~J.}\ \bibnamefont {Yoder}}, \bibinfo {author} {\bibfnamefont {J.~B.}\ \bibnamefont {Hertzberg}}, \ and\ \bibinfo {author} {\bibfnamefont {A.~W.}\ \bibnamefont {Cross}},\ }\href {\doibase 10.1103/PhysRevX.10.011022} {\bibfield  {journal} {\bibinfo  {journal} {Phys. Rev. X}\ }\textbf {\bibinfo {volume} {10}},\ \bibinfo {pages} {011022} (\bibinfo {year} {2020}{\natexlab{b}})}\BibitemShut {NoStop}%
\bibitem [{\citenamefont {Tremblay}\ \emph {et~al.}(2021)\citenamefont {Tremblay}, \citenamefont {Delfosse},\ and\ \citenamefont {Beverland}}]{Tremblay2021}%
  \BibitemOpen
  \bibfield  {author} {\bibinfo {author} {\bibfnamefont {M.~A.}\ \bibnamefont {Tremblay}}, \bibinfo {author} {\bibfnamefont {N.}~\bibnamefont {Delfosse}}, \ and\ \bibinfo {author} {\bibfnamefont {M.~E.}\ \bibnamefont {Beverland}},\ }\href@noop {} {\enquote {\bibinfo {title} {Constant-overhead quantum error correction with thin planar connectivity},}\ } (\bibinfo {year} {2021}),\ \Eprint {http://arxiv.org/abs/arXiv:2109.14609} {arXiv:2109.14609} \BibitemShut {NoStop}%
\bibitem [{Note1()}]{Note1}%
  \BibitemOpen
  \bibinfo {note} {This is equivalent to appending a perfect round of syndrome extraction to the end of the channel and extending channel checks with the perfect round outcomes.}\BibitemShut {Stop}%
\bibitem [{\citenamefont {Vuillot}\ \emph {et~al.}(2019)\citenamefont {Vuillot}, \citenamefont {Lao}, \citenamefont {Criger}, \citenamefont {Almud{\'{e}}ver}, \citenamefont {Bertels},\ and\ \citenamefont {Terhal}}]{Vuillot2019}%
  \BibitemOpen
  \bibfield  {author} {\bibinfo {author} {\bibfnamefont {C.}~\bibnamefont {Vuillot}}, \bibinfo {author} {\bibfnamefont {L.}~\bibnamefont {Lao}}, \bibinfo {author} {\bibfnamefont {B.}~\bibnamefont {Criger}}, \bibinfo {author} {\bibfnamefont {C.~G.}\ \bibnamefont {Almud{\'{e}}ver}}, \bibinfo {author} {\bibfnamefont {K.}~\bibnamefont {Bertels}}, \ and\ \bibinfo {author} {\bibfnamefont {B.~M.}\ \bibnamefont {Terhal}},\ }\href {\doibase 10.1088/1367-2630/ab0199} {\bibfield  {journal} {\bibinfo  {journal} {New Journal of Physics}\ }\textbf {\bibinfo {volume} {21}},\ \bibinfo {pages} {033028} (\bibinfo {year} {2019})}\BibitemShut {NoStop}%
\bibitem [{Note2()}]{Note2}%
  \BibitemOpen
  \bibinfo {note} {We say that the stabilizer channels in a sequence are compatible if the output stabilizer code of each channel in the sequence matches the input stabilizer code of the next channel in the sequence.}\BibitemShut {Stop}%
\bibitem [{Note3()}]{Note3}%
  \BibitemOpen
  \bibinfo {note} {It it common to explicitly specify checks for a fault-tolerant circuit, here we review approaches that do not require that.}\BibitemShut {Stop}%
\bibitem [{\citenamefont {Delfosse}\ and\ \citenamefont {Paetznick}(2023{\natexlab{b}})}]{delfosse2023simulation}%
  \BibitemOpen
  \bibfield  {author} {\bibinfo {author} {\bibfnamefont {N.}~\bibnamefont {Delfosse}}\ and\ \bibinfo {author} {\bibfnamefont {A.}~\bibnamefont {Paetznick}},\ }\href@noop {} {\enquote {\bibinfo {title} {Simulation of noisy clifford circuits without fault propagation},}\ } (\bibinfo {year} {2023}{\natexlab{b}}),\ \Eprint {http://arxiv.org/abs/2309.15345} {arXiv:2309.15345 [quant-ph]} \BibitemShut {NoStop}%
\bibitem [{\citenamefont {Diestel}(2017)}]{Diestel2017}%
  \BibitemOpen
  \bibfield  {author} {\bibinfo {author} {\bibfnamefont {R.}~\bibnamefont {Diestel}},\ }\href {\doibase 10.1007/978-3-662-53622-3} {\emph {\bibinfo {title} {Graph Theory}}},\ Vol.\ \bibinfo {volume} {173}\ (\bibinfo  {publisher} {Springer Berlin Heidelberg},\ \bibinfo {year} {2017})\BibitemShut {NoStop}%
\bibitem [{\citenamefont {Fujishige}(1980)}]{Fujishige1980}%
  \BibitemOpen
  \bibfield  {author} {\bibinfo {author} {\bibfnamefont {S.}~\bibnamefont {Fujishige}},\ }\href {\doibase https://doi.org/10.1016/0022-0000(80)90042-2} {\bibfield  {journal} {\bibinfo  {journal} {Journal of Computer and System Sciences}\ }\textbf {\bibinfo {volume} {21}},\ \bibinfo {pages} {63} (\bibinfo {year} {1980})}\BibitemShut {NoStop}%
\bibitem [{\citenamefont {Bixby}\ and\ \citenamefont {Cunningham}(1980)}]{Bixby1980}%
  \BibitemOpen
  \bibfield  {author} {\bibinfo {author} {\bibfnamefont {R.~E.}\ \bibnamefont {Bixby}}\ and\ \bibinfo {author} {\bibfnamefont {W.~H.}\ \bibnamefont {Cunningham}},\ }\href@noop {} {\bibfield  {journal} {\bibinfo  {journal} {Mathematics of Operations Research}\ }\textbf {\bibinfo {volume} {5}},\ \bibinfo {pages} {321} (\bibinfo {year} {1980})}\BibitemShut {NoStop}%
\bibitem [{\citenamefont {Truemper}(2014)}]{Truemper2014}%
  \BibitemOpen
  \bibfield  {author} {\bibinfo {author} {\bibfnamefont {K.}~\bibnamefont {Truemper}},\ }\href@noop {} {\emph {\bibinfo {title} {Matroid Decomposition}}}\ (\bibinfo  {publisher} {Elsevier Science},\ \bibinfo {year} {2014})\BibitemShut {NoStop}%
\bibitem [{\citenamefont {Walter}\ and\ \citenamefont {Truemper}(2012)}]{Walter2012}%
  \BibitemOpen
  \bibfield  {author} {\bibinfo {author} {\bibfnamefont {M.}~\bibnamefont {Walter}}\ and\ \bibinfo {author} {\bibfnamefont {K.}~\bibnamefont {Truemper}},\ }\href@noop {} {\enquote {\bibinfo {title} {Implementation of a unimodularity test},}\ } (\bibinfo {year} {2012}),\ \Eprint {http://arxiv.org/abs/1202.4061} {arXiv:1202.4061 [math.CO]} \BibitemShut {NoStop}%
\bibitem [{\citenamefont {de~Pina}(1995)}]{DePinaThesis}%
  \BibitemOpen
  \bibfield  {author} {\bibinfo {author} {\bibfnamefont {J.}~\bibnamefont {de~Pina}},\ }\emph {\bibinfo {title} {Applications of shortest path methods}},\ \href@noop {} {Ph.D. thesis},\ \bibinfo  {school} {Amsterdam School of Economics Research Institute} (\bibinfo {year} {1995})\BibitemShut {NoStop}%
\bibitem [{\citenamefont {Heo}\ and\ \citenamefont {Guenin}(2020)}]{EvenCycleMatroids}%
  \BibitemOpen
  \bibfield  {author} {\bibinfo {author} {\bibfnamefont {C.}~\bibnamefont {Heo}}\ and\ \bibinfo {author} {\bibfnamefont {B.}~\bibnamefont {Guenin}},\ }in\ \href@noop {} {\emph {\bibinfo {booktitle} {Integer Programming and Combinatorial Optimization}}},\ \bibinfo {editor} {edited by\ \bibinfo {editor} {\bibfnamefont {D.}~\bibnamefont {Bienstock}}\ and\ \bibinfo {editor} {\bibfnamefont {G.}~\bibnamefont {Zambelli}}}\ (\bibinfo  {publisher} {Springer International Publishing},\ \bibinfo {address} {Cham},\ \bibinfo {year} {2020})\ pp.\ \bibinfo {pages} {182--195}\BibitemShut {NoStop}%
\bibitem [{\citenamefont {Guenin}\ and\ \citenamefont {Heo}(2023)}]{Guenin2023}%
  \BibitemOpen
  \bibfield  {author} {\bibinfo {author} {\bibfnamefont {B.}~\bibnamefont {Guenin}}\ and\ \bibinfo {author} {\bibfnamefont {C.}~\bibnamefont {Heo}},\ }\href {\doibase 10.1007/s10107-023-01951-7} {\bibfield  {journal} {\bibinfo  {journal} {Mathematical Programming}\ } (\bibinfo {year} {2023}),\ 10.1007/s10107-023-01951-7}\BibitemShut {NoStop}%
\bibitem [{\citenamefont {Delfosse}\ \emph {et~al.}(2023)\citenamefont {Delfosse}, \citenamefont {Paetznick}, \citenamefont {Haah},\ and\ \citenamefont {Hastings}}]{delfosse2023}%
  \BibitemOpen
  \bibfield  {author} {\bibinfo {author} {\bibfnamefont {N.}~\bibnamefont {Delfosse}}, \bibinfo {author} {\bibfnamefont {A.}~\bibnamefont {Paetznick}}, \bibinfo {author} {\bibfnamefont {J.}~\bibnamefont {Haah}}, \ and\ \bibinfo {author} {\bibfnamefont {M.~B.}\ \bibnamefont {Hastings}},\ }\href@noop {} {\bibfield  {journal} {\bibinfo  {journal} {arXiv preprint arXiv:2309.15354}\ } (\bibinfo {year} {2023})}\BibitemShut {NoStop}%
\bibitem [{\citenamefont {Geelen}\ and\ \citenamefont {Kapadia}(2018)}]{Geelen2018}%
  \BibitemOpen
  \bibfield  {author} {\bibinfo {author} {\bibfnamefont {J.}~\bibnamefont {Geelen}}\ and\ \bibinfo {author} {\bibfnamefont {R.}~\bibnamefont {Kapadia}},\ }\href {\doibase 10.1007/s00493-016-3445-3} {\bibfield  {journal} {\bibinfo  {journal} {Combinatorica}\ }\textbf {\bibinfo {volume} {38}},\ \bibinfo {pages} {167} (\bibinfo {year} {2018})}\BibitemShut {NoStop}%
\bibitem [{\citenamefont {Gidney}(2023{\natexlab{a}})}]{gidney2023inplace}%
  \BibitemOpen
  \bibfield  {author} {\bibinfo {author} {\bibfnamefont {C.}~\bibnamefont {Gidney}},\ }\href@noop {} {\enquote {\bibinfo {title} {Inplace access to the surface code y basis},}\ } (\bibinfo {year} {2023}{\natexlab{a}}),\ \Eprint {http://arxiv.org/abs/2302.07395} {arXiv:2302.07395 [quant-ph]} \BibitemShut {NoStop}%
\bibitem [{\citenamefont {Gidney}\ and\ \citenamefont {Bacon}(2023)}]{gidney2023bacon}%
  \BibitemOpen
  \bibfield  {author} {\bibinfo {author} {\bibfnamefont {C.}~\bibnamefont {Gidney}}\ and\ \bibinfo {author} {\bibfnamefont {D.}~\bibnamefont {Bacon}},\ }\href@noop {} {\enquote {\bibinfo {title} {Less bacon more threshold},}\ } (\bibinfo {year} {2023}),\ \Eprint {http://arxiv.org/abs/2305.12046} {arXiv:2305.12046 [quant-ph]} \BibitemShut {NoStop}%
\bibitem [{\citenamefont {McEwen}\ \emph {et~al.}(2023)\citenamefont {McEwen}, \citenamefont {Bacon},\ and\ \citenamefont {Gidney}}]{McEwen2023}%
  \BibitemOpen
  \bibfield  {author} {\bibinfo {author} {\bibfnamefont {M.}~\bibnamefont {McEwen}}, \bibinfo {author} {\bibfnamefont {D.}~\bibnamefont {Bacon}}, \ and\ \bibinfo {author} {\bibfnamefont {C.}~\bibnamefont {Gidney}},\ }\href {\doibase 10.22331/q-2023-11-07-1172} {\bibfield  {journal} {\bibinfo  {journal} {Quantum}\ }\textbf {\bibinfo {volume} {7}},\ \bibinfo {pages} {1172} (\bibinfo {year} {2023})}\BibitemShut {NoStop}%
\bibitem [{\citenamefont {Gidney}(2023{\natexlab{b}})}]{gidney2023cleaner}%
  \BibitemOpen
  \bibfield  {author} {\bibinfo {author} {\bibfnamefont {C.}~\bibnamefont {Gidney}},\ }\href@noop {} {\enquote {\bibinfo {title} {Cleaner magic states with hook injection},}\ } (\bibinfo {year} {2023}{\natexlab{b}}),\ \Eprint {http://arxiv.org/abs/2302.12292} {arXiv:2302.12292 [quant-ph]} \BibitemShut {NoStop}%
\bibitem [{\citenamefont {Geher}\ \emph {et~al.}(2023)\citenamefont {Geher}, \citenamefont {Crawford},\ and\ \citenamefont {Campbell}}]{geher2023tangling}%
  \BibitemOpen
  \bibfield  {author} {\bibinfo {author} {\bibfnamefont {G.~P.}\ \bibnamefont {Geher}}, \bibinfo {author} {\bibfnamefont {O.}~\bibnamefont {Crawford}}, \ and\ \bibinfo {author} {\bibfnamefont {E.~T.}\ \bibnamefont {Campbell}},\ }\href@noop {} {\enquote {\bibinfo {title} {Tangling schedules eases hardware connectivity requirements for quantum error correction},}\ } (\bibinfo {year} {2023}),\ \Eprint {http://arxiv.org/abs/2307.10147} {arXiv:2307.10147 [quant-ph]} \BibitemShut {NoStop}%
\bibitem [{Note4()}]{Note4}%
  \BibitemOpen
  \bibinfo {note} {Recall that we follow the convention where the conditional Paulis are placed between the last round and the output syndrome of a stabilizer channel (as in~\protect \hyperref [fig:enc-and-dec]{Fig.~\ref *{fig:enc-and-dec}}). To obtain the pre-phase-fixed output syndrome, we commute the output syndrome past the conditional Paulis to find the values they would have immediately after the final round.}\BibitemShut {Stop}%
\bibitem [{\citenamefont {Fowler}\ and\ \citenamefont {Gidney}(2018)}]{Fowler2018}%
  \BibitemOpen
  \bibfield  {author} {\bibinfo {author} {\bibfnamefont {A.~G.}\ \bibnamefont {Fowler}}\ and\ \bibinfo {author} {\bibfnamefont {C.}~\bibnamefont {Gidney}},\ }\href@noop {} {\enquote {\bibinfo {title} {Low overhead quantum computation using lattice surgery},}\ } (\bibinfo {year} {2018}),\ \Eprint {http://arxiv.org/abs/arXiv:1808.06709} {arXiv:1808.06709} \BibitemShut {NoStop}%
\bibitem [{\citenamefont {Jones}(2013)}]{Jones2013}%
  \BibitemOpen
  \bibfield  {author} {\bibinfo {author} {\bibfnamefont {C.}~\bibnamefont {Jones}},\ }\href {\doibase 10.1103/physreva.87.022328} {\bibfield  {journal} {\bibinfo  {journal} {Physical Review A}\ }\textbf {\bibinfo {volume} {87}} (\bibinfo {year} {2013}),\ 10.1103/physreva.87.022328}\BibitemShut {NoStop}%
\bibitem [{\citenamefont {Gidney}(2018)}]{Gidney2018}%
  \BibitemOpen
  \bibfield  {author} {\bibinfo {author} {\bibfnamefont {C.}~\bibnamefont {Gidney}},\ }\href {\doibase 10.22331/q-2018-06-18-74} {\bibfield  {journal} {\bibinfo  {journal} {Quantum}\ }\textbf {\bibinfo {volume} {2}},\ \bibinfo {pages} {74} (\bibinfo {year} {2018})}\BibitemShut {NoStop}%
\bibitem [{\citenamefont {Gidney}\ and\ \citenamefont {Jones}(2021)}]{gidney2021cccz}%
  \BibitemOpen
  \bibfield  {author} {\bibinfo {author} {\bibfnamefont {C.}~\bibnamefont {Gidney}}\ and\ \bibinfo {author} {\bibfnamefont {N.~C.}\ \bibnamefont {Jones}},\ }\href@noop {} {\enquote {\bibinfo {title} {A cccz gate performed with 6 t gates},}\ } (\bibinfo {year} {2021}),\ \Eprint {http://arxiv.org/abs/2106.11513} {arXiv:2106.11513 [quant-ph]} \BibitemShut {NoStop}%
\bibitem [{\citenamefont {Terhal}(2015)}]{Terhal2015}%
  \BibitemOpen
  \bibfield  {author} {\bibinfo {author} {\bibfnamefont {B.~M.}\ \bibnamefont {Terhal}},\ }\href {\doibase 10.1103/revmodphys.87.307} {\bibfield  {journal} {\bibinfo  {journal} {Reviews of Modern Physics}\ }\textbf {\bibinfo {volume} {87}},\ \bibinfo {pages} {307–346} (\bibinfo {year} {2015})}\BibitemShut {NoStop}%
\bibitem [{\citenamefont {Skoric}\ \emph {et~al.}(2023)\citenamefont {Skoric}, \citenamefont {Browne}, \citenamefont {Barnes}, \citenamefont {Gillespie},\ and\ \citenamefont {Campbell}}]{skoric2023parallel}%
  \BibitemOpen
  \bibfield  {author} {\bibinfo {author} {\bibfnamefont {L.}~\bibnamefont {Skoric}}, \bibinfo {author} {\bibfnamefont {D.~E.}\ \bibnamefont {Browne}}, \bibinfo {author} {\bibfnamefont {K.~M.}\ \bibnamefont {Barnes}}, \bibinfo {author} {\bibfnamefont {N.~I.}\ \bibnamefont {Gillespie}}, \ and\ \bibinfo {author} {\bibfnamefont {E.~T.}\ \bibnamefont {Campbell}},\ }\href@noop {} {\enquote {\bibinfo {title} {Parallel window decoding enables scalable fault tolerant quantum computation},}\ } (\bibinfo {year} {2023}),\ \Eprint {http://arxiv.org/abs/2209.08552} {arXiv:2209.08552 [quant-ph]} \BibitemShut {NoStop}%
\bibitem [{\citenamefont {Gidney}\ \emph {et~al.}(2021)\citenamefont {Gidney}, \citenamefont {Newman}, \citenamefont {Fowler},\ and\ \citenamefont {Broughton}}]{honeycombmemory}%
  \BibitemOpen
  \bibfield  {author} {\bibinfo {author} {\bibfnamefont {C.}~\bibnamefont {Gidney}}, \bibinfo {author} {\bibfnamefont {M.}~\bibnamefont {Newman}}, \bibinfo {author} {\bibfnamefont {A.}~\bibnamefont {Fowler}}, \ and\ \bibinfo {author} {\bibfnamefont {M.}~\bibnamefont {Broughton}},\ }\href {\doibase 10.22331/q-2021-12-20-605} {\bibfield  {journal} {\bibinfo  {journal} {Quantum}\ }\textbf {\bibinfo {volume} {5}},\ \bibinfo {pages} {605} (\bibinfo {year} {2021})}\BibitemShut {NoStop}%
\bibitem [{\citenamefont {Beverland}\ \emph {et~al.}(2019)\citenamefont {Beverland}, \citenamefont {Brown}, \citenamefont {Kastoryano},\ and\ \citenamefont {Marolleau}}]{beverland2019}%
  \BibitemOpen
  \bibfield  {author} {\bibinfo {author} {\bibfnamefont {M.~E.}\ \bibnamefont {Beverland}}, \bibinfo {author} {\bibfnamefont {B.~J.}\ \bibnamefont {Brown}}, \bibinfo {author} {\bibfnamefont {M.~J.}\ \bibnamefont {Kastoryano}}, \ and\ \bibinfo {author} {\bibfnamefont {Q.}~\bibnamefont {Marolleau}},\ }\href@noop {} {\bibfield  {journal} {\bibinfo  {journal} {Journal of Statistical Mechanics: Theory and Experiment}\ }\textbf {\bibinfo {volume} {2019}},\ \bibinfo {pages} {073404} (\bibinfo {year} {2019})}\BibitemShut {NoStop}%
\bibitem [{\citenamefont {Pattison}\ \emph {et~al.}(2021)\citenamefont {Pattison}, \citenamefont {Beverland}, \citenamefont {da~Silva},\ and\ \citenamefont {Delfosse}}]{Pattison2021}%
  \BibitemOpen
  \bibfield  {author} {\bibinfo {author} {\bibfnamefont {C.~A.}\ \bibnamefont {Pattison}}, \bibinfo {author} {\bibfnamefont {M.~E.}\ \bibnamefont {Beverland}}, \bibinfo {author} {\bibfnamefont {M.~P.}\ \bibnamefont {da~Silva}}, \ and\ \bibinfo {author} {\bibfnamefont {N.}~\bibnamefont {Delfosse}},\ }\href@noop {} {\enquote {\bibinfo {title} {Improved quantum error correction using soft information},}\ } (\bibinfo {year} {2021}),\ \Eprint {http://arxiv.org/abs/arXiv:2107.13589} {arXiv:2107.13589} \BibitemShut {NoStop}%
\bibitem [{\citenamefont {Bravyi}\ \emph {et~al.}(2018)\citenamefont {Bravyi}, \citenamefont {Englbrecht}, \citenamefont {K{\"o}nig},\ and\ \citenamefont {Peard}}]{bravyi2018surface_code_coherent_noise}%
  \BibitemOpen
  \bibfield  {author} {\bibinfo {author} {\bibfnamefont {S.}~\bibnamefont {Bravyi}}, \bibinfo {author} {\bibfnamefont {M.}~\bibnamefont {Englbrecht}}, \bibinfo {author} {\bibfnamefont {R.}~\bibnamefont {K{\"o}nig}}, \ and\ \bibinfo {author} {\bibfnamefont {N.}~\bibnamefont {Peard}},\ }\href@noop {} {\bibfield  {journal} {\bibinfo  {journal} {npj Quantum Information}\ }\textbf {\bibinfo {volume} {4}},\ \bibinfo {pages} {1} (\bibinfo {year} {2018})}\BibitemShut {NoStop}%
\bibitem [{\citenamefont {Aliferis}\ \emph {et~al.}(2005)\citenamefont {Aliferis}, \citenamefont {Gottesman},\ and\ \citenamefont {Preskill}}]{aliferis2005}%
  \BibitemOpen
  \bibfield  {author} {\bibinfo {author} {\bibfnamefont {P.}~\bibnamefont {Aliferis}}, \bibinfo {author} {\bibfnamefont {D.}~\bibnamefont {Gottesman}}, \ and\ \bibinfo {author} {\bibfnamefont {J.}~\bibnamefont {Preskill}},\ }\href@noop {} {\bibfield  {journal} {\bibinfo  {journal} {arXiv preprint quant-ph/0504218}\ } (\bibinfo {year} {2005})}\BibitemShut {NoStop}%
\bibitem [{\citenamefont {Chamberland}\ and\ \citenamefont {Beverland}(2018)}]{chamberland2018}%
  \BibitemOpen
  \bibfield  {author} {\bibinfo {author} {\bibfnamefont {C.}~\bibnamefont {Chamberland}}\ and\ \bibinfo {author} {\bibfnamefont {M.~E.}\ \bibnamefont {Beverland}},\ }\href@noop {} {\bibfield  {journal} {\bibinfo  {journal} {Quantum}\ }\textbf {\bibinfo {volume} {2}},\ \bibinfo {pages} {53} (\bibinfo {year} {2018})}\BibitemShut {NoStop}%
\bibitem [{\citenamefont {Kitaev}(2003)}]{Kitaev2003}%
  \BibitemOpen
  \bibfield  {author} {\bibinfo {author} {\bibfnamefont {A.~Y.}\ \bibnamefont {Kitaev}},\ }\href@noop {} {\bibfield  {journal} {\bibinfo  {journal} {Annals of Physics}\ }\textbf {\bibinfo {volume} {303}},\ \bibinfo {pages} {2} (\bibinfo {year} {2003})}\BibitemShut {NoStop}%
\end{thebibliography}%

\appendix

\section{Shortest odd cycle problem}
\label{app:shortest-odd-cycle}

For completeness, we provide a well-known algorithm~\ref{alg:shortest-odd-cycle} for shortest odd cycle problem (page 95, \cite{DePinaThesis}).
It can be used to find the channel check distance when the channel check matrix is graph-like, as discussed in \sec{fault-dist-algo}.
The runtime of the algorithm is $O(|V|(|E|+|V|\log|V|))$. It is dominated by $|V|$ calls to the shortest path algorithm in line~\ref{line:sssp}.
Each call has complexity $O(|E|+|V|\log|V|)$.
Note that if we know that the shortest cycle must include a vertex from some set $V' \subset V$,
the complexity reduces to $O(|V'|(|E|+|V|\log|V|))$ because we can replace $V$ with $V'$ in line~\ref{line:sssp}.
If there is an upper-bound on the cycle weight (length), this can further speed up the calls to the shortest paths algorithm in line~\ref{line:sssp}.

\begin{algorithm}[H]
    \caption{Shortest odd cycle}    
    \label{alg:shortest-odd-cycle}
    \begin{algorithmic}[1]
    \Require Graph $(V,E)$, weights $w : E \rightarrow \mathbb{R}^{\ge 0}$, $E'\subset E$
    \Ensure Shortest cycle $E_{\min}$ with $|E_{\min} \cap E'|$ odd
    \State Let $V^\pm = \left\{ v^+, v^- : v \in V \right\}$, use $|v^+| = v, |v^-| = v$
    \State Let $G^\pm=(V^\pm,E^\pm)$ be a graph with empty edge-set $E^\pm$
    \For{ every edge $(v_1,v_2) \in E $}
    \If{ $(v_1,v_2) \in E'$ }
        \State Add $(v_1^+,v_2^-)$, $(v_1^-,v_2^+)$ with weight $w(v_1,v_2)$ to $E^\pm$
    \Else 
        \State Add $(v_1^+,v_2^+)$, $(v_1^-,v_2^-)$ with weight $w(v_1,v_2)$ to $E^\pm$
    \EndIf
    \EndFor
    \State For $v \in V$, find $P_v$ the shortest path between $v^+,v^-$ \label{line:sssp}
    \State Let $P$ be the  shortest path among $\left\{ P_v : v \in V \right\}$
    \State \Return $\{ (|v^\pm_1|,|v^\pm_2|) : \text{ for edge } (v^\pm_1,v^\pm_2) \in P \}$
    \end{algorithmic}
\end{algorithm}

There is a simple intuitive explanation of the algorithm correctness.
Imagine $V^\pm = \{ v^\pm : v \in V \} $ as two `floors' of a building, with edges of $E^\pm$ derived from $E'$
corresponding to staircases between floors $V^+, V^-$.
Edges of $E^\pm$ derived from the rest of the edges $E \setminus E'$ correspond to staying on the same floor.
Every walk from `floor' $V^+$ to `floor' $V^-$
takes an odd number of staircases.
Every cycle in $(V,E)$ that passes through vertex $v$ and has an odd overlap with $E'$ corresponds to a walk between $v^+$ and $v^-$, and vice-versa.

\section{Topological viewpoint of surface codes}
\label{app:topo-viewpoint}

Since their inception~\cite{Kitaev2003}, surface codes have been understood from a topological viewpoint; see \fig{surfacecode}(b).
When the microscopic details are ignored,
the planar \textit{code patch} in \fig{surfacecode}(a) can be viewed as a topological object $\patch$ embedded in 2D space, consisting of a closed connected 2D region with a number of additional labeled features as shown in \fig{surfacecode}(b).
We say that the region is in the \textit{topological phase}, while outside the region is in the \textit{trivial phase}.
The boundary which separates these two phases is bi-colored with yellow and blue, where each yellow (blue) boundary section is referred to as a primal (dual) boundary.
The intersection point of a primal boundary and a dual boundary is referred to as a \textit{corner}.

\begin{figure}
    \includegraphics[width=1.0\linewidth]{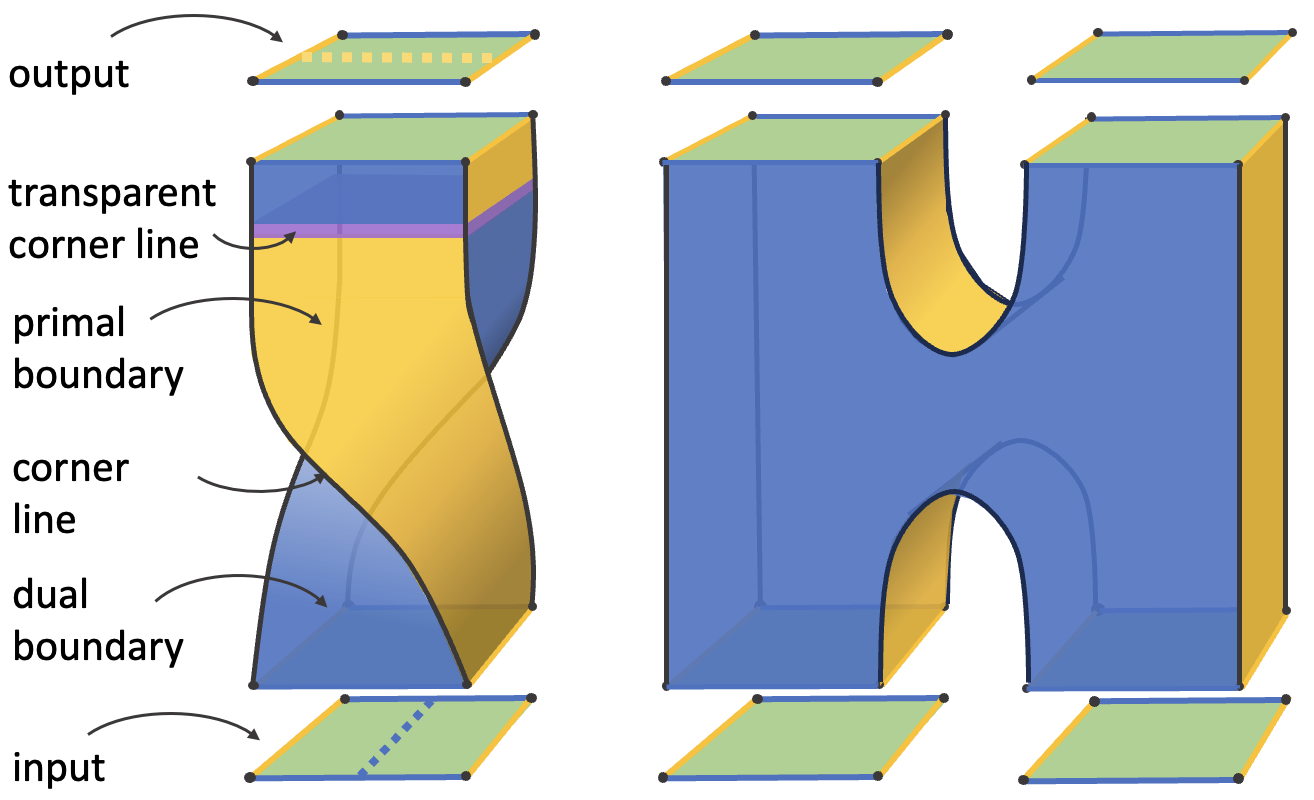}
    \caption{
    Topological descriptions of channels which implement logical Hadamard gate (left) and a logical $ZZ$ measurement (right) in surface codes. 
    For both channels, primal boundaries and dual boundaries are colored in yellow and blue respectively.
    The purple time slice is a domain wall, and the line where this section intersects the boundary of $\topochannel_H$ is a transparent corner line.
    The input and output codes of the channel are at the bottom and on the top of each topological object.
    % The logical $X$ operator of the input code at the base (blue dashed line) propagates into a logical $Z$ operator of the output code at the top (yellow dashed line).
    }
    \label{fig:topo-viewpoint}
\end{figure}

The topological description of surface codes in turn leads to topological description of the logical operations that can be applied to them that includes approaches such as braiding punctures~\cite{Raussendorf2006,raussendorf2007fault,Raussendorf2007b,Bombin2008,Bombin2009,fowler2009high,Fowler2012}, lattice surgery \cite{Horsman2012} and braiding twist defects~\cite{Bombin2010,hastings2014reduced}.
We briefly review some aspects of this topological description of logical operations following the more comprehensive discussion presented in Sec III.B of Ref.~\cite{Bombin2021}.

A stabilizer channel of surface codes can be viewed as a topological object $\topochannel$ embedded in (2+1)-D spacetime, consisting of a closed connected 3D region with a number of labeled features which are generalizations of features of 2D surface code patches to 3D; see \fig{topo-viewpoint}.
The region inside $\topochannel$ is in the topological phase, while the region outside $\topochannel$ is in the trivial phase. 
Roughly speaking, $\topochannel$ records the history of deformation from $\patch_{\textrm in}$ to $\patch_{\textrm out}$ which are the 2D boundaries of $\topochannel$ at the start and end.
The entire boundary of $\topochannel$ is divided into components, with each component categorized as either a \textit{primal boundary surface} or a \textit{dual boundary surface}.
Two adjacent boundary surfaces must have different types.
A \textit{corner line} is the intersection line between a primal boundary surface and a dual one.
A \textit{domain wall} is the boundary between two bulk regions of $\mathcal{T}$.
A domain wall can either terminate in the bulk of the toric code phase where a \textit{twist line} emerges, or terminate on a corner line.
A corner line on the boundary of a domain wall is said to be \textit{transparent}.

While a patch $\patch$ is a topological object with no specified locations of qubits, we can consider explicit cellulations of it with vertices and plaquettes colored yellow and blue and recover explicit surface codes.
For example, such as the distance $d$ surface code shown in \fig{surfacecode}(a) can be recovered by a particular cellulation of $\patch$.

A $Z$-type ($X$-type) non-trivial logical operator forms a string in $\topochannel$ that connects two different primal (dual) boundaries or forms a topologically non-trivial primal (dual) loop in $\topochannel$.
With the presence of domain walls, a string can start as $Z$-type from a primal boundary, cross a domain wall becoming $X$-type, and then end at a dual boundary.
If the string ends at two boundaries that share a transparent corner line, then the string is considered as a trivial logical operator.
The action of a topological channel can be inferred by considering these logical operators.

In Ref.~\cite{Bombin2021}, a strategy was given to specify code deformation channels via the cellulation of a topological description $\topochannel$.
In this work we use a less precise method, and use the topological description of the Hadamard and $ZZ$ measurement depicted in \fig{topo-viewpoint} to loosely suggest an initial code deformation sequence, which we then analyze and modify in the following sections using the techniques we have presented in this paper.

\end{document}